\newtheoremstyle{mine}
{\baselineskip}
{\baselineskip}
{\itshape}
{
}
{\bfseries}
{.}
{.5em}
{#1 #2\ifx#3\relax\else~(#3)\fi}
\newtheorem{theorem}{Theorem}
\newtheorem{proposition}[theorem]{Proposition}
\newtheorem{lemma}[theorem]{Lemma}
\newtheorem{definition}[theorem]{Definition}
\theoremstyle{remark}
\newtheorem{remark}{Remark}
\colorlet{shadecolor}{blue!10}
\def\rm{\reversemarginpar}
\let\qed=\QED
\renewcommand{\epsilon}{\varepsilon}
\newcommand{\R}{\mathbb{R}}
\newcommand{\C}{\mathbb{C}}
\newcommand{\Z}{\mathbb{Z}}
\newcommand{\N}{\mathbb{N}}
\def\H{\mathbb{H}}
\def\calB{\mathcal{B}}
\def\calF{\mathcal{F}}
\def\calL{\mathcal{L}}
\def\calR{\mathcal{R}}
\def\dist{\mathrm{dist}}
\def\P{\mathbb{P}} 
\def\E{\mathbb{E}} 
\def\md{\mid}
\def \eps {\epsilon}
\def\Bb#1#2{{\def\md{\bigm| }#1\bigl[#2\bigr]}}
\def\Eb{\Bb\E}
\def\FK#1#2#3{{\def\md{\bigm| } \P_{#1}^{\,#2}  \bigl[  #3 \bigr]}}
\def \p {{\partial}}
\def\<#1{\langle #1\rangle}
\newcommand{\n}{{\mathbf n}}
\def\bi{\begin{itemize}}  
\def\ei{\end{itemize}}
\def\bnum{\begin{enumerate}} 
\def\enum{\end{enumerate}}
\def\ni{\noindent}
\def\Book{\mathbb{B}}
\title{Long-range order for critical Book-Ising and Book-percolation}
\author{Hugo Duminil-Copin, Christophe Garban, Vincent Tassion}
\address
{Universit\'e de Gen\`eve, 2-4 rue du Li\`evre, 1204 Gen\`eve, Switzerland, Institut des Hautes \'Etudes Scientifiques, 35 route de Chartres, 91440 Bures-sur-Yvette, France}
\email{hugo.duminil@unige.ch,duminil@ihes.fr}
\address
{Université Claude Bernard Lyon 1, CNRS UMR 5208, Institut Camille Jordan, 69622 Villeurbanne, France \, and Institut Universitaire de France (IUF)}
\email{garban@math.univ-lyon1.fr}
\address
{ETH Zurich, Department of Mathematics, Group 3
HG G 66.5
Rämistrasse 101,
8092 Zurich,
Switzerland}
\email{Vincent.Tassion@math.ethz.ch}
\begin{document}

\maketitle

\begin{abstract}
In this paper, we investigate the behaviour of statistical physics models on a book with pages that are isomorphic to half-planes. We show that even for models undergoing a continuous phase transition on $\mathbb Z^2$, the phase transition becomes discontinuous as soon as the number of pages is sufficiently large. In particular, we prove that the Ising model on a three pages book has a discontinuous phase transition (if one allows oneself to consider large coupling constants along the line on which pages are glued).
   Our work confirms predictions in theoretical physics which relied on renormalization group, conformal field theory and numerics (\cite{Car91,ITB91,SMP10}) some of which were motivated by the analysis of the Renyi entropy of certain quantum spin systems. 
\end{abstract}

\section{Introduction}
Consider the $N$-{\em pages book} $\Book_N$ obtained by gluing  $N$ copies of an upper-half plane $\mathbb H:=\Z\times \N$ along the bottom line $\Z \times \{0\}$, which is identified with $\mathbb Z$, see Figures \ref{f.branque} and \ref{f.BookIsing}. We call these copies the {\em pages} $\mathbb H^1,\dots,\mathbb H^N$ of the book and identify $\mathbb H^1$ with $\mathbb H$. 

Our goal is to explore the behaviour of classical statistical physics systems on a $N$-pages book. Of prime interest to us will be the family of Potts models as well as their corresponding graphical representations named Fortuin-Kasteleyn percolations.

\begin{figure}[!htp]
\begin{center}
\includegraphics[width=\textwidth]{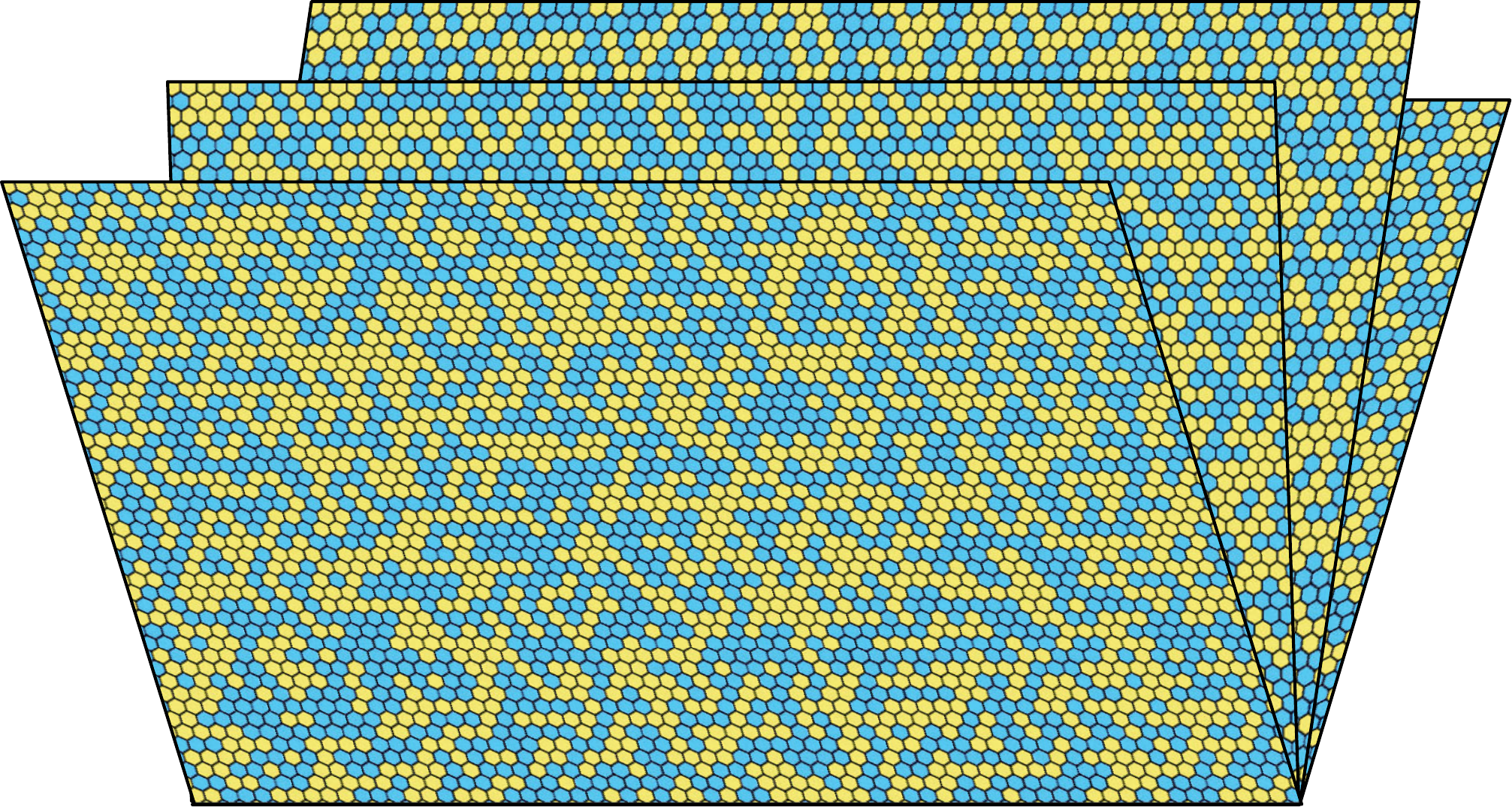}
\end{center}
\caption{Critical site percolation on the book $\Book_4$ (or rather its triangular lattice version here). The precise way of gluing the pages together does not impact our results.}\label{f.branque}
\end{figure}

\subsection{Potts model on the book.}

The Potts models are archetypical examples of statistical physics systems undergoing a phase transition in two dimensions. Fix an integer $q\ge2$. For $G=(V,E)$ a finite graph of an infinite graph $\mathbb G=(\mathbb V,\mathbb E)$ (we sometimes write $x\sim y$ if $xy\in E$), attribute a {\em spin} variable $\sigma_x$ belonging to a certain set $\Sigma:=\{1,2,\dots,q\}$ to each vertex $x\in V$. When $q=2$, one speaks of the Ising model and use $\{-,+\}$ instead of $\{1,2\}$ for $\Sigma$.
 A
 {\em spin configuration} $\sigma=(\sigma_x:x\in V)\in\Sigma^{V}$ is given by the collection of all the spins.
 Introduce the Hamiltonian of $\sigma$ for {\em free boundary conditions} defined by
 \begin{equation}\label{eq:H}
H_G^f(\sigma):=-\sum_{xy\in E}\,\mathbbm1[\sigma_x=\sigma_y]
\end{equation}
 corresponding to a ferromagnetic nearest-neighbor interaction. 
  For $\tau\in \Sigma$, we also define the Hamiltonian for {\em monochromatic $\tau$ boundary conditions}:
  \begin{equation}\label{eq:H}
H_G^\tau(\sigma):=H_G^f(\sigma)-\sum_{x\in V,y\in \mathbb V\setminus V:
x\sim y}\,\mathbbm1[\sigma_x=\tau].
\end{equation}
The above Hamiltonian corresponds to a ferromagnetic nearest-neighbor interaction. 
The {\em Gibbs measure on $G$ at inverse temperature $\beta\ge0$ with $\#$ {\em (where $\#$ is either free or monochromatic free)} boundary conditions} is defined by the formula
\begin{equation}\label{eq:Gibbs}\mu_{G,\beta}^\#[f]:=\frac{\displaystyle\sum_{\sigma\in\Sigma^{V}}f(\sigma)\exp[-\beta H_{G}^\#(\sigma)]}{\displaystyle\sum_{\sigma\in\Sigma^{V}}\exp[-\beta H_{G}^\#(\sigma)]}\end{equation}
for every $f:\Sigma^{V}\rightarrow \mathbb R$.

When $\mathbb G=\mathbb Z^2$ or $\mathbb B_N$, one may define the {\em Gibbs measure on $\mathbb G$ at inverse-temperature $\beta\ge0$ with  $\#$ boundary conditions} by taking the limit as $G\nearrow \mathbb G$ of the previous measures. In infinite volume, the model undergoes a phase transition on $\mathbb Z^2$ and $\mathbb B_N$ at some common $\beta_c=\beta_c(q)=\tfrac12\log(1+\sqrt q)$ \cite{BefDum12} in the following sense. If 
\begin{equation}
m_{\mathbb G}(\beta,q):=\mu_{\mathbb G,\beta}^\tau[\sigma_0=\tau]-\tfrac1q\end{equation}
 is the spontaneous magnetization of the model, then $m_{\mathbb G}(\beta,q)$ is equal to 0 if $\beta<\beta_c$ and is strictly positive if $\beta>\beta_c$.

When $\mathbb G=\mathbb Z^2(=\mathbb B_2)$, whether the phase transition is {\em continuous} (i.e.~$m_\mathbb G(\beta_c,q)=0$) or {\em discontinuous} (i.e.~$m_\mathbb G(\beta_c,q)>0$) has been the object of much interest in the past fifty years. It was proved in \cite{Ons44,Yan52} that the phase transition of the Ising model is continuous on $\mathbb Z^2$. More generally, it was predicted by Baxter \cite{Bax73} that the phase transition of the Potts model on $\mathbb Z^2$ is continuous for $q\in\{2,3,4\}$ and discontinuous for $q>4$. See \cite{DumGanHar16,DumSidTas13} for a proof of this statement (see also \cite{RaySpi19} for a short proof in the case $q>4$).

In this paper, we investigate the question on $\mathbb B_N$ and prove the following result.

\begin{theorem}\label{thm:Potts}
There exists $N_0<\infty$, such that for every $q\in\{2,3,4\}$, the $q$-Potts model undergoes a first-order phase transition on $\Book_{N_0}$. Equivalently, for every $N\geq N_0$ and every $q\in\{2,3, 4\}$, 
\[
m_{\mathbb B_{N}}(\beta_c,q)>0.
\]
\end{theorem}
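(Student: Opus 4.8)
The plan is to exploit the self-duality of the FK representation on the book together with a crossing/RSW-type estimate that improves as $N$ grows. The key heuristic is that the line $\Z\times\{0\}$ where the $N$ pages are glued carries an effective coupling of order $N$ in the FK representation: a cluster touching the gluing line in two places is connected through any of the $N$ pages, so connection probabilities along the line are boosted. Concretely, I would first set up the FK (random-cluster) representation of the $q$-Potts model on $\Book_N$ with the usual edge weights, record self-duality on each page $\H^i$ (a half-plane is self-dual in the appropriate sense), and observe that the dual model on the book is again a book-FK model but with a modified (smaller) weight along the gluing line. The free/wired FK infinite-volume measures on $\Book_N$ at $\beta_c$ then satisfy a duality relation, and $m_{\B_N}(\beta_c,q)>0$ is equivalent to the statement that $0$ is in an infinite cluster with positive probability under the wired measure, equivalently that the free measure has exponential decay of connectivities (or at least that the free and wired measures differ at $\beta_c$).

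The heart of the argument is a renormalization inequality for a well-chosen crossing event. I would fix a box $\Lambda_n$ in one page (or a book-box: the union over all pages of $\Lambda_n\cap\H^i$, glued along $[-n,n]\times\{0\}$) and consider the probability $p_n(N)$ under the free FK measure of a left-right crossing of $\Lambda_n$ that stays within distance $\sqrt{n}$, say, of the gluing line — i.e.\ a crossing ``helped'' by the line. Using the finite-energy property and the fact that an open crossing near the line can be rerouted through any of the $N$ pages, I expect to derive a bound of the form $p_{2n}(N)\le C\, N\, p_n(N)^2$ or, after the right normalization, a contraction $q_{2n}(N)\le (C/N)\, q_n(N)$ where $q_n$ is a suitably renormalized crossing probability. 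Iterating, for $N$ large enough this forces crossing probabilities near the gluing line to decay (stretched-)exponentially, which via the standard random-cluster machinery (Pisztora-type coarse graining, or the mixing/finite-energy arguments of \cite{DumSidTas13} adapted to the book) upgrades to exponential decay of the free two-point function at $\beta_c$; by the duality dichotomy this yields $m_{\B_N}(\beta_c,q)>0$ for all $N\ge N_0$, uniformly over $q\in\{2,3,4\}$ since for those values the planar model is controlled (continuity of the transition on $\Z^2$, so on a single page the crossing probabilities are bounded away from $0$ and $1$ — precisely the input that makes the $N$-dependent gain visible).

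There is one structural simplification worth using: monotonicity in $N$. Adding a page only adds edges, so the wired FK measure on $\B_{N+1}$ stochastically dominates (the natural embedding of) the one on $\B_N$, hence $m_{\B_N}(\beta_c,q)$ is nondecreasing in $N$; thus it suffices to produce a single $N_0$ (possibly depending on $q$, then take the max over $q\in\{2,3,4\}$) for which the magnetization is positive, rather than proving the statement for all large $N$ directly. This also lets me choose $N_0$ generously in the renormalization step.

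The main obstacle will be making the renormalization inequality rigorous: one must control the interaction between the two halves of a doubled crossing after the rerouting — the events ``crossing of the left sub-box'' and ``crossing of the right sub-box'' are not independent, and conditioning on one affects the configuration near the gluing line where the other lives. I expect to handle this with the spatial Markov property of the random-cluster model together with an FKG/monotone-coupling argument and a quantitative finite-energy bound to pay for the rerouting through a specific page; the cost of each reroute is a constant (finite energy), and there are of order $N$ independent pages to reroute through, which is exactly where the factor gaining in $N$ comes from. A secondary subtlety is that self-duality on the half-plane is not literally the self-duality of $\Z^2$; care is needed at the gluing line and at infinity, so the cleanest route is probably to work at finite volume with carefully chosen boundary conditions and only pass to the limit at the end.
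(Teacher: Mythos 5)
Your overall framework is sound in broad strokes: pass to the FK representation via Edwards--Sokal, concentrate on connectivity near the gluing line $\Z$, and run a renormalization inequality there. That is indeed the skeleton of the paper's argument, and the monotonicity-in-$N$ observation is correct and useful. But two of your key quantitative ingredients are wrong, and a third step is unavailable on the book.

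First, the identified source of the $N$-dependent gain is not the right one. You attribute it to the fact that ``the cost of each reroute is a constant (finite energy), and there are of order $N$ independent pages to reroute through.'' That only produces a scale-\emph{independent} improvement of the form $1-(1-c)^N$ at any fixed scale; it cannot produce the scale-\emph{dependent} decay that a multi-scale renormalization must sum over. The paper instead defines a disconnection exponent $\alpha(q,N)$: the event $F(k,K)^c$ that $\partial\Lambda_k$ is dual-separated from $\partial\Lambda_K$ simultaneously in \emph{every} page has probability of order $(k/K)^{N\alpha^+}$, where $\alpha^+=\alpha^+_\C(q)>0$ is a half-plane one-arm exponent known to be bounded below uniformly for $1\le q\le4$ (via \cite{DumSidTas13}). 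The whole argument hinges on $\alpha(q,N)\ge (N-1)\alpha^+_\C(q)>1$, which is a polynomial-in-scale penalty that grows linearly in $N$. RSW on one page (crossing probabilities bounded away from $0$ and $1$) is not enough; the positivity of a half-plane arm exponent is the precise ingredient.

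Second, your proposed recursion is misdirected. You define $p_n(N)$ as a crossing probability near the line and then write $p_{2n}\le CN\,p_n^2$, which, iterated, would drive $p_n$ to $0$ — the opposite of long-range order. The quantity the paper actually iterates is $p_\lambda(K,\theta)$, the probability a $K$-block is ``$\theta$-bad'' (its largest cluster meets $\Z$ in density $<\theta$), together with an anchoring lemma and a bridging lemma; the renormalization inequality is $p_\lambda(CK,\theta')\le \tfrac1{100}p_\lambda(K,\theta)+6C^2 p_\lambda(K,\theta)^2$, where the contraction $\tfrac1{100}$ is obtained by summing $\sum_i D_0/(C-|i|)^\alpha$, and convergence of that sum (for $C$ large) is exactly where $\alpha>1$ is used. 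The $N$ does not appear as a multiplicative factor in the recursion; it is hidden in $\alpha(q,N)$.

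Third, your endgame — prove exponential decay of free correlations at $\beta_c$ and then conclude $m_{\B_N}(\beta_c,q)>0$ ``by the duality dichotomy'' — appeals to planar self-duality of the random-cluster model, which $\B_N$ does not possess for $N\ge 3$ (it is not planar, and there is no clean dual book measure). The paper does not use any such dichotomy; it shows long-range order directly under the free measure (Proposition~\ref{thm:FK2}), by propagating the good-block density estimate from scale to scale and concluding that $\P^0_{\B_N,\lambda^*}[0\leftrightarrow\infty]\ge 3/8$. You would need to replace your duality step by such a direct argument.
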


As we shall explain below, it is natural in several respects to allow ourselves to strengthen the coupling constants along the edges of the gluing line $\Z$.

 For $J\ge 0$ and $G\subset \mathbb G$, we then introduce the modified measure $\mu_{G,\beta,J}$ where $H_G^\tau$ is replaced by the Hamiltonian

\begin{align*}\label{}
H_{G,J}^\tau(\sigma)&=H_G^\tau(\sigma)-(J-1)\sum_{x\sim y\in V\cap \mathbb Z}\mathbbm 1[\sigma_x=\sigma_y]
\end{align*}
(corresponding to changing coupling constants along the line $\mathbb Z$ from 1 to $J$) and the associated quantities $\mu_{G,\beta,J}^\tau$, $\mu_{\mathbb G,\beta,J}^\tau$ and $m_\mathbb G(\beta,J,q)$.

To motivate the introduction of the parameter $J$, let us briefly mention the slightly related problem of long-range Potts model on $\Z$. The previous procedure is the analog of strengthening the coupling-constants between adjacent vertices in this context: As an example, in \cite{aizenman1988}, coupling-constants are defined as $J_{x,y}=J_{x,y}(J):=J\, 1_{x\sim y}+\frac 1 {|x-y|^2}1_{|x-y|\geq 2}$ and the following critical point is introduced (\cite{aizenman1988,ImbrieNewman}, see also our recent work \cite{DGT20a}), 
\begin{align*}\label{}
\beta^*(q):=\inf\{ \beta \text{ s.t. } \exists J <\infty \, \text{ for which there is long-range order for $\{J_{x,y}(J))\}_{x,y}$}\}.
\end{align*}

In our present context, motivated by the predictions from \cite{Car91,ITB91,SMP10,Ste14} (see Subsection \ref{ss.motiv} below), and by analogy with $\beta^*(q)$, we  define below a notion of ``optimal'' number of pages $N^*(q)$ needed to create a first-order phase transition. The advantage of the notions $\beta^*(q)$ and $N^*(q)$ comes from the fact that they are universal: they do not depend on the particular way of gluing pages together (as far as the glue is finite-range, say) or even the underlying lattice (it could be triangular or hexagonal for instance).  
For any $q\in[1,4]$, define 
\begin{align}\label{e.Nq}
N^*(q):= \min\{ N\in \N, \; \exists J<\infty\,\, \text{so that  } 
m_{\Book_{N}}(\beta_c,J,q)>0\} \,.
\end{align}

We obtain the following result on the behavior of the optimal number of pages $N^*(q)$ depending on $q$.

\begin{theorem}\label{thm:Potts*} $ $ We have the following:
\bi
\item[(i)] $N^*(2)=3$ 
\item[(ii)] $N^*(3)=2$ 
\item[(iii)] $1\leq N^*(q) \leq 2$ for all $q\geq 4$. 
\ei
\end{theorem}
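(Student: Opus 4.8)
I work in the Fortuin--Kasteleyn (random-cluster) representation, with cluster weight $q$ and with the weights of the edges along $\Z$ reinforced according to $J$. By the Edwards--Sokal coupling, $m_{\Book_{N}}(\beta_c,J,q)>0$ is equivalent to the existence of an infinite cluster under the wired infinite-volume measure $\phi^{1}_{\Book_{N},p_c,J}$, hence to $\phi^{0}_{\Book_{N},p_c,J}\ne\phi^{1}_{\Book_{N},p_c,J}$, i.e. to a discontinuous transition. Each assertion in (i)--(iii) then splits into an \emph{ordering} part (produce an infinite cluster with $N$ pages and a finite $J$) and a \emph{disordering} part (forbid it with fewer pages, for all finite $J$).

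\textbf{Ordering bounds.} For $q>4$, the bound $N^{*}(q)\le 2$ is free: take $J=1$, so $\Book_{2}=\Z^{2}$, and invoke the discontinuity of the random-cluster transition on $\Z^{2}$ for $q>4$ \cite{DumSidTas13,DumGanHar16,RaySpi19}. The bounds $N^{*}(3)\le 2$, $N^{*}(4)\le 2$, $N^{*}(2)\le 3$ use the mechanism behind Theorem~\ref{thm:Potts}. As $J\to\infty$ the edges of $\Z$ open with probability tending to $1$, the gluing line breaks into very long open segments, and the cluster of $0$ becomes infinite as soon as these segments chain together through the $N$ critical pages. Along $\Z$ this produces a one-dimensional long-range model of random-cluster type whose effective long bonds are created by the pages; the number of pages $N$ fixes its effective strength. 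By the sharp threshold for one-dimensional long-range percolation/Potts in the relevant (marginal) regime \cite{aizenman1988,ImbrieNewman,DGT20a}, this strength lies above criticality already at $N=2$ when $q\in\{3,4\}$, but only at $N\ge 3$ when $q=2$: reinforcing the line in $\Z^{2}$ (the case $N=2$) never reaches the threshold for $q=2$ however large $J$ is, while the extra page of $\Book_{3}$ doubles the relevant prefactor and tips it over. The dichotomy reflects that reinforcing a line is a marginal perturbation of critical $\fk_{2}$ but an effectively relevant one of critical $\fk_{3}$ and $\fk_{4}$.

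\textbf{Disordering bounds.} The bound $N^{*}(q)\ge 1$ is immediate from the definition. For $q\in\{2,3\}$ one has $N^{*}(q)\ge 2$ because on a single page ($\Book_{1}=\mathbb{H}$) the reinforced line lies on the \emph{boundary} of the half-plane, where reinforcing it is an irrelevant perturbation. I would prove that the (boundary) RSW crossing estimates for critical $\fk_{q}$, $q\le 4$, persist uniformly in $J<\infty$, whence $\phi^{1}_{\mathbb{H},p_c,J}$ has no infinite cluster and $m_{\Book_{1}}(\beta_c,J,q)=0$ for every finite $J$ (equivalently, the one-dimensional reduction now involves one page and stays subcritical for all $J$). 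The heart of (i) is $N^{*}(2)\ge 3$: with two pages the reinforced line is an \emph{internal} defect line of critical $\fk_{2}$ on $\Z^{2}$, and one must forbid an infinite cluster \emph{for every finite $J$} --- the lattice counterpart of the marginality of the Ising defect line. I would establish it by proving that the bulk RSW crossing estimates for critical $\fk_{2}$ survive, uniformly in $J<\infty$, in presence of the reinforced row: under the self-duality of the $q=2$ random-cluster model, reinforcing the horizontal edges along a line is conjugate to weakening the vertical dual edges along the corresponding dual line, so that the crossing probabilities of a square in the two directions remain bounded away from $0$ and $1$; this keeps the model critical and forces $\phi^{0}_{\Z^{2},p_c,J}=\phi^{1}_{\Z^{2},p_c,J}$, i.e. $m_{\Book_{2}}(\beta_c,J,2)=0$.

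\textbf{Main difficulty.} The crucial step is $N^{*}(2)\ge 3$, which demands control \emph{uniform over all reinforcements} $J<\infty$ rather than for a fixed $J$. In the one-dimensional picture this amounts to the matching polynomial \emph{upper} bound --- at the marginal exponent, and with a prefactor that does not grow with the segment length --- on the probability that two segments of the reinforced line are joined through a page, so that the induced one-dimensional long-range model stays at or below criticality with strength below threshold, however large $J$ is. Securing this uniform bound, as opposed to the much softer statement that the exponent is at least marginal, is where the special ($q=2$, self-dual) structure enters essentially.
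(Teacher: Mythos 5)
Your reformulation in the random-cluster language and the split into \emph{ordering} and \emph{disordering} bounds is the right skeleton, and the dichotomy you draw between marginal and relevant perturbations is the correct physical intuition. However, there are two substantive gaps when compared with what the paper actually does.

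First, the \emph{ordering} bounds. You cite the one-dimensional $1/r^2$ long-range picture of \cite{aizenman1988,ImbrieNewman,DGT20a} as if it were the mechanism of the proof. In the paper this is \emph{only} a motivational analogy for introducing the parameter $J$; the actual argument is the multi-scale renormalization of Section~\ref{sec:3}, driven by the disconnection exponent $\alpha(q,N)$ of Definition~\ref{d.exp} and verified through Proposition~\ref{thm:FK2} and Proposition~\ref{prop:not bridged}. Concretely, for $N^*(q)\le 2$ when $2< q\le 4$ (your item~(ii) and the $q=4$ part of~(iii)) one reproves the bridging estimate \eqref{eq:upper q>2} directly in $\Z^2$ using planarity together with the half-plane arm bound $a^+_{\mathrm{wired}}(k,n)\le c_1^{-1}(k/n)^{1/2+c_1}$ from \cite{DMT20}, which makes $\alpha>1$. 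For $N^*(2)\le 3$ the key ingredient is the \emph{random-current representation}: Section~\ref{sec:5} (Proposition~\ref{prop:technical}, Lemma~\ref{lem:estimate one arm}) uses the switching lemma and the coupling between sourceless currents and $\mathrm{FK}_2$ to show that, away from $\Z$, the system behaves as though the glue line were fully wired, which upgrades the disconnection rate to $\alpha(2,3)=3\,\alpha^+_{\mathrm{wired}}(2)=\tfrac32>1$. Your proposal never mentions random currents, and the heuristic ``the extra page doubles the prefactor'' does not substitute for this step, which is the technical core of the $q=2$ case.

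Second, and more seriously, the \emph{disordering} bound $N^*(2)\ge 3$. Your proposed self-duality argument does not work as stated. Under planar duality of $\mathrm{FK}_2$, reinforcing the horizontal edges of the row $\Z\times\{0\}$ from $p_c$ to $\lambda$ is sent to \emph{weakening} (to $\lambda^*<p_c$) the \emph{vertical} dual edges that cross this row. That is a different defect (a weakened row of transverse bonds, not a strengthened row of parallel bonds), so the defect-line model is not self-dual and there is no a priori reason for square crossing probabilities to stay near $1/2$; the lattice defect must be matched carefully for any duality to close up, and even then one still needs quantitative RSW \emph{uniformly in $J$}, which does not follow from a single duality identity. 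The paper itself does not prove $N^*(2)\ge 3$ here at all: it is explicitly deferred to the companion paper \cite{DGT20b}, precisely because a detailed analysis of the $1$d defect line for critical $2$d Ising/$\mathrm{FK}_2$ is required. Your proposal should flag this as an external input rather than claim it via a duality one-liner. The lower bounds you propose for $q\in\{2,3\}$ on a single page, and the $q>4$ and $q\ge1$ statements, are fine and essentially as in the paper (in particular $N^*(q)\ge 2$ for $2<q<4$ follows from strong RSW applied to a dual half-annulus circuit that never touches $\Z$, hence is uniform in $\lambda$).
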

We will discuss each of these items below, after Theorem \ref{thm:FK2} which is the analogous statement for FK percolation with cluster-weight $q\in [1,\infty)$.

\begin{remark}\label{r.phys}
As we will explain further below, the case $q=2$ turns out to be especially interesting. Physicists which considered this question have predicted that the first-order transition in fact arises as soon as the number of pages is ``$2+\eps$''. See Remark \ref{r.2eps} and Subsection \ref{ss.motiv}.
\end{remark}

\ni
Interestingly, in the case $q=2$, the effect of this first-order phase transition is to make the Ising model on each of the $N$-pages independent of each other in the scaling limit. The statement below (written for $N=3$, but it would also be valid for large enough $N$ and $J=1$) makes this factorization property more precise. Below, for a set $A\subset V$, write $\sigma_A:=\prod_{x\in A}\sigma_x$.

\begin{theorem}\label{thm:Ising2}
Fix $q=2$. Let $J<\infty$ be large enough  so that $
m_{\mathbb B_3}(\beta_c,J,2)>0$. For any three sets $A\subset \mathbb H^1$, $B\subset\mathbb H^2$, and $C\subset\mathbb H^3$, containing a total of $m$ vertices that are all at a distance at least  $L$ of  $\mathbb Z$, we have the following factorization property of $m$-point correlations across $\mathbb Z$: 
\begin{align*}
\mu_{\mathbb B_3,\beta_c,J}^f[ \sigma_A \sigma_B \sigma_C ]&= \mathbf 1_{m\in2\Z}\; \mu_{\H,\beta_c}^{+}[\sigma_A]\mu_{\H,\beta_c}^{+}[\sigma_B]\mu_{\H,\beta_c}^{+}[\sigma_C](1+O_m((\log L)^{-c}))\,.
\end{align*}
If $+$ boundary conditions are prescribed instead, the condition on the parity of $m$ can be dropped and we get
\begin{align*}
\mu_{\mathbb B_3,\beta_c,J}^{+}[ \sigma_A \sigma_B \sigma_C ]&=  \mu_{\H,\beta_c}^{+}[\sigma_A]\mu_{\H,\beta_c}^{+}[\sigma_B]\mu_{\H,\beta_c}^{+}[\sigma_C](1+O_m((\log L)^{-c}))\,.
\end{align*}
\end{theorem}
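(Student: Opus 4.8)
The plan is to exploit the following rigidity mechanism: long-range order along $\Z$ pins the gluing line into an ordered configuration, and conditionally on the line the three pages are independent critical half-plane Ising models whose bulk spins only feel the line through its large-scale behaviour. First I would condition on the spin configuration $\omega:=\sigma|_{\Z}$ along the line. By the domain Markov property, given $\omega$ the restrictions of $\sigma$ to $\H^1,\H^2,\H^3$ are independent, each a critical half-plane Ising model with boundary condition $\omega$ on $\Z$ (the coupling $J$ becomes irrelevant once $\omega$ is frozen). Writing $\phi_A(\omega):=\mu_{\H,\beta_c}^{\omega}[\sigma_A]$ and likewise $\phi_B,\phi_C$, and $\nu^\#$ for the law of $\omega$ under $\mu_{\Book_3,\beta_c,J}^\#$, one gets
\[
\mu_{\Book_3,\beta_c,J}^\#[\sigma_A\sigma_B\sigma_C]=\E_{\nu^\#}\bigl[\phi_A(\omega)\,\phi_B(\omega)\,\phi_C(\omega)\bigr].
\]
Two elementary facts will be used throughout: by Griffiths' inequalities (``a ferromagnet dominates'': flipping or deleting boundary spins cannot increase $|\langle\sigma_A\rangle|$), $|\phi_A(\omega)|\le M_A$ for every $\omega$, where $M_A:=\mu_{\H,\beta_c}^{+}[\sigma_A]\ge 0$, and similarly for $B,C$; and the spin flip gives $\phi_A(-\omega)=(-1)^{|A|}\phi_A(\omega)$.

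\emph{Stability of half-plane correlations.} For a vertex $x$ of $\H$ at height $h_x\ge L$, call its shadow the interval $W_x:=[\pi(x)-h_x^{1+\alpha},\,\pi(x)+h_x^{1+\alpha}]$ of $\Z$ ($\pi$ = vertical projection, $\alpha>0$ fixed), and set $W_A:=\bigcup_{x\in A}W_x$. I would establish: if $\omega$ looks like the $+$ phase on $W_A$ at all scales between $L$ and $\max_x|W_x|$ (made precise through FK connectivity of $W_A$ to a well-separated wired structure sitting above it), then $\phi_A(\omega)=M_A\bigl(1+O_{|A|}((\log L)^{-c})\bigr)$, and likewise for $B,C$. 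This is a multiplicative mixing estimate: a boundary magnetic field being relevant for the planar critical Ising model, a critical half-plane over an ordered boundary flows to the $+$ fixed point and the short-scale data (in particular the $J$-dependent ones) are washed out; the proof would use the Edwards--Sokal coupling, FKG, and RSW-type crossing estimates for critical FK--Ising.

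\emph{Rigidity of the gluing line --- the crux.} Integrating out the three pages turns $\nu^\#$ into a one-dimensional spin system on $\Z$ whose interaction, besides the genuine coupling $J$ along the edges of $\Z$, is of borderline $1/r^2$ type --- the effective coupling between two sites of $\Z$ at distance $r$ mediated by a critical half-plane decays like the critical boundary two-point function. This is exactly the marginal regime of \cite{DGT20a}, and the hypothesis $m_{\Book_3}(\beta_c,J,2)>0$ says this line system is in its ordered phase. In that phase I would prove, writing $\nu^+$ for its $+$ extremal state, that for every interval $I\subset\Z$ of length $R$,
\[
\nu^+\bigl[\,\omega\ \text{fails to look like the $+$ phase on}\ I\ \text{at scales between}\ L\ \text{and}\ R\,\bigr]\le (\log R)^{-c}.
\]
Short $-$-excursions along $\Z$ are unavoidable (the line magnetization is bounded away from $1$) but invisible from heights $\ge L$; long ones cost nearly a power of their length and are excluded with the stated probability. \textbf{This estimate is the heart of the argument and the main obstacle}: it requires porting the multi-scale FK analysis of one-dimensional $1/r^2$ percolation and Ising in the ordered phase to the effective model on $\Z$, keeping track of the logarithmic corrections characteristic of that borderline case.

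\emph{Conclusion.} Fix the $m$ vertices of $A\cup B\cup C$ and let $\mathcal G$ be the event that $\omega$ looks like the $+$ phase on each of the $m$ shadows at all relevant scales. By the previous estimate and a union bound over the $m$ shadows (using $h_x\ge L$), $\nu^+[\mathcal G^c]\le (\log L)^{-c}$ with a constant depending on $m$. On $\mathcal G$, the stability estimate gives $\phi_A\phi_B\phi_C=M_AM_BM_C\bigl(1+O_m((\log L)^{-c})\bigr)$; on $\mathcal G^c$, Griffiths gives $|\phi_A\phi_B\phi_C|\le M_AM_BM_C$. Therefore
\[
\mu_{\Book_3,\beta_c,J}^{+}[\sigma_A\sigma_B\sigma_C]=\E_{\nu^+}\bigl[\phi_A\phi_B\phi_C\bigr]=M_AM_BM_C\bigl(1+O_m((\log L)^{-c})\bigr),
\]
which is the $+$-boundary statement since $M_A=\mu_{\H,\beta_c}^{+}[\sigma_A]$. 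For free boundary conditions, invariance of $\mu^f$ under $\sigma\mapsto-\sigma$ forces $\mu^f[\sigma_A\sigma_B\sigma_C]=(-1)^m\mu^f[\sigma_A\sigma_B\sigma_C]$, which vanishes when $m$ is odd; when $m$ is even, long-range order yields the decomposition $\nu^f=\tfrac12\nu^{+}+\tfrac12\nu^{-}$ into the two extremal states of the line system, and running the $+$-analysis in each phase together with $\phi_A(-\omega)\phi_B(-\omega)\phi_C(-\omega)=(-1)^m\phi_A(\omega)\phi_B(\omega)\phi_C(\omega)=\phi_A(\omega)\phi_B(\omega)\phi_C(\omega)$ gives $\mu^f[\sigma_A\sigma_B\sigma_C]=M_AM_BM_C\bigl(1+O_m((\log L)^{-c})\bigr)$. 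The two parities combine into the formula with the prefactor $\mathbf 1_{m\in 2\Z}$.
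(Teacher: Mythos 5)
Your route is genuinely different from the paper's, and it contains a gap at exactly the place you flag as the crux. The paper never conditions on $\sigma|_{\Z}$ and never passes to an effective one-dimensional model. Instead it works directly on $\Book_3$ with the random-current representation: by the switching lemma, a finite-volume ratio of the form $\mu_{\Book_3}[\sigma_A\sigma_B\sigma_C]/\bigl(\mu_{\H}^{+}[\sigma_A]\mu_{\H}^{+}[\sigma_B]\mu_{\H}^{+}[\sigma_C]\bigr)$ is expressed as the probability, under a duplicated current $\n_1+\n_2$, of a ``bridging'' event along $\Z$; this probability is then driven to $1-O((\log L)^{-c})$ using the already-established multiscale estimates ($p_\lambda(K_n,\theta)$ small) from Proposition~\ref{prop:technical} and Lemma~\ref{lem:estimate one arm}, exactly as in the sketch of Theorem~\ref{th.BookFactor}. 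The $(\log L)^{-c}$ rate comes from the super-exponential scale sequence $K_n=(n!)^3C_1^n$, which forces $n\gtrsim(\log L)^{1/2}$ at distance $L$, not from a marginal $1/r^2$ correction.

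The genuine gap in your proposal is the ``Rigidity of the gluing line'' step. After integrating out the three critical half-planes, the induced law $\nu^\#$ on $\{\pm1\}^{\Z}$ is not a $1/r^2$ pair-interaction Ising model: it carries an infinite hierarchy of multi-body interactions (cumulants of the half-plane free energies), and the hypothesis $m_{\Book_3}(\beta_c,J,2)>0$ gives you long-range order but not the quantitative, scale-by-scale coherence of $\omega$ on the shadows $W_A,W_B,W_C$ that you need. You cannot simply import the ordered-phase multiscale analysis of a genuine $1/r^2$ model (as in \cite{DGT20a}); the needed estimate is precisely what the paper's FK renormalization supplies, and only by exploiting the two-dimensional book geometry, not a reduced one-dimensional description. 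Separately, your ``stability of half-plane correlations'' lemma (that a boundary condition which FK-connects to a wired structure produces $\phi_A(\omega)=M_A(1+O((\log L)^{-c}))$) is precisely the content of Lemma~\ref{lem:estimate one arm} and the switching-lemma identity; it is not elementary and your sketch does not prove it. The parity/symmetry observation for free boundary conditions and the use of Griffiths to get $|\phi_A|\le M_A$ are correct as stated, and the conditional-independence-given-$\omega$ reduction is sound, but the two central quantitative inputs remain unjustified, and the mechanism you credit for the logarithm (marginality of $1/r^2$) is not the one at work.
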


Let us mention that the error term $(\log L)^{-c}$ can be improved by looking more closely at our proof, but this is irrelevant for the conclusion of the paper.
\subsection{Fortuin-Kasteleyn percolation on the book.}

We now define the Fortuin-Kasteleyn percolation \cite{ForKas72,For70} (we also refer to \cite{Gri06} for a manuscript and \cite{Dum17a} for recent results). Let  $G=(V,E)$ be a subgraph of an infinite graph $\mathbb G$, let $\xi$ be a partition of the vertices $\partial G:=\{x\in V:\exists y\in \mathbb G\setminus V:xy\in \mathbb E\}$. A {\em percolation configuration} is an element $\omega=(\omega_e:e\in E)\in \{0,1\}^E$. If $\omega_e=1$ we say that the edge is {\em open}, otherwise it is {\em closed}. We often see $\omega$ as a subgraph of $G$ with vertex-set $V$ and edge-set given by the set of open edges in $\omega$.

The FK percolation measure on G with edge-weights $(p,\lambda)$ and cluster-weight $\xi$ is defined by the formula
\[
\mathbb P_{G,p,\lambda,q}^\xi[\omega]=\frac{q^{k(\omega^\xi)}}Z \prod_{xy\in E}p_e^{\omega_e}(1-p_e)^{1-\omega_e},
\] 
with $p_e=p$ if at least one endpoint is not in $\mathbb Z$, and $\lambda$ if both are, and where  $\omega^\xi$ is the graph obtained from $\omega$ by wiring all the vertices in $\partial G$ belonging to the same element of the partition $\xi$. Let $\xi=1$ (resp.~$\xi=0$) be  the wired (resp.~free) boundary conditions corresponding to the partitions equal to $\{\partial G\}$ (resp.~only singletons). 

Below, we will use the notation $A\longleftrightarrow B$ (in $C$) if there exists a path of open edges between a vertex in $A$ and a vertex in $B$ (using vertices in $C$ only). We also write $x$ instead of $\{x\}$ when the set is a singleton, and $x\longleftrightarrow\infty$ to denote the fact that there exists an infinite path starting from $x$. 

We construct the FK percolation $\mathbb P_{\mathbb G,p,\lambda,q}^1$ and $\mathbb P_{\mathbb G,p,\lambda,q}^0$ on $\mathbb G$ with wired or free boundary conditions by taking the limit as $G\nearrow \mathbb G$ of the measures $\mathbb P_{G,p,\lambda,q}^1$ and $\mathbb P_{G,p,\lambda,q}^0$. We also define, for an infinite graph $\mathbb G$ containing the origin, 
\[
\theta_{\mathbb G}(p,\lambda,q):=\mathbb P_{\mathbb G,p,\lambda,q}^1[0\longleftrightarrow \infty].
\]
It was also proved that there exists $p_c=p_c(q)=\sqrt q/(1+\sqrt q)$ such that for every integer $N$ and $\lambda\in(0,1)$, $\theta_{\mathbb B_N}(p,\lambda,q)$ is equal to 0 if $p<p_c$ and is strictly positive if $p>p_c$. Again, the question of whether the phase transition is continuous (i.e.~$\theta_{\mathbb B_N}(p_c,\lambda,q)=0$) or discontinuous (i.e.~$\theta_{\mathbb B_N}(p_c,\lambda,q)>0$) was answered in the special case of $\mathbb G=\mathbb Z^2(=\mathbb B_2)$: when $1\le q\le 4$, it is continuous \cite{DumSidTas13} and when $q>4$, it is discontinuous \cite{DumGanHar16}. Here, we investigate this question on $\mathbb B_N$ with $N\ge 3$. Our first result is as follows.

\begin{theorem}\label{thm:FK}
For any $1\leq q \leq 4$, there exists $N_0<\infty$ such that FK percolation undergoes a first-order phase transition on $\Book_{N_0}$. I.e. for any $N\geq N_0$,  
\[
\theta_{\Book_{N}}(p_c(q),q)=\mathbb P_{\Book_N,p_c(q),q}^1[0\longleftrightarrow \infty] >0\,.
\]
By choosing $N_0$ sufficiently large, the result also holds for arbitrary small $\lambda \in [0,1)$ and for free boundary conditions, i.e.  for any $N\geq N_0$, 
\[
\mathbb P_{\Book_N,p_c(q),\lambda,q}^{0}[0\longleftrightarrow \infty] >0\,.
\]
\end{theorem}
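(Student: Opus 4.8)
The plan is to exploit the self-duality of FK percolation together with a renormalization/bootstrap argument: on a book with many pages, an open crossing in one page ``costs'' a dual crossing in \emph{every} other page, so the energy--entropy balance that keeps the $\mathbb Z^2$ transition continuous is destroyed once $N$ is large. Concretely, fix $1\le q\le 4$ and recall that on $\mathbb Z^2$ the transition is continuous \cite{DumSidTas13}, which is equivalent (via the RSW theory developed there) to the statement that crossing probabilities of quads in the half-plane $\H$ at $p_c$ are bounded away from $0$ and $1$ uniformly in scale. I would first set up a finite-size criterion: there is a constant $c_0=c_0(q)>0$ and a scale $R_0$ such that if, at $p=p_c$, the probability under $\mathbb P^{0}_{\Book_N, p_c,\lambda,q}$ that the annulus $B_{2R}\setminus B_R$ (in a fixed page, say $\H^1$) is crossed by an open circuit around the origin staying inside $\H^1$ is at least $1-c_0$ for some $R\ge R_0$, then by standard gluing of circuits in dyadic annuli one gets $\theta_{\Book_N}(p_c,\lambda,q)>0$. (Here one uses the comparison between free and wired boundary conditions and finite-energy to seed the construction.) So it suffices to produce, for $N$ large, one scale at which open circuits around $0$ in $\H^1$ are overwhelmingly likely.

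The heart of the argument is the following pages-decoupling estimate. Condition on the restriction of $\omega$ to the gluing line $\mathbb Z$ (i.e.\ on the states of the $\lambda$-edges and on the partition they induce on $\mathbb Z$). Given this ``spine'' configuration, the configurations on the $N$ pages are conditionally \emph{independent} FK-type measures on half-planes with boundary conditions on $\mathbb Z$ read off from the spine, each with the same law. Now suppose, for contradiction, that at $p_c$ there is \emph{no} scale at which open circuits around $0$ in $\H^1$ have probability $\ge 1-c_0$; by the RSW/duality input on the half-plane this means that at every dyadic scale $R$ there is, conditionally on the spine, a probability bounded below by some $\delta=\delta(q)>0$ of a \emph{dual} crossing disconnecting $B_R\cap\mathbb Z$ from $B_{2R}\cap\mathbb Z$ inside page $\H^i$, for each $i$. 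Since the $N$ pages are conditionally independent, the probability that \emph{every} page is dually disconnected at scale $R$ is at least $\delta^{N}$ — wait, that goes the wrong way; rather, the probability that \emph{some} page fails to connect is $\ge 1-(1-\delta)^N$, which $\to 1$. The point I want is the reverse: a single open path connecting two far-apart segments of $\mathbb Z$ \emph{through} $\H^1$ forces, by planar duality \emph{within each other page}, that those two segments are \emph{not} dually separated there, and one shows this joint event has probability $\le (1-\delta)^{N-1}$, hence is killed as $N\to\infty$. Running this at all scales forces the $\mathbb Z$-boundary to behave, for $N$ large, like wired boundary conditions for each page at $p_c^+$ in effect — more precisely, it pushes the effective boundary condition seen by $\H^1$ into the supercritical regime and yields the circuit at some scale, the desired contradiction. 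Making this self-consistent across scales is exactly a renormalization-group bootstrap: define $p_N(R)$ as the conditional crossing probability at scale $R$, show the pages-decoupling inequality gives $p_N(2R)\ge f_N(p_N(R))$ with $f_N\to 1$ uniformly on $[\delta,1-\delta]$ as $N\to\infty$, and conclude that for $N\ge N_0$ the iteration escapes to values $\ge 1-c_0$.

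The two statements about small $\lambda$ and free boundary conditions are then essentially free: taking $\lambda\in[0,1)$ only makes the $\lambda$-edges \emph{less} likely to be open, which (after conditioning on the spine) only \emph{weakens} the effective wiring of $\mathbb Z$ inside each page — but the argument above only \emph{used} the pages-decoupling and planar duality within pages, and the mechanism that destroys continuity (one open spine-to-spine path obstructs dual separation in all other pages) is monotone in the right direction once $N$ is large enough to absorb the worst spine behavior; one checks that the finitely many relevant events are comparable up to a $\lambda$-dependent but scale-independent constant via finite energy, and one simply enlarges $N_0$ to swallow it. For free boundary conditions the same comparison (boundary conditions on $\partial\Book_N$ far from $0$ affect the local picture by a bounded factor once one is deep inside, by the RSW-type estimates on each page) gives $\mathbb P^{0}_{\Book_N,p_c,\lambda,q}[0\longleftrightarrow\infty]>0$. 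The main obstacle, I expect, is the renormalization step: one must propagate the ``many pages force near-wired behavior on $\mathbb Z$'' statement across \emph{all} scales simultaneously, controlling the conditional law of the spine uniformly — in particular ruling out that the spine configuration conspires scale-by-scale to keep each page marginally critical — and this is where one needs the quantitative RSW inputs of \cite{DumSidTas13} on the half-plane in a form uniform over boundary conditions on $\mathbb Z$. The $(\log L)^{-c}$-type error in Theorem~\ref{thm:Ising2} is a hint that the final bounds here are also only polylogarithmically quantitative, consistent with the transition being ``barely'' first order near the critical number of pages.
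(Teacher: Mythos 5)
Your global intuition is the right one: a disconnecting dual event must occur simultaneously in every page, and for large $N$ this probability can be driven below the ``entropy'' cost at each scale, which is exactly what the paper encodes by a disconnection exponent $\alpha(q,N)$ and the criterion $\alpha(q,N)>1$ (Proposition~\ref{thm:FK2}). You also correctly identify the ``absorb $\lambda$ by adding pages'' trick used to handle arbitrary $\lambda\in[0,1)$ and free boundary conditions; that is indeed how the paper deduces Theorem~\ref{thm:FK} from Proposition~\ref{pr.unif}. However, there is a genuine gap in the mechanism you use to extract a per-page bound.

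The claim that, after conditioning on the ``spine'' (the $\lambda$-edges on $\mathbb Z$), the $N$ pages ``are conditionally independent FK-type measures on half-planes'' is false for cluster-weight $q>1$. The spatial Markov property tells you that, given the $\lambda$-edge states, the law of the remaining edges is an FK measure on $\bigcup_u \H^u$ with boundary partition on $\mathbb Z$ induced by the spine. But this measure does \emph{not} factorize over pages: the cluster count $k(\omega)$ in the Radon--Nikodym weight $q^{k(\omega)}$ couples clusters in different pages whenever they touch the same sites of $\mathbb Z$ (this is precisely what makes a wired-type interaction appear at the gluing line, and what makes $\alpha^+_{wired}$ the relevant exponent in the conjectural identity $\alpha(q,N)=N\alpha^+_{wired}(q)$). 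Your product bound $\le(1-\delta)^{N-1}$ therefore does not follow. The paper circumvents this by \emph{sequentially} revealing page by page and invoking monotonicity in boundary conditions: the first $N-2$ revealed pages each see, in the remaining graph, at least two further unrevealed pages, so a dual connection in each of them is dominated by the full-plane event with exponent $\alpha^+_{\C}$, yielding $\alpha(q,N)\ge(N-1)\alpha^+_{\C}(q)$ uniformly in $1\le q\le 4$. This sidesteps any independence claim. Second, the bootstrap $p_N(2R)\ge f_N(p_N(R))$ with $f_N\to 1$ is stated but not built; the actual renormalization in the paper tracks a different scalar -- the probability $p_\lambda(K,\theta)$ that the largest cluster in a $K$-block has density $<\theta$ on $\mathbb Z$ -- and closes via the bridging/anchoring lemmas and an explicit inequality $p_\lambda(CK,\theta-C_0/C)\le\tfrac1{100}p_\lambda(K,\theta)+6C^2p_\lambda(K,\theta)^2$. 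Without an analogue of the anchoring second-moment estimate and the bridging step you do not have a contraction to iterate, which is the gap you yourself flag at the end.
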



Note that Theorem~\ref{thm:Potts} follows easily from Theorem~\ref{thm:FK}.
\begin{proof}[Proof of Theorem~\ref{thm:Potts}]Through the Edwards-Sokal coupling between the Potts model and FK percolation (see e.g.~\cite{Gri06}), we  have that 
\[
m_{\mathbb B_N}(\beta_c,q,J)=\tfrac{q-1}{q}\theta_{\mathbb B_N}(p_c,q,1-e^{-2\beta J}),
\]
hence Theorem~\ref{thm:Potts} is a direct consequence of Theorem~\ref{thm:FK}.
\end{proof}

As in the case of Potts models, we define for any $q\geq 1$, 
\begin{align*}\label{}
N^*(q):= \min\{ N\in \N, \; \exists \lambda<1\,\, \text{so that  }  \theta_{\Book_N}(p_c(q),\lambda,q)>0\} \,.
\end{align*}

The following result gives a precise picture of the optimal number of pages $N^*(q)$ depending on $q\geq 1$.  (See Fig.~\ref{f.Nq} for a plot of $q\mapsto N^*(q)$). This extends Theorem \ref{thm:Potts*} which was stated for Potts models ($q\in \N_+$).
\begin{figure}[!htp]
\begin{center}
\includegraphics[width=0.8\textwidth]{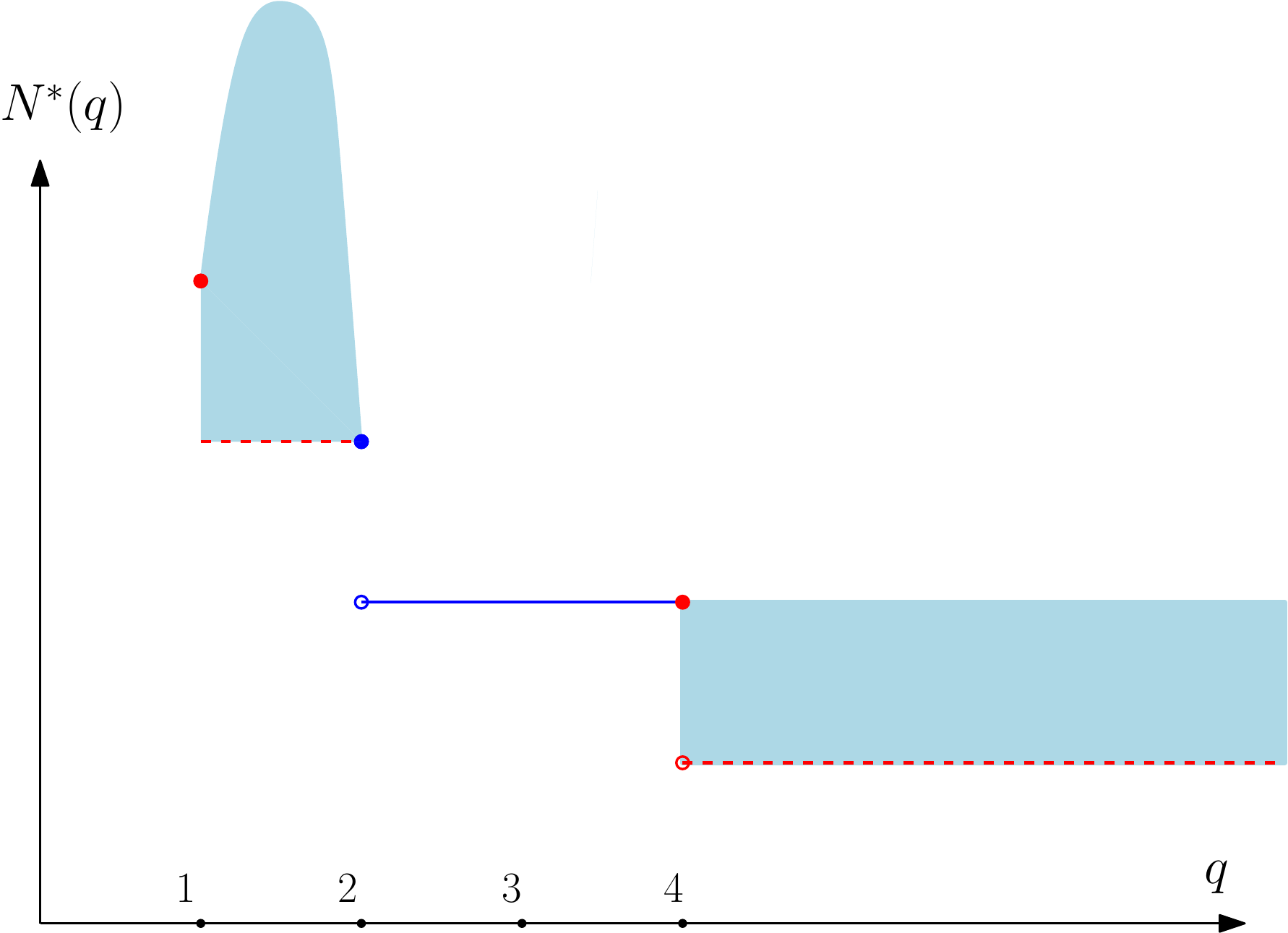}
\end{center}
\caption{The lines and dots in dark blue show the precise values proved for $N^*(q)$. The light blue shows the possible range of values for $N^*(q)$ while the red color indicates where we expect $N^*(q)$ to be. }\label{f.Nq}
\end{figure}

\begin{theorem}\label{thm:FK*} $ $ We have that
\bi
\item[a)] $3 \leq N^*(1) \leq 4$,
\item[b)] there exists $N_0$ such that $3 \leq N^*(q) \leq N_0$ for every $1\le q<2$,
\item[c)] $N^*(2)=3$,
\item[d)] $N^*(q)=2$ for all $2< q < 4$, 
\item[e)] $1\leq N^*(q) \leq 2$ for all $q\geq 4$. 
\ei
\end{theorem}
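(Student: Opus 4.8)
\textbf{Proof strategy for Theorem~\ref{thm:FK*}.} The plan is to combine three ingredients: the continuity result on $\mathbb{Z}^2$ from \cite{DumSidTas13}, which yields the lower bounds, a percolation/coupling argument on the book which forces long-range order when the number of pages is large enough or $q$ is large enough, and explicit RSW-type estimates and crossing inequalities to pin down the exact thresholds for $q$ near $2$. I would organize the proof item by item.

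\textbf{Lower bounds.} For the lower bounds $N^*(1)\ge 3$, $N^*(q)\ge 3$ for $1\le q<2$, and $N^*(2)\ge 3$, the key point is that $\mathbb{B}_2=\mathbb{Z}^2$, so by \cite{DumSidTas13} the phase transition is continuous for $1\le q\le 4$ on $\mathbb{B}_2$; hence $\theta_{\mathbb{B}_2}(p_c(q),\lambda,q)=0$ for every $\lambda<1$, and $N^*(q)\ge 3$. (One must check $\lambda\in[0,1)$ only affects finitely many edges, so continuity on $\mathbb{Z}^2$ with $\lambda=1$ still controls the modified measure — this follows from finite-energy and the comparison of boundary conditions.) For $N^*(q)\ge 2$ when $q\ge 4$ and $N^*(1)\ge 3$ one similarly uses that a single page $\mathbb{B}_1=\mathbb{H}$ has no infinite cluster at criticality (the half-plane one-arm exponent is positive), so $N^*(q)\ge 2$ always; for $q=1$ one additionally needs that two half-planes glued along a line still percolate with probability zero at $p_c(1)=1/2$, which is a planar duality plus RSW argument — this handles $N^*(1)\ge 3$.

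\textbf{Upper bounds via large $N$ or large $q$.} The bounds $N^*(q)\le N_0$ for $1\le q<2$, $N^*(q)\le 4$ for $q=1$, $N^*(q)\le 2$ for $q\ge 4$, and $N^*(2)\le 3$, $N^*(q)\le 2$ for $2<q<4$, are all consequences of Theorem~\ref{thm:FK} together with a finer analysis. The mechanism is: with $N$ pages glued along $\mathbb{Z}$, the effective one-dimensional system along the gluing line sees a cluster-weight behaving like $q^{?}$ but more importantly sees $N$ independent ``arms'' attaching to each segment of $\mathbb{Z}$; when $N$ is large (or when $\lambda\to 1$, i.e. strong coupling on the line), the line $\mathbb{Z}$ behaves like a one-dimensional FK model with effectively supercritical parameters, producing an infinite cluster. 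Quantitatively I would run a renormalization / block argument: tile $\mathbb{Z}$ into dyadic segments, show that a segment is ``good'' (contained in a large cluster touching $\mathbb{Z}$ in many of the $N$ pages) with probability close to $1$ as $N\to\infty$ uniformly in scale, using the half-plane crossing estimates (RSW for FK percolation with $1\le q\le 4$, available since \cite{DumSidTas13}) applied independently in each page conditionally on the line, and then conclude percolation along $\mathbb{Z}$ by a $1$D comparison with an independent percolation with density $\to 1$. For $q\ge 4$ the discontinuity already on $\mathbb{Z}^2$ (two pages) plus a coupling showing the book dominates $\mathbb{Z}^2$ gives $N^*(q)\le 2$ directly. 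For $q=1$ (Bernoulli), the value $N^*(1)\le 4$ requires an explicit count: four half-planes glued along a line produce, at each vertex of $\mathbb{Z}$, enough open edges with high enough probability (the relevant computation is that the expected number of open half-plane arms exceeds the critical threshold of the resulting $1$D branching-type process), whereas three do not suffice to push this above threshold; I would make this rigorous through a comparison with an oriented percolation or a supermartingale argument along $\mathbb{Z}$.

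\textbf{The sharp thresholds $N^*(2)=3$ and $N^*(q)=2$ for $2<q<4$.} These are the delicate cases and I expect the main obstacle to be here. The lower bound $N^*(2)=3$ needs that on $\mathbb{B}_2=\mathbb{Z}^2$ the Ising/FK$_2$ transition is continuous even with an arbitrarily strengthened coupling $\lambda<1$ on the line — again reducible to \cite{DumSidTas13} plus finite-range perturbation, but one should argue carefully that the $(\log L)^{-c}$-type decay of crossing probabilities survives. The upper bound $N^*(2)\le 3$ is exactly Theorem~\ref{thm:FK} specialized to $q=2$ with $N_0=3$, which the paper establishes. For $2<q<4$ the claim $N^*(q)=2$ means the transition is \emph{discontinuous} already on $\mathbb{Z}^2$ once we allow $\lambda<1$ — this is surprising since on $\mathbb{Z}^2$ with $\lambda=1$ it is continuous for $q\le 4$ — so the content is that adding a strong line coupling in $\mathbb{Z}^2$ triggers first-order behavior for every $q>2$; I would prove this by a surface-order argument: with strong coupling, the line $\mathbb{Z}$ almost surely carries an infinite cluster in the wired measure (a $1$D argument), and then the question is whether this cluster is visible under free/small-$\lambda$ boundary conditions at $p_c$, which one shows by an energy/entropy estimate comparing the ``ordered on the line'' configuration with the ``disordered'' one — the free energy cost of a disorder interface crossing the line is $o(1)$ while the energy gain from the line coupling is order $\log\lambda^{-1}$, and for $q>2$ the $q^{k(\omega)}$ factor tips the balance. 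The hard part is making this competition quantitative and uniform, most likely via the parafermionic observable / loop representation estimates that underlie \cite{DumSidTas13,DumGanHar16}; failing a clean such argument, one falls back on the renormalization scheme of the previous paragraph with a careful accounting that two pages plus strong line coupling already meets the threshold when $q>2$ (the factor-$q$ boost to line clusters is what separates $q>2$ from $q\le 2$).
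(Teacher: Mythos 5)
Your proof of the lower bounds contains a genuine error that undermines the whole structure. You claim that $N^*(q)\ge 3$ for $1\le q\le 2$ follows from the continuity of the transition on $\mathbb B_2=\mathbb Z^2$ established in \cite{DumSidTas13}, because ``$\lambda\in[0,1)$ only affects finitely many edges.'' This is false: strengthening the coupling to $\lambda$ affects \emph{all} edges along the infinite line $\mathbb Z$, and this is not a finite-range or finite-energy perturbation of the homogeneous model. The continuity result of \cite{DumSidTas13} does not transfer. Indeed, item (d) of the very theorem you are proving asserts that for $2<q<4$ a strengthened line $\lambda<1$ \emph{does} produce an infinite cluster on $\mathbb Z^2$ at $p_c$, even though the homogeneous model is continuous there — this is the whole point. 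So the claim that continuity for $\lambda=1$ controls the $\lambda<1$ case is not just unproved but directly contradicted by item (d). The lower bound $N^*(q)\ge 3$ for $1\le q<2$ actually requires a proof that a single line with arbitrarily strong coupling cannot carry an infinite cluster in the plane at $p_c$; the paper achieves this by a dedicated second-moment argument on the number of edges on $\mathbb Z$ whose dual endpoints are connected to the top and bottom of a box, using the lower bound $a^+_{wired}(k,n)\ge c_1(k/n)^{1/2-c_1}$ from \cite{DMT20} (valid only for $q<2$). The case $q=2$ is strictly harder: the exponent is exactly $1/2$ and the second-moment argument is not decisive, so the bound $N^*(2)\ge 3$ is the subject of a separate companion paper \cite{DGT20b} and is \emph{not} established in this paper by any elementary reduction.

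On the upper bounds your outline is closer to the paper, but several attributions are off. The claim that $N^*(2)\le 3$ is ``exactly Theorem~\ref{thm:FK} specialized to $q=2$ with $N_0=3$'' is not correct: Theorem~\ref{thm:FK} only gives existence of some $N_0$, and the sharp statement $N^*(2)\le 3$ requires the random-currents analysis of Section~\ref{s.Ising}, whose role is to upgrade the free/plane arm exponent to the wired one near $\mathbb Z$, so that $\alpha(2,3)\approx 3\cdot\alpha^+_{wired}(2)=3/2>1$. Similarly, the upper bound $N^*(1)\le 4$ is not a ``branching threshold'' computation but the inequality $\alpha(1,N)\ge N\alpha^+_{free}(1)=N/3>1$ for $N\ge 4$, and you additionally claim that three pages \emph{do not} suffice for $q=1$, which is only a conjecture — the paper proves $N^*(1)\ge 3$ but not $N^*(1)\ge 4$. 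Finally, for $2<q<4$ your free-energy/surface-order heuristic is not what the paper does and would not yield a proof on its own; the actual argument is the same renormalization/bridging scheme, made planar and fed with the opposite inequality on the wired one-arm exponent from \cite{DMT20}, namely $a^+_{wired}(k,n)\le c_1^{-1}(k/n)^{1/2+c_1}$ for $q>2$, which is exactly the input that lets two pages cross the threshold $\alpha>1$.
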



\ni
We now comment on the different items in the above result.
\bnum
\item[a)] The fact that $N^*(1)\leq 4$ will follow readily from our proof of Theorem \ref{thm:FK} using the known value of the {\em one-arm} critical exponent in $\H$ for critical $q=1$ percolation (\cite{SW01,PonIkh}). 
We expect that this is optimal, i.e.~that $N^*(q=1)=4$. The bound $N^*(1) \geq 3$ will be shown using a second moment argument in Section \ref{s.2dM}.

\item[b)] We provide a direct proof in Subsection \ref{ss.unif} that $\sup_{1\leq q \leq 4} N^*(q) <\infty$. The fact that $N^*(q)\geq 3$ when $1<q<2$ will also be proved in Section \ref{s.2dM} using a  second moment argument based on estimates on the one-arm critical exponents from \cite{DMT20}. We expect that $N^*(q) =3$ in this whole regime.

\item[c)] The case $q=2$ is, arguably, the most interesting of all. 
As opposed to the $q=1$ case, this result will not be a straightforward consequence of (the proof of) Theorem \ref{thm:FK}. Its proof will be organized as follows:
\bi
\item[1)] The proof that $N^*(2)\leq 3$ will be the focus of Section \ref{s.Ising}. The argument will be based on the {\em random currents} representation of the Ising model (\cite{Aiz82,Dum16}). Random currents will indeed enable us to show that in the graph $\Book_3$, far from the middle line $\Z$, the spin system behaves (nearly) as if all edges along $\Z$ were wired together. This will be a key step of the proof as the precise values of {\em arm-exponents} in $\H$ depend on what are the boundary conditions induced along $\p \H$.
\item[2)] The second part of the proof is to show that $N=2$ pages are not sufficient to create an infinite cluster even if the edge-weights $\lambda$ are arbitrary large on $\Z$. Here, the second moment argument used for the case $1\leq q<2$ is not sufficient and a detailed analysis of the effect of a $1d$ defect-line for $2d$ critical Ising model is needed. This will be the subject of the companion paper \cite{DGT20b}.
\ei
\item[d)] The proof that $N^*(q)=2$ for any $2<q<4$ will consist in showing that a defect line $\Z$ with high coupling constants $\lambda$ is sufficient in $\Book_2=\Z^2$ to create on its own an infinite cluster. The proof is given in Section \ref{s.2dM}.  It will rely on the multiscale/renormalization argument built for Theorem \ref{thm:FK} but will be simpler due to the planarity of $\Book_2$.  
\item[e)] Finally, when $q>4$, it follows from the first-order phase transition (\cite{DumGanHar16,RaySpi19}) that $N^*(q)\leq 2$ for all $q>4$ (no strengthening $\lambda$ along $\Z$ is needed in that case) and we expect that $N^*(q)=1$ in this regime.
When $q=4$, the  argument of item e) still works to ensure $N^*(4)\leq 2$ but strong RSW is missing to check that $N^*(4)\geq 2$. We still expect though that $N^*(4)=2$. 
\enum

\begin{remark}\label{r.2eps}
When $q=1$ (resp.~$q=2$), it is not difficult to extend the analysis carried in this paper to a Book with $``N=3+\eps"$ pages (resp $``N=2+\eps"$ pages) in the following sense: consider the finite book with 3 pages (resp. 2 pages) of normal size $[0,n]\times [0,n]$ and a fourth (resp third) page of size $[0,n]\times [0,n^\eps]$. These  pages are glued along $[0,n]$. We claim that by a slight adaptation of the multiscale proof in this paper, we can show that if the coupling constant $\lambda$ is chosen high enough along $[0,n]$, then with probability $1-o(1)$ as $n\to \infty$, there is a {\em macroscopic} cluster in the $``N=3+\eps"$ book (resp long-range order in the $``N=2+\eps"$ book) with intensity larger than $\tfrac34$ along the gluing line $[0,n]$. This is consistent with predictions from \cite{SMP10,Ste14} (though with a different notion of $``N=2+\eps"$ pages). 
\end{remark}

{\em
In the whole paper, we focus on $1\leq q \leq 4$ 
and $p=p_c$. We drop them from the notation. In particular we write $\mathbb P^\xi_{G,\lambda}$ instead of $\mathbb P^\xi_{G,p_c,\lambda,q}$. It will happen that we write $\mathbb P^\xi_{G,p_c}$, but we warn the reader that this means that the $\lambda$ parameter is equal to $p_c$ (as the $p$ is always set to $p_c$).
}

\subsection{Motivations from replicas and quantum spin systems.}\label{ss.motiv}

Our results are motivated by several works in theoretical physics. To our knowledge, the first works which have considered the present gluing problem are the works \cite{Car91,ITB91} by Cardy and Iglói-Turban-Berche. These two works rely on a renormalization group analysis in order to study the large $N$ tends to infinity case. Based on this RG analysis, both \cite{Car91} and  \cite{ITB91} suggest that if one glues an Ising model at $\beta_c$ on $N>2$ pages along a line, then the spins may spontaneously order near that line. The gluing of several pages of Ising arises naturally  in their works in forms of {\em replicas} for a model with disorder, namely a $2d$ Ising model with {\em quenched} magnetic disorder along its boundary $\p \H$. 

\begin{figure}[!htp]
\begin{center}
\includegraphics[width=\textwidth]{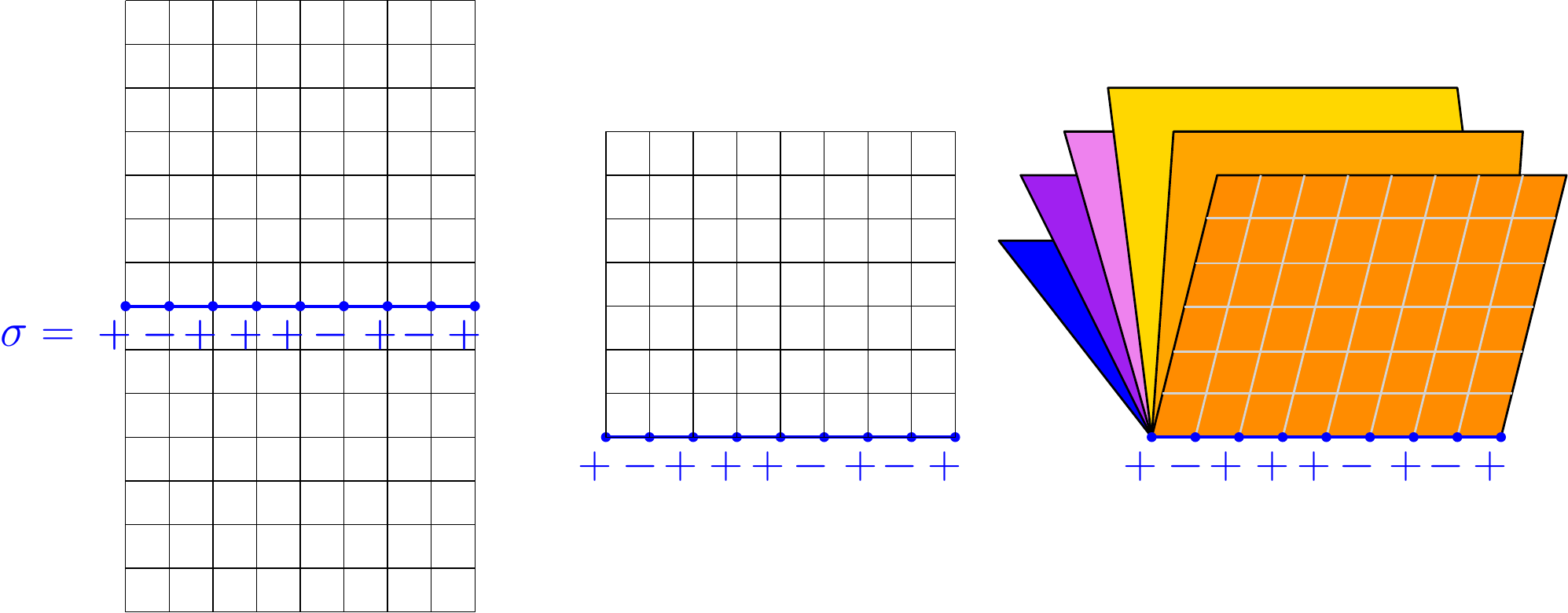}
\end{center}
\caption{If $p_\sigma$ is the probability to find the configuration $\sigma$ on the middle line on the left, then the probability to find the same $\sigma$ at the bottom of the page in the middle is proportional to $p_\sigma^{1/2}$ while the probability to find $\sigma$ at the bottom of the Book-graph on the right is proportional to $p_\sigma^3$.}\label{f.BookIsing}
\end{figure}

More recently, in the works \cite{SMP10, Ste14} by Stéphan-Misguich-Pasquier and Stéphan, the authors combine conformal field theory arguments with numerical computations in order to give strong further support to these predictions. See also the simulations in \cite{Gra17}.

The goal behind \cite{SMP10,Ste14} is in some sense also driven by the {\em replica-trick} but for a different underlying motivation than in \cite{Car91,ITB91}. In these papers, the authors are interested in the {\em Shannon entropy}
of the groundstate $|\psi\rangle$ of the quantum Ising chain (or quantum Ising chain in transverse field), which is given on $\Z_L:=\Z/L\Z$ by the following Hamiltonian
\[
H_{\text{Ising Chain}}= -\sum_{i\in \Z/L\Z} \sigma_i^x\sigma_{i+1}^x + h\, \sigma_i^z\,.
\]
For this Ising chain, the most natural basis, denoted $\{|\sigma\rangle\}_{\sigma\in \{-1,1\}^L}$, of $(\C^2)^{\otimes \Z/L\Z}$ is given by the eigenstates of $\sigma_i^x$ which correspond to the actual spins in the classical two-dimensional model. In this  basis, and for the critical parameter $h:=h_c=1$ in the quantum Hamiltonian $H_{\text{Ising Chain}}$, the ground state can be written as 
\[
|\psi\rangle = \sum_{\sigma \in \{\pm 1\}^L}  p_\sigma^{1/2} |\sigma\rangle\,,
\]
where  $p_\sigma$ denotes the probability for a classical Ising model\footnote{For this correspondance to hold, the classical Ising model should not be on a $\Z^2$-grid but rather on a $\Z\times \R$ lattice. We will not enter into these considerations here, but simply mention that our study does extend to this more general framework using a similar renormalization framework and the statement of \cite{duminil2018universality} guaranteeing that the behaviour on $\Z^2$ is similar as the one on $\Z\times\R$.} in the infinite $2d$ cylinder  $\Z_L\times \Z$ to generate at $\beta_c$  the configuration $\sigma$ at the middle slice of the cylinder $\Z_L\times \{0\}$. The {\em Shannon Entropy} of the Quantum Ising chain in the basis is then defined as 
\[
S = - \sum_{\sigma} p_\sigma \log p_\sigma\,.
\]
The connection with Book-Ising goes as follows: one can express the entropy $S$ as a limit as $n\to 1$ of the so-called {\em Renyi's entropies} $S_n$:
\[
S= \lim_{n\to 1} S_n = \lim_{n\to 1} \frac 1 {1-n} \log \Big( \sum_\sigma p_\sigma^n\Big)\,.
\]
Now, in the  spirit of the celebrated Parisi replica's trick, the idea in \cite{SMP10,Ste14} is to analyze $S$ via the analysis of the Renyi entropies  $\{S_n\}_{n\in \N^*}$.  The link with Book-Ising is that the measure on $\sigma\in \{-1,1\}^L$ which assigns a weight on each configuration $\sigma$ proportional to $p_\sigma^n$ can be realized as a Book-Ising on $N=2n$ pages (where pages here are semi-infinite cylinders $\Z_L\times \N$). See Fig.~\ref{f.BookIsing} (with squares instead of semi-infinite cylinders).

%
%
%
%
%

\smallskip
\ni
\subsection*{Organization of the paper.} In Section~\ref{sec:2}, we present the preliminaries of the paper and the important disconnection exponents. At the core of this section is the statement of Proposition~\ref{thm:FK2}. Section~\ref{sec:3} contains the proof of Proposition~\ref{thm:FK2}. Sections~\ref{sec:4} and \ref{sec:5} contain the proofs of Theorem~\ref{thm:FK*} for $q\ne 2$ and $q=2$ respectively.

\smallskip
\subsection*{Acknowledgements.} The second author wishes to thank Jean-Marie Stéphan for very inspiring discussions on the physics side of this problem. The first author is funded by the ERC CriBLaM, the Swiss FNS and the NCCR SwissMap. The research of the second author is supported by the ERC grant LiKo 676999.  
The third author was funded by the ERC grant 851565.

\section{Preliminaries and disconnection exponent on the Book}\label{sec:2}

\subsection{Disconnection exponent.}
In the rest of the paper, depending on the context, $\Lambda_K$ will be either the box of size $K$ in $\mathbb Z^2$. 
We will extensively rely through this paper on the following event. For any $1\leq k \leq K$, let 
$F(k,K)$ be the event that there exists a page $\H^u$ in which $\partial\Lambda_k$ is disconnected from $\partial\Lambda_K$ in $\H^u$ by a path in $\omega$.  Let us mention that the complementary event $F(k,K)^c$ can also be interpreted using the dual representation of the Fortuin-Kasteleyn percolation on the page, where $\omega^*$ is defined as follows. For each page $\H^u$, let $(\H^u)^*$ be the dual graph of $\H^u$, and set $\omega^*_{e^*}=1-\omega_e$, where $e^*$ is the unique dual edge that crossed $e$ in its center. Then, we speak of a dual-open path of dual-edges for a path in $(\H^u)^*$ which is open in $\omega^*$ (we write $A\stackrel{*}{\longleftrightarrow}B$ for the existence of a dual connection between the sets $A$ and $B$). Then, $F(k,K)^c$ corresponds to the event that in each page, there exists a dual-open path from $\partial\Lambda_k$ to $\partial\Lambda_K$,
see Fig.~\ref{f.eventF}.
\begin{figure}[!htp]
\begin{center}
\includegraphics[width=\textwidth]{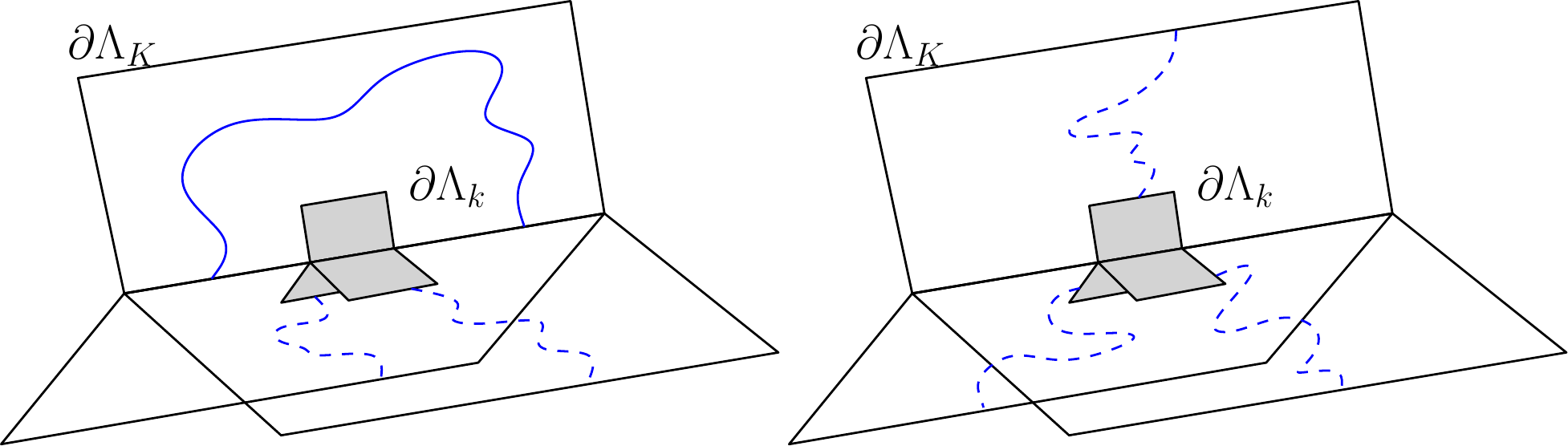}
\end{center}
\caption{The event $F(k,K)$ is realized on the left while $F(k,K)^c$ is realized on the right (the dashed lines correspond to dual open paths). When $F(k,K)^c$ will hold, it will disconnect the left side of the book from its right. 
}\label{f.eventF}
\end{figure}

Below, we will speak of a {\em critical exponent} $\alpha^*$ for a family of probabilities $(\mathbb P[A(k,K)]:k\le K)$ as follows 
\[
\alpha^*:=\sup\{\alpha>0:\exists \rho_0\text{ s.t.~$\forall K\geq 1, \rho\geq \rho_0$, }\mathbb P[A(K,\rho K)]\le \rho^{-\alpha}\}.
\]

Morally speaking, this critical exponent is ruling the speed of algebraic decay -- in $(k/K)$ -- of the probabilities $\mathbb P[A(k,K)]$. In what follows, we expect the families of probabilities (but this is currently unknown for most of the families under consideration) exhibit a behaviour of the form
\[
\mathbb P[A(k,K)]=(k/K)^{\alpha^*+o(1)},
\]
where $o(1)$ is a quantity that tends to 0 as $k/K$ tends to 0, but this is currently unknown for a number of them.

\begin{definition}\label{d.exp}
The \emph{disconnection exponent} $\alpha(q,N)$ is defined as the critical exponent of the family $\mathbb P_{B_{K},p_c,p_c,q}^0[F(k,K)^c]$.
\end{definition}

This disconnection exponent will be of central importance in this work as its value will exactly detect when (as $N$ increases) the phase-transition becomes first-order instead of second-order. Indeed the main ingredient for the proof of our main results.

\begin{proposition}\label{thm:FK2}
For every $1\le q\le 4$, if $N\ge1$ is such that  $\alpha(q,N)>1$, then there exists $\lambda\in(0,1)$ such that
\[
\theta_{\mathbb B_{N}}(p_c,\lambda,q)>0.
\]
In other words, $\alpha(q,N)>1$ implies $N^*(q) \leq N$. 
\end{proposition}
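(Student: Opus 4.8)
The plan is to prove Proposition~\ref{thm:FK2} via a multiscale (renormalization) argument, bootstrapping the disconnection estimate $\alpha(q,N)>1$ into the existence of an infinite cluster. The starting observation is that when $\alpha(q,N)>1$, there is an exponent $\alpha\in(1,\alpha(q,N))$ and a scale ratio $\rho_0$ such that $\mathbb P_{\mathbb B_K,p_c,p_c,q}^0[F(K,\rho K)^c]\le \rho^{-\alpha}$ for all $K\ge1$ and $\rho\ge\rho_0$, and in particular $\sum_j \mathbb P[F(\rho_0^{j},\rho_0^{j+1})^c]<\infty$ along a geometric sequence of scales. By the first Borel--Cantelli lemma (together with monotonicity and the FKG inequality used to decouple annuli, or a union bound per scale), if $\lambda$ is large enough that the edges along $\mathbb Z$ inside a ``seed'' box are all open with high probability, then almost surely only finitely many annuli around the origin realize the disconnection event $F(\cdot,\cdot)^c$. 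The complement $F(k,K)$ of that event says that in \emph{some} page $\mathbb H^u$, the inner box $\partial\Lambda_k$ is \emph{not} disconnected from $\partial\Lambda_K$ — but to build an infinite connected cluster I need connections that persist and can be chained, not merely ``no disconnection in one page.''

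So the core of the argument is a renormalization/induction on scales of the following flavor. Fix $\lambda$ close enough to $1$; call a box $\Lambda_{\rho_0^{j}}$ (or an annulus) \emph{good} if the middle segment $\Lambda_{\rho_0^j}\cap\mathbb Z$ is entirely open in $\omega$ and if, in each of the annuli at the relevant scales below it, the disconnection event $F(\cdot,\cdot)^c$ fails. On the event that all edges of $\mathbb Z$ in a box are open, these edges are wired together, so the configuration restricted to a single page dominates an FK measure on $\mathbb H$ with wired boundary conditions along $\partial\mathbb H$; I would then use crossing/RSW-type estimates for critical FK on the half-plane $\mathbb H$ with such boundary conditions (valid for $1\le q\le 4$, e.g.\ from \cite{DumSidTas13}) to conclude that, conditionally on goodness, the wired segment at scale $\rho_0^{j}$ is connected inside $\mathbb H^u$ to the wired segment at scale $\rho_0^{j+1}$ with probability bounded below — in fact with probability tending to $1$ as $\rho_0$ and the number of ``tries'' grow, because the only way to \emph{fail} to connect the inner wired arc to the outer box across \emph{all} pages is precisely to realize $F(k,K)^c$, whose probability is summably small. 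Chaining these connections across scales via the FKG inequality (the events ``$\mathbb Z\cap\Lambda_{\rho_0^j}$ open'' and ``inner arc connected to outer arc in some page'' are all increasing) yields, with positive probability, an infinite path starting from the origin; hence $\theta_{\mathbb B_N}(p_c,\lambda,q)>0$.

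Concretely, the steps I would carry out are: (1) extract the exponent $\alpha>1$ and the summable bound $\sum_j \mathbb P^0[F(\rho_0^j,\rho_0^{j+1})^c]<\infty$ from Definition~\ref{d.exp}, using monotonicity of boundary conditions (free is the worst case, so the bound with $\xi=0$ suffices for all boundary conditions). (2) Choose $\lambda\in(0,1)$ close enough to $1$, and a finite seed scale $k_0$, so that $\mathbb P_{\mathbb B_N,p_c,\lambda,q}^1[\text{all edges of }\mathbb Z\cap\Lambda_{k_0}\text{ open}]$ is as close to $1$ as desired; on this event the segment acts as a macroscopic wired arc. (3) Prove a one-step renormalization lemma: conditionally on the middle segments at scales $\rho_0^j$ and $\rho_0^{j+1}$ being open, the probability that these two wired arcs fail to be connected through \emph{every} page is at most $\mathbb P^0[F(\rho_0^j,\rho_0^{j+1})^c]$ (up to the FKG cost of conditioning on the increasing ``segment open'' events), which is $\le\rho_0^{-\alpha}$. (4) Combine via FKG and Borel--Cantelli: with positive probability the segment $\mathbb Z\cap\Lambda_{k_0}$ is open, all larger middle segments are open, and only finitely many of the disconnection events occur, so from some scale on every consecutive pair of arcs is connected in some page — chaining gives an infinite cluster containing the origin.

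The main obstacle I anticipate is making the conditioning in step (3) clean: the event $F(k,K)^c$ is defined under free boundary conditions on $\mathbb B_K$, whereas in the renormalization one is inside a much larger book with the middle segment forced open (hence essentially wired) and with whatever configuration the previous scales produced on the outside. Controlling this requires a careful use of monotonicity (the forced-open middle segment only \emph{helps}, so it cannot increase the disconnection probability, modulo the FKG price of conditioning) together with a spatial Markov / exploration argument to show that conditioning on the outer scales having done their job does not spoil the estimate at the current scale — i.e.\ one needs the disconnection events at well-separated scales to be positively correlated or at least jointly controllable, which is where the choice of a geometric sequence of scales with ratio $\rho_0$ large and the use of FKG on increasing events becomes essential. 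A secondary technical point is ensuring that ``connected to the outer arc in some page'' events at consecutive scales genuinely compose into a single cluster; this is handled by noting that both rely on the \emph{same} wired middle arc at the intermediate scale, so the paths meet there.
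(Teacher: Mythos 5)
Your high-level strategy is the right one: the paper also bootstraps $\alpha(q,N)>1$ into long-range order via a multiscale renormalization. But there is a genuine gap at the center of your argument.

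In steps (2)--(4) you call a scale ``good'' if the full middle segment $\mathbb Z\cap\Lambda_{\rho_0^j}$ is open, and in (4) you conclude on the event that \emph{all} larger middle segments are open. For any $\lambda<1$ this joint event has probability zero: forcing even a single segment of length $\rho_0^{j}$ to be entirely open costs roughly $\lambda^{\rho_0^j}$, and the product over $j$ vanishes. So the conclusion ``there exists $\lambda\in(0,1)$ with $\theta_{\mathbb B_N}(p_c,\lambda,q)>0$'' cannot be reached this way — your scheme effectively proves something only at $\lambda=1$, which is excluded by the proposition. The paper's central idea, which is what replaces your deterministic wiring, is the notion of a \emph{$\theta$-good block}: one which contains a single cluster meeting $\Z$ with (normalized) density at least $\theta$, with $\theta>\tfrac34$ but strictly less than $1$. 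Propagating high-but-imperfect density across scales is what the renormalization inequality (Lemma~\ref{lem:1}),
\[
p_\lambda(CK,\theta - C_0/C)\le \tfrac1{100}\,p_\lambda (K,\theta)+ 6C^2\,p_\lambda(K,\theta)^2,
\]
is for, with carefully chosen scale ratios $C_n$ so that the cumulative density loss $\sum C_0/C_n$ stays bounded away from $\tfrac14$. This is not a technicality that can be conditioned away with FKG; it is the content of the proof and is absent from your sketch.

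A secondary gap is in your step (3) and your closing paragraph. You observe that $F(k,K)^c$ is the disconnection event and hint that ``the paths meet at the wired middle arc,'' but this does not hold once the arc is only dense rather than fully wired: the open circuit in some page whose existence is guaranteed by $F(k,K)$ need not touch the cluster you carried from the inner scale, nor the one you want to hand off to the outer scale. The paper spends two lemmas on exactly this: the \emph{anchoring lemma} (a second-moment argument, run through an explicit monotone coupling with a half-plane model so that quasi-multiplicativity is usable) makes the crossing circuits attach to $\theta$-good blocks on $\mathbb Z$, and the \emph{bridging lemma} turns $\alpha(q,N)>1$ into many independent-enough opportunities to anchor at scales between $K$ and $CK$. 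Without these, Borel--Cantelli on the disconnection events alone does not produce an infinite cluster through the origin — it only tells you annular crossings exist somewhere, not that they chain. Finally, even after the renormalization, the paper still needs a short moment argument (``$0$ is in a cluster of size $\ge\tfrac32 K_n$ with probability $\ge\tfrac38$'') to pass from block density to percolation from the origin; your phrase ``chaining gives an infinite cluster containing the origin'' is skipping this last conversion.
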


\subsection{Arm-exponents in $\H$.}\label{ss.arm}
 
The following three {\em one-arm exponents} in the upper-half plane will help us obtain estimates on the disconnection exponent $\alpha(N,q)$ uniformly in $1\leq q \leq 4$. As they are not known to exist, we define them like the disconnection exponent (in the notation below we ignore the parameter $\lambda$ as it is set to $p_c$):
\bi
\item $\alpha^+_{free}(q)$: the critical exponent for the family 
\[
a_{\rm free}^+(k,K,q):=\mathbb P_{\H,p_c,q}^0[\partial\Lambda_k\stackrel{*}{\longleftrightarrow}\partial\Lambda_K].
\]
\item $\alpha^+_{\C}(q)$ : the critical exponent for the family 
\[
a_{\C}^+(k,K,q):=\mathbb P_{\Z^2,p_c,q}^0[\partial\Lambda_k\stackrel{*}{\longleftrightarrow}\partial\Lambda_K\text{ in }\H].
\]
\item $\alpha^+_{wired}(q)$ : the critical exponent for the family 
\[
a_{\C}^+(k,K,q):=\mathbb P_{\H,p_c,q}^1[\partial\Lambda_k\stackrel{*}{\longleftrightarrow}\partial\Lambda_K].
\]
\ei
Note that with these definitions, the following special cases are known:
\bi
\item[i)] $\alpha^+_{free}(1)=\alpha^+_{\C}(1)=\alpha^+_{wired}(1)=\tfrac13$ as proved in \cite{SW01,PonIkh} respectively for triangular and $\Z^2$ lattices. 
\item[ii)] $\alpha^+_{wired}(2)=\frac 1 2$  (see e.g.~\cite{DHN11}). 
\ei
For future reference (we will use these estimates later on), we write $a_\#^+(K)$ instead of $a_\#^+(0,K)$.
\begin{remark}
Since the free (resp.~wired) boundary conditions are helping (resp.~disadvantaging) a dual connection, we have that $\alpha^+_{free}(q)\le \alpha^+_{\C}(q) \le \alpha^+_{wired}(q)$. We will note use this fact, but for all $1<q\leq 4$, one has 
$
\alpha^+_{free}(q)< \alpha^+_{\C}(q) < \alpha^+_{wired}(q)\,.
$
\end{remark}



\subsection{Proof of Theorem~\ref{thm:FK} given Proposition \ref{thm:FK2}.}\label{ss.unif}

We first prove the following uniform control on $N^*(q)$.
\begin{proposition}\label{pr.unif}
There exists $N_0$ such that for every $1\le q\le 4$, $N^*(q)\le N_0$.
\end{proposition}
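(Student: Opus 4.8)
\textbf{Proof proposal for Proposition~\ref{pr.unif}.} By Proposition~\ref{thm:FK2}, it suffices to produce a single $N_0$ such that $\alpha(q,N_0)>1$ for all $1\le q\le 4$ simultaneously. The plan is to bound the disconnection exponent $\alpha(q,N)$ from below in terms of the one-arm dual exponents in a single page, and then to observe that, for $N$ pages, the event $F(k,K)^c$ forces a dual-open arm from $\partial\Lambda_k$ to $\partial\Lambda_K$ in \emph{each} of the $N$ pages. Since the pages interact only through the gluing line $\Z$, one should be able to use the FK domain-Markov property and positive association (FKG) to decouple the pages and write, roughly,
\[
\mathbb P^0_{\mathbb B_K,p_c,p_c,q}[F(k,K)^c]\ \le\ \Big(\sup_{\#}a_\#^+(k,K,q)\Big)^{\,N-c}
\]
for some small fixed loss $c$ coming from the page $\mathbb H^1$ (which sees the rest of the book as boundary data) — the worst-case single-page boundary condition being the wired one along $\partial\mathbb H$, so the relevant single-page exponent is $\alpha^+_{wired}(q)$. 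This would give $\alpha(q,N)\ge (N-c)\,\underline\alpha$ where $\underline\alpha:=\inf_{1\le q\le 4}\alpha^+_{wired}(q)$, and it remains only to check $\underline\alpha>0$, after which any $N_0$ with $(N_0-c)\underline\alpha>1$ works.

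The first step is therefore to make the decoupling precise. Conditionally on $\omega$ restricted to the edges incident to $\Z$ (and on the induced wiring of the vertices of $\Z$), the configurations inside the $N$ pages are independent FK percolations with boundary conditions on $\partial\mathbb H$ given by that wiring. The event that page $\mathbb H^u$ contains a dual arm from $\partial\Lambda_k$ to $\partial\Lambda_K$ is decreasing, hence by the FKG inequality its conditional probability is maximized when the wiring along $\Z$ is the full wiring — i.e.\ it is at most $a^+_{wired}(k,K,q)$. Independence across $u=2,\dots,N$ then yields the product bound; one page (say $u=1$, identified with $\mathbb H$) must be treated separately since part of its boundary is the outer boundary $\partial\Lambda_K$ with free boundary conditions, but the presence of that page only helps us (it contributes at least a factor $a^+_{free}(k,K,q)$, which we may simply bound by $1$ and absorb into the constant $c$). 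This gives the displayed inequality and hence $\alpha(q,N)\ge (N-1)\,\alpha^+_{wired}(q)$ — in fact with $c$ as small as we wish by using all $N$ pages, so $\alpha(q,N)\ge (N-o(N))\,\underline\alpha$ suffices.

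The second step is the uniform lower bound $\underline\alpha=\inf_{1\le q\le 4}\alpha^+_{wired}(q)>0$. This is where the real content lies: one needs that, uniformly in $q\in[1,4]$, a dual arm in the half-plane with wired boundary conditions decays at least polynomially with a $q$-independent exponent. This follows from the crossing estimates (RSW theory) for FK percolation at criticality that hold \emph{uniformly} in $q\in[1,4]$, as developed in \cite{DumSidTas13} (and the surrounding works on the box-crossing property): uniform RSW gives a uniform upper bound on half-plane one-arm probabilities of the form $a^+_{wired}(k,K,q)\le C(k/K)^{c_0}$ with $C,c_0>0$ not depending on $q$, by the standard iteration over dyadic annuli where at each scale there is a uniformly positive probability of a dual circuit-type obstruction (here a dual crossing of a half-annulus reaching the boundary). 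Known exact values — $\alpha^+_{wired}(1)=1/3$, $\alpha^+_{wired}(2)=1/2$ — are consistent with this but not needed; only the uniform positivity is.

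\textbf{Main obstacle.} The crux is the uniformity in $q$ of the half-plane dual one-arm decay, i.e.\ establishing $\underline\alpha>0$: one must invoke the uniform-in-$q$ RSW/crossing technology and argue carefully that wired boundary conditions along $\partial\mathbb H$ do not destroy the polynomial decay (they are the \emph{least} favorable for a dual arm, so this is exactly the regime RSW must handle) and that the constants can be taken independent of $q\in[1,4]$. Everything else — the FKG-based decoupling of pages, the reduction to a product of single-page arm events, and the final choice of $N_0$ — is soft. I would structure the write-up as: (1) conditional independence of pages and the FKG domination by wired boundary conditions; (2) the resulting bound $\alpha(q,N)\ge (N-1)\alpha^+_{wired}(q)$; (3) $\inf_{q\in[1,4]}\alpha^+_{wired}(q)>0$ via uniform RSW; (4) pick $N_0$ with $(N_0-1)\inf_q\alpha^+_{wired}(q)>1$ and apply Proposition~\ref{thm:FK2}.
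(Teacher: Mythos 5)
Your argument contains a genuine gap, and it is precisely at the step you (correctly) flagged as the most important one. The direction of the FKG comparison is reversed. A dual arm from $\partial\Lambda_k$ to $\partial\Lambda_K$ in a page is a \emph{decreasing} event; since a more-wired boundary condition produces a stochastically \emph{larger} measure, the conditional probability of that decreasing event is \emph{minimized} (not maximized) under full wiring along $\Z$, and it is \emph{maximized} under free boundary conditions. So your reduction gives an upper bound by $a^+_{free}(k,K,q)$ per page, not by $a^+_{wired}(k,K,q)$, and you recover only the trivial bound $\alpha(q,N)\ge N\alpha^+_{free}(q)$. This is exactly the bound the paper records and then discards: it suffices for each fixed $q<4$, but $\alpha^+_{free}(q)$ is expected to vanish as $q\to 4$, so it cannot give a single $N_0$ that works for all $q\in[1,4]$. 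Your step (3), asserting $\inf_{q\in[1,4]}\alpha^+_{wired}(q)>0$ via uniform RSW, is plausible in itself, but it does not connect to step (2) once the monotonicity is corrected.

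A secondary issue is the claimed conditional independence of the pages given the edges incident to $\Z$. This is not the right formulation of the domain Markov property here: the wiring that page $\mathbb H^u$ sees along $\Z$ is determined not only by the $\Z$-edges but by connectivity running through the \emph{other} pages, so conditioning on the $\Z$-edges alone does not decouple the pages. The paper instead explores the pages \emph{successively}: when examining a page, if at least one other page is still unrevealed, then by comparison of boundary conditions the dual-arm probability in the explored page is dominated by the full-plane quantity $a^+_{\C}(k,K,q)$ (the unexplored page plays the role of the lower half-plane). Iterating this for all but the last page yields $\alpha(q,N)\ge (N-1)\alpha^+_{\C}(q)$, and the uniform lower bound $\inf_{q\in[1,4]}\alpha^+_{\C}(q)>0$ from \cite{DumSidTas13} then gives the uniform $N_0$. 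The move from $\alpha^+_{free}$ to $\alpha^+_{\C}$ via the unrevealed-page trick is the idea that is missing from your write-up and is what makes the uniformity in $q$ go through.
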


\begin{proof}
Assuming Proposition \ref{thm:FK2} holds, it is enough to find an integer $N_0$ large enough so that
$
\alpha(q, N_0) >1$ for every $1\le q\le 4$.

The most trivial bound on $\alpha(q,N)$ is obtained as follows. For $F(k,K)$ not to occur, it must be that in each page, $\partial\Lambda_k$ is connected to $\partial\Lambda_K$ in the dual configuration $\omega^*$; see Fig.~\ref{f.eventF}. Using the comparison between boundary conditions, one may split the book into disconnected pages and use that this event has a probability smaller than $C\rho^{-\alpha^+_{free}(q)}$ in each page. This reasoning gives 
\[
\alpha(q,N)\ge N\alpha^+_{free}(q).\]
  It is known from \cite{PonIkh,SW01} that $\alpha_0(1)=\frac13$, so we already obtain at this stage a proof of the upper-bound in item a) of Theorem \ref{thm:FK*}, i.e 
\[
N^*(1)\le 4.
\]
For the remaining $1<q<4$, it was proved in \cite{DumSidTas13} that $\alpha^+_{free}(q)>0$, thus giving the existence of $N=N(q)$ such that $\alpha(q,N)>1$. The problem with this bound is that it deteriorates when $q$ tends to 4, for which $\alpha^+_{free}(4)$ is expected to be equal to 0.
This reasoning would force us to choose a number of pages $N(q)$ tending to $\infty$ as $q\nearrow 4$. 

A slightly better bound is obtained by observing that by successively conditioning in each page, for all but the last page, the probability of having a dual path connecting $\partial\Lambda_k$ is connected to $\partial\Lambda_K$ in a page is smaller than $a^+_{\C}(k,K)$ since there exist at least two undiscovered pages (and therefore by comparison between boundary conditions the occurrence of the connection is smaller than the one  under the full plane measure), which explains why we introduced above the exponent $\alpha^+_{\C}(q)$. (This observation will also be used in the proof of Lemma \ref{l.anchor}). This domination is valid as long as there are at least two remaining pages so we get
\[
\alpha(q,N)\ge (N-1)\alpha^+_{\C}(q).
\]
 This exponent is know from \cite{DumSidTas13} to be larger than some constant $c>0$ uniformly on $1\le q\le 4$. As a consequence, we deduce that $N^*(q)\le N_0$ uniformly in $1\le q\le 4$ which thus proves the content of Proposition \ref{pr.unif}.
\end{proof}

\begin{remark}\label{}
As we will obtain $N^*(q) \leq 2$ by other means when $q>2$ (in Section \ref{s.2dM}), we may have focused here only on the case $1<q<2$ which is slightly  simpler since the bound $\alpha(q,N)\ge N\alpha^+_{free}(q)$ would already be sufficient. Yet we decided to include the proof below which works uniformly in $1\leq q \leq 4$ because it highlights well the different boundary conditions at work near the joint line $\Z$ and because the exponent $\alpha_\C^+$ will also play a key role  later (in the proof of the anchoring Lemma \ref{l.anchor}).
\end{remark}

\begin{remark}
In fact, we expect that as soon as percolation occurs in $\mathbb B_N$, then 
\[
\alpha(q,N)=N\alpha^+_{wired}(q)\,.
\]
This comes from the intuition that the infinite cluster at $p_c$ in $\mathbb B_N$ is staying close to the axis, and that this cluster acts as a wiring of vertices. We will turn this intuition into a proof thanks to the {\em random currents} representation in the special case of $q=2$ in Section \ref{s.Ising}. As $\alpha^+_{wired}(q)$ should be equal to $\tfrac2\pi\arccos(\sqrt q/2)$ (see \cite{Smi10,DMT20}), this is consistent with our results (and predictions) on $N^*(q)$ in Theorem \ref{thm:FK*}. 
\end{remark}

\medskip

\begin{proof}[Proof of Theorem \ref{thm:FK} given Proposition \ref{pr.unif}]
To prove Theorem \ref{thm:FK}, it remains to treat the general case where the edge-density on $\mathbb Z$ is an arbitrary number $\lambda\ge0$. (The same argument also applies to the case where edges along $\Z$ have the same weight $p$ as the other edges). Consider $N$ such that $\alpha(q,N)>1$ and $N'$ such that the process given by the pairs of neighboring edges $x$ and $x'$ in $\mathbb Z$ that are connected to each other in $\Book_{N'}$ is dominating a FK percolation of parameter $\lambda^*$ on $\mathbb Z$ (the existence of this integer $N'$ is easy using finite energy). Then, one can easily check that the restriction to $\mathbb B_N$ of FK percolation with parameter $\lambda$ on $\mathbb B_{N+N'}$ is dominating FK percolation on $\mathbb B_N$ with parameters $p_c$ and ${\lambda^*}$. This concludes the proof. 
\end{proof}

\section{Proof of Proposition~\ref{thm:FK2}}\label{sec:3}

\subsection{Preliminaries.}

Let $S\subset \mathbb B_N$. We call a {\em cluster in $S$} a connected component $C\subset S$ of the graph with vertex-set $S$ and open edges with {\em both} endpoints in $S$. We will use the notion of $K$-{\em block} $B^i_K$  to be the translate by the vector $(iK,0)$ of the union, in each page, of the squares $[-K,K)\times[0,K]$. For simplicity we write $B_K$ instead of $B^0_K$. Given a block $B_K^i$, we write $\mathbf C(B^i_K)$ for the cluster {\em in} $B^i_K$ which has the largest intersection with $\Z$ (when there is more than one, pick one according to a deterministic rule). 

We will need the following two definitions.

\begin{definition}[$\theta$-bad block]
A $K$-block $B_K^i$  is $\theta$-{\em good} if $|\mathbf C(B_K^i)\cap \mathbb Z|\ge 2\theta K$.   When a block is not $\theta$-good, we call it $\theta$-{\em bad}. 
Introduce
\[
p_\lambda(K,\theta):=\mathbb P_{B_K,\lambda}^0[B_K\text{ $\theta$-bad}\,].
\] 
\end{definition}

\begin{definition}[bridged block]
A $K$-block $B_K^i$ is {\em bridged in $B_{CK}$} if there exist $-C\le i_-\le i-2$ and $i+2\le i_+\le C$ such that 
\begin{itemize}[noitemsep,nolistsep]
\item $B_K^{i_-}$ and $B_K^{i_+}$ are $\tfrac34$-good.
\item $\mathbf C(B_K^{i_-})$ and $\mathbf C(B_K^{i_+})$ are connected together in $B_{CK}\setminus B_K^i$. \end{itemize}
Introduce
\[
q_\lambda(K,C,i):=\mathbb P_{B_{CK},\lambda}^0[B_K^i\text{ not bridged in $B_{CK}$}].
\]
\end{definition}

\subsection{Bound on $q_\lambda(K,C,i)$.}

The core of the proof of our theorem will be the following proposition.

\begin{proposition}\label{prop:not bridged}
For every $1\le q\le 4$ and $\alpha<\alpha(q,N)$, there exists
 $D_0(\alpha)=D_0(\alpha,q,N)>0$  such that \begin{align}\label{eq:upper q}
q_\lambda(K,C,i)\le \frac{D_0(\alpha)}{(C-|i|)^{\alpha}}~+~2Cp_\lambda(K,\theta)
\end{align}
for every $\lambda\ge p_c$, $N\ge1$, $\theta>\tfrac34$, and $K,C\ge 2$.
\end{proposition}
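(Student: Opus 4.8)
The goal is to bound $q_\lambda(K,C,i)$, the probability that the block $B_K^i$ is not bridged in $B_{CK}$, by a term decaying like a power of $C-|i|$ plus an error coming from the probability that a block is $\theta$-bad. The plan is to reduce the event ``$B_K^i$ not bridged'' to two sources of failure: either two ``far enough apart'' $\tfrac34$-good blocks $B_K^{i_-}$ and $B_K^{i_+}$ on opposite sides of $i$ fail to be connected together in $B_{CK}\setminus B_K^i$, or there are not enough $\tfrac34$-good blocks on one of the two sides of $i$ in the first place. The second contingency is controlled crudely: the number of $\theta$-bad (a fortiori $\tfrac34$-bad) blocks among the at most $2C$ relevant positions has expectation at most $2Cp_\lambda(K,\theta)$ by a union bound and monotonicity in boundary conditions (comparing $\mathbb P^0_{B_{CK},\lambda}$ restricted to a block to $\mathbb P^0_{B_K,\lambda}$), which accounts for the $2Cp_\lambda(K,\theta)$ term. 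So on the complementary event there exist valid candidates $i_\pm$ with $\tfrac34$-good blocks, and what remains is to show that two such clusters $\mathbf C(B_K^{i_-}), \mathbf C(B_K^{i_+})$, each occupying at least $\tfrac32 K$ vertices of $\mathbb Z$ inside their blocks, are very likely to be connected through $B_{CK}\setminus B_K^i$.

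The heart of the argument is therefore a connection estimate, and here is where the disconnection exponent $\alpha(q,N)$ enters. If $\mathbf C(B_K^{i_-})$ and $\mathbf C(B_K^{i_+})$ are \emph{not} connected in $B_{CK}\setminus B_K^i$, then there must be a dual barrier separating them. I would set this up so that the separating interface forces the event $F(k,K')^c$ (in the notation of Definition~\ref{d.exp}) for some annulus of aspect ratio comparable to $C-|i|$: roughly, around the block $B_K^i$ one finds concentric ``book-annuli'' and a disconnection in every page at the appropriate scale, which is precisely what $\mathbb P^0_{B_{\cdot},p_c,p_c,q}[F(\cdot,\cdot)^c]$ measures. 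By the definition of the critical exponent and the hypothesis $\alpha<\alpha(q,N)$, such an event at aspect ratio $\rho := C - |i|$ has probability at most $D_0(\alpha)\,\rho^{-\alpha}$ for $\rho$ large enough (and for small $\rho$ one simply enlarges the constant $D_0(\alpha)$ so the bound is trivially true). One has to be a little careful: the two good clusters are large along $\mathbb Z$ (at least $\tfrac32 K$ sites each), so they cannot both ``hide'' near the axis on the same side of a short dual crossing; this largeness, combined with the FKG inequality and comparison between boundary conditions (to pass from $\mathbb P^0_{B_{CK},\lambda}$ to the disconnection probability at $\lambda = p_c$), is what lets one genuinely dominate the non-connection event by an $F^c$-type event. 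Summing/choosing over the finitely many candidate pairs $(i_-,i_+)$ costs only polynomial-in-$C$ factors that can be absorbed or handled by taking $\alpha$ slightly smaller than $\alpha(q,N)$ and enlarging $D_0$.

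Concretely, the steps in order: (1) decompose $\{B_K^i\text{ not bridged}\}$ according to whether suitable $\tfrac34$-good blocks $B_K^{i_\pm}$ exist on both sides; (2) bound the ``no good block on some side'' part by $2Cp_\lambda(K,\theta)$ via a union bound plus monotonicity in boundary conditions; (3) on the remaining event, pick the innermost $\tfrac34$-good blocks $i_-$ on the left and $i_+$ on the right and observe that $i - i_- $ and $i_+ - i$ are each at least $2$ and their sum is at least something like $C-|i|$ up to constants in at least one direction — actually it suffices that $i_+ - i_- \ge$ const$\cdot(C-|i|)$ when the good blocks straddle the full window; (4) show geometrically that failure of $\mathbf C(B_K^{i_-}) \leftrightarrow \mathbf C(B_K^{i_+})$ in $B_{CK}\setminus B_K^i$ implies $F^c$ at aspect ratio $\asymp C-|i|$ around $B_K^i$, using the size of the good clusters to rule out degenerate separations; (5) invoke the definition of $\alpha(q,N)$ and comparison of boundary conditions to bound this by $D_0(\alpha)(C-|i|)^{-\alpha}$; (6) collect terms. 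The main obstacle is step (4): turning ``two large axis-clusters are disconnected inside the book, avoiding the central block'' into a clean occurrence of the page-by-page dual disconnection event $F(k,K)^c$ of Definition~\ref{d.exp}, while making sure the good clusters' macroscopic intersection with $\mathbb Z$ really forces the separating dual paths to live in \emph{every} page and at the right scale — the topology of the book near the gluing line $\mathbb Z$ makes this more delicate than the planar case, and getting the aspect ratio $C-|i|$ (rather than just $C$) requires locating the annuli around $B_K^i$ rather than around the origin.
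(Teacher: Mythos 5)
Your high-level decomposition (either a block is $\theta$-bad, or two good blocks on opposite sides of $B_K^i$ fail to connect) is a reasonable starting point and does match the spirit of the proof, but the crucial step~(4) is exactly where the argument in the paper is different and where your proposal has a genuine gap.

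The implication you want — that ``$\mathbf C(B_K^{i_-})\not\leftrightarrow\mathbf C(B_K^{i_+})$ in $B_{CK}\setminus B_K^i$'' can be dominated by an occurrence of $F(k,K')^c$ at aspect ratio $\asymp C-|i|$ — is not true, and FKG/comparison of boundary conditions cannot rescue it. Observe that $F(k,K')$ (centred at $(iK,0)$) only asserts that \emph{some} page contains a primal circuit disconnecting the inner from the outer boundary of the half-annulus. That circuit touches $\mathbb Z$ at two points $a<iK<b$, but it is an unconstrained circuit: there is no reason for it to be connected to $\mathbf C(B_K^{i_-})$ or $\mathbf C(B_K^{i_+})$, which sit outside the annulus near the axis. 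The circuit can be a short loop just above $\mathbb Z$ that never reaches either good cluster, in which case $F$ holds and the bridging nonetheless fails. Conversely, ``not bridged'' therefore does \emph{not} imply $F^c$; your ``dual barrier'' separating the two clusters need not produce a dual radial crossing of any annulus in every page. So the quantity $\mathbb P^0[F(k,K')^c]$ entering the definition of $\alpha(q,N)$ simply does not control the disconnection of the two specific clusters.

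What the paper does instead, and what is genuinely missing from your proposal, is a two-lemma mechanism. First, the Bridging Lemma (Lemma~\ref{pr.OnePagePrecise}) iterates the disconnection estimate over $\asymp\log(C-|i|)$ dyadic annular scales centred at $B_K^i$ to ensure that, with probability $\ge 1-(C-|i|)^{-\alpha}$, there are many nested circuits $\Gamma^u(k)$ at different scales (this is where $\alpha(q,N)$ enters, and it is used at each scale separately, not once at aspect ratio $C-|i|$). Second — and this is the ingredient your sketch omits entirely — the Anchoring Lemma (Lemma~\ref{l.anchor}) shows that each such circuit, once revealed, has a constant-probability chance of being \emph{connected} to a $\theta$-good block on each side; this is done by a second-moment argument run under an increasing coupling with a planar measure, precisely because (as the paper's Remark after the Anchoring Lemma explains) a direct second-moment count in the book would require a quasi-multiplicativity/mixing statement for arm events in $\mathbb B_N$ which is not available. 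The bridging of $B_K^i$ is then forced when a single circuit anchors on both sides, and multiplying the constant failure probability over the $\asymp(\log(C-|i|))^2$ available triples $(k,u,i)$ yields a $\exp[-c(\log(C-|i|))^2]$ bound that is absorbed into $D_0(\alpha)(C-|i|)^{-\alpha}$. Without the anchoring step your chain of implications breaks, and the quantity $\alpha(q,N)$ never gets attached to the actual bridging event. (As a side remark, the $2Cp_\lambda(K,\theta)$ term in the statement is mostly there to make the bound trivial when $p_\lambda(K,\theta)\ge\tfrac1{2C}$; the substantive part of the proof assumes $p_\lambda(K,\theta)\le\tfrac12$ and does not really spend it on a union bound over bad blocks the way you propose.)
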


The proof of Proposition~\ref{prop:not bridged} is divided into two independent lemmata, referred to as the {\em anchoring lemma} and the {\em bridging lemma}.

For $M,K\ge2$, introduce the set $A(M,K)$ to be the union of the half-annulus $\mathbb H\cap\Lambda_{2MK}\setminus\Lambda_{MK}$ and the blocks $B_K^j$ with $j\in(M,2M)$. For a set $\gamma$, introduce the boundary condition $\gamma$ to be the wired boundary condition on $\gamma$, and free elsewhere (see Fig.~\ref{fig:anchoring}).

\begin{figure}[!htp]
\begin{center}
\includegraphics[width=1.00\textwidth]{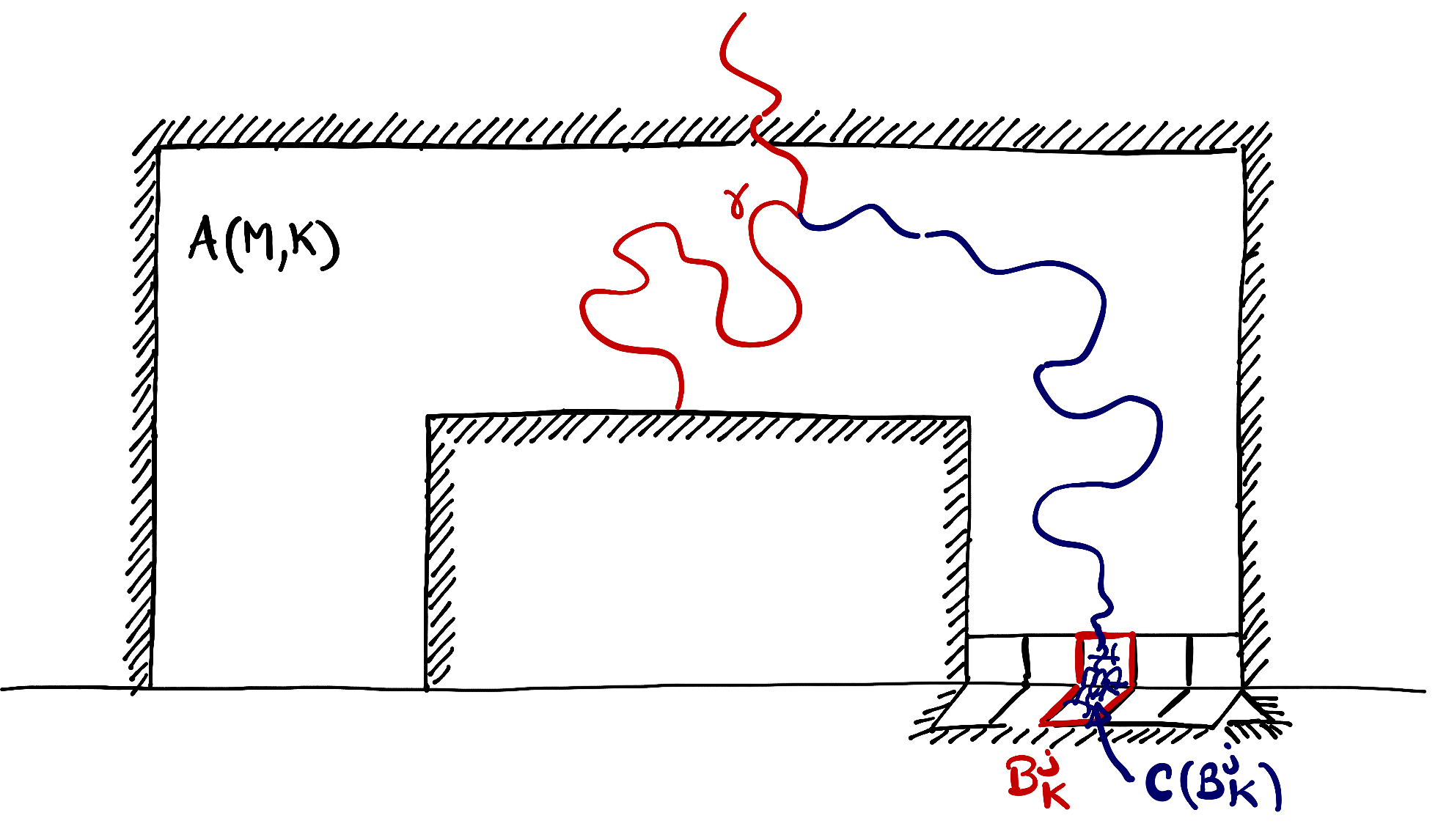}
\end{center}
\caption{A picture of $A(M,K)$ and the path $\gamma$, as well as the event under consideration in the next lemma. The boundary condition $\gamma$ corresponds to wired on the path $\gamma$ and free on the dashed area.}\label{fig:anchoring}
\end{figure}

\begin{lemma}[Anchoring Lemma]\label{l.anchor}
There exists $c_\mathrm{anchor}>0$ such that for every $\lambda\ge p_c$, every integers $K,M$, every $\theta>\tfrac34$, and every path $\gamma$ from $\partial \Lambda_{MK}$ to $\partial\Lambda_{2MK}$ staying above $(0,K)+\mathbb Z$,
\[
\mathbb P_{A(M,K),\lambda}^\gamma[\exists j\in(M,2M): B_K^j\text{ $\theta$-good }\&\,\mathbf C(B_K^j)\leftrightarrow \gamma\text{ in }\mathbb H] \geq c_\mathrm{anchor} (1-p_\lambda(K,\theta))^2.
\]

\end{lemma}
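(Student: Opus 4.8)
The goal is to show that, inside the region $A(M,K)$ with the boundary condition $\gamma$ (wired along $\gamma$, free elsewhere), with probability bounded below by a constant times $(1-p_\lambda(K,\theta))^2$, there is a block $B_K^j$ with $j\in(M,2M)$ that is $\theta$-good and whose large cluster $\mathbf C(B_K^j)$ is connected to $\gamma$ inside $\mathbb H$. The strategy I would follow is a standard ``explore-then-connect'' argument, exploiting positive association (FKG) and the fact that on scale $MK$ the region $A(M,K)$ contains order $M$ consecutive $K$-blocks next to which the path $\gamma$ runs.

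First I would observe that, by FKG and the comparison between boundary conditions, I may lower the boundary condition to free outside $\gamma$ — which is already the stated setup — and work under $\mathbb P^\gamma_{A(M,K),\lambda}$. The first step is to pick a single ``reference'' block, say $B_K^j$ with $j=\lfloor 3M/2\rfloor$, and note that $\mathbb P^\gamma[B_K^j\text{ is }\theta\text{-good}] \ge 1-p_\lambda(K,\theta)$: indeed $B_K^j$ (restricted to the single page $\mathbb H^1$ in whose copy of $\mathbb H$ the path $\gamma$ lives, or all pages) receives boundary conditions at least as favorable as the free boundary conditions used in defining $p_\lambda(K,\theta)$, so by monotonicity the probability that $|\mathbf C(B_K^j)\cap\mathbb Z|\ge 2\theta K$ is at least $1-p_\lambda(K,\theta)$. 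That accounts for one factor of $(1-p_\lambda(K,\theta))$. Actually, to be safe one should use a block far enough from the annular boundary $\partial\Lambda_{MK}\cup\partial\Lambda_{2MK}$ so that a full copy of $B_K$ sits inside $A(M,K)$; $j\approx 3M/2$ works since $B_K^j\subset \mathbb H\cap(\Lambda_{2MK}\setminus\Lambda_{MK})$ once $jK$ is comfortably between $MK$ and $2MK$.

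Next, conditionally on $B_K^j$ being $\theta$-good, I would connect $\mathbf C(B_K^j)$ to $\gamma$ inside $\mathbb H$. Here is where the geometry matters: $\mathbf C(B_K^j)$ contains at least $2\theta K \ge \tfrac32 K$ vertices of $\mathbb Z\cap B_K^j$, so it contains a vertex $x\in\mathbb Z$ within the horizontal extent $[(j-1)K,(j+1)K)$ roughly. The path $\gamma$ runs from $\partial\Lambda_{MK}$ to $\partial\Lambda_{2MK}$ above the line $(0,K)+\mathbb Z$; in particular $\gamma$ has to cross, inside $\mathbb H$, the vertical strip $[(j-1)K,(j+1)K)\times[0,\text{height}]$ at some point — or at worst we can use that $\gamma$ separates, so there is a ``box'' of size comparable to $K$ (or to the local scale of $\gamma$) adjacent to both $\mathbf C(B_K^j)$ and to $\gamma$. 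The cleanest route is: using RSW/box-crossing estimates for FK percolation with $1\le q\le 4$ (which hold uniformly in $q$ in this range, by \cite{DumSidTas13}), with probability bounded below by an absolute constant there is an open path in $\mathbb H$ inside a box of side $\asymp K$ placed so as to join the big cluster of $B_K^j$ to a point on $\gamma$ — or, if $\gamma$ is far above, to iterate box crossings up a logarithmic or bounded number of scales. The second factor $(1-p_\lambda(K,\theta))$ in the statement is a bit generous and could even be absorbed; I suspect the authors keep it because one may actually want to insist that \emph{two} blocks (an ``incoming'' and an ``outgoing'' one, matching the $i_\pm$ structure of the bridging definition) are simultaneously $\theta$-good, and FKG then gives the product lower bound.

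The main obstacle, and the step I would spend the most care on, is the connection of $\mathbf C(B_K^j)$ to $\gamma$ \emph{through $\mathbb H$ alone}, uniformly in the (adversarial) choice of $\gamma$ and uniformly in $\lambda\ge p_c$ and $K,M$. One has to control the situation where $\gamma$ wiggles and stays high above $\mathbb Z$: then the connection box may need to have aspect ratio growing with the height of $\gamma$, and a naive RSW estimate degrades. The fix is to exploit that $\gamma$ itself, being wired, acts as a long connected ``target'': by a standard argument (e.g.\ a ``gluing along a well'' / second-moment-free union bound over the $\asymp M$ blocks $B_K^{j'}$, $j'\in(M,2M)$), for \emph{at least one} of these order-$M$ candidate blocks the vertical distance from $B_K^{j'}$ to the nearest point of $\gamma$ is $O(K)$, because $\gamma$ connects $\partial\Lambda_{MK}$ to $\partial\Lambda_{2MK}$ and hence cannot remain far above every block. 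For that particular $j'$ one gets a genuinely bounded-aspect-ratio crossing, hence a constant lower bound on the connection probability, and combining with the $\theta$-good probability (which is $\ge 1-p_\lambda(K,\theta)$ for each $j'$, and one uses FKG plus a union-type argument to get the existential statement) yields the claim with $c_{\mathrm{anchor}}$ absolute. I would also need to quote the uniform RSW input and the finite-energy/FKG manipulations, all of which are available from the cited works; no new ideas beyond careful bookkeeping of boundary conditions are required for this lemma.
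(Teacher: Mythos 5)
Your proposal has a genuine gap. The central geometric claim, that ``for at least one of these order-$M$ candidate blocks the vertical distance from $B_K^{j'}$ to the nearest point of $\gamma$ is $O(K)$, because $\gamma$ connects $\partial\Lambda_{MK}$ to $\partial\Lambda_{2MK}$ and hence cannot remain far above every block,'' is simply false. Nothing in the hypotheses prevents $\gamma$ from being, say, the vertical segment $\{0\}\times[MK,2MK]$, which stays at height $\ge MK$ along its entire length and hence at distance $\asymp MK$ from every single block $B_K^j$. Consequently the ``bounded-aspect-ratio RSW crossing'' step is not available: connecting $\mathbf C(B_K^j)$ to $\gamma$ through $\mathbb H$ is generically a half-plane one-arm event over a scale of order $MK$, whose probability is of order $a^+_\C(MK)$, not an absolute constant. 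Iterating box crossings up $\log M$ scales, as you suggest as a fallback, produces $c^{\log M}$, which decays polynomially in $M$ and again is not a constant. A union bound over the $\asymp M$ blocks then only gives an upper bound on the probability that one of them succeeds; to convert ``each of $M$ blocks has a small chance'' into ``at least one succeeds with constant probability'' one needs second-moment control, and that is precisely what is missing from the explore-then-connect plan.

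The paper's actual argument is a second-moment computation, but with a twist that your proposal also misses and that the authors single out explicitly in the remark following the proof: a direct second moment on the number of points of $\Z$ connected to $\gamma$ under the book measure $\mathbb P^\gamma_{A(M,K),\lambda}$ would require quasi-multiplicativity / a mixing statement for arm events in a page under the full book FK measure, which is not available because pages interact through the gluing line. To circumvent this the authors construct an increasing coupling $(\omega',\omega)$ with $\omega\sim\mathbb P^\gamma_{A(M,K),\lambda}$ on the book and $\omega'\le\omega$ distributed as critical FK restricted to the first two pages (a planar graph), and they count pairs $(j,x)$ with $B_K^j$ $\theta$-good in $\omega$, $x\in\mathbf C(B_K^j)(\omega)$, and $x$ connected to $\gamma$ \emph{in $\omega'$}, so that the second moment can be estimated with the standard planar quasi-multiplicativity toolbox. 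Incidentally, the factor $(1-p_\lambda(K,\theta))^2$ does not come from requiring two blocks to be good, as you guess; it is a Cauchy--Schwarz artefact: the first moment of the counting variable carries one factor $(1-p_\lambda(K,\theta))$, and Paley--Zygmund / Cauchy--Schwarz squares it.
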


\begin{proof}
Consider the increasing coupling between random-cluster models (see e.g.~\cite{Gri06} for details) $\mathbf P$ between two configurations $\omega'\le \omega$ with 
\[
\omega\sim \mathbb P_{A(M,K),\lambda}^\gamma\quad\text{and}\quad\omega'\sim \mathbb P_{A'(M,K),p_c}^\gamma\quad,\]
 where $A'(M,K)$ is the restriction of $A(M,K)$ to the first two pages (it is a subset of the plane), defined as follows (see for example \cite{DumRaoTas17}). The coupled configuration $(\omega',\omega)$ is written as an increasing function $F$ of i.i.d.~uniform variables in $U_e \in [0,1]$ which are indexed by the edges of $A(M,K)$. To define 
 \[
 F : [0,1]^{E(A(M,K))} \to \{0,1\}^{E(A'(M,K))} \times \{0,1\}^{E(A(M,K))},\] we proceed inductively: the variables $(U_e:e\in A'(M,K))$ are used one at a time to sample $\omega'_e \leq \omega_e$ given the values of the former edges that have been fixed . Once all edges $e\in A'(M,K)$ have been fixed, the remaining variables $(U_e: e\notin A'(M,K))$ are used to sample the remaining edges for $\omega$. 

Define now $\mathbf N$ to be the number of pairs $(j,x)$ with $j\in [\tfrac{5M}4,\tfrac{7M}4]$ and $x\in\mathbb Z$ such that  
\begin{itemize}
\item $B_K^j$ is $\theta$-good in $\omega$;
\item $x\in \mathbf C(B_K^j)(\omega)$;
\item $x$ is connected to $ \gamma$ in $\omega'\cap\mathbb H$.
\end{itemize} 
The fact that $F$ is increasing implies FKG property for $(\omega',\omega)$, which itself gives
  \begin{align*}\label{}
\mathbf E[\mathbf N] & = \sum_{j=5M/4}^{7M/4}\sum_{x\in \mathbb Z} \mathbf P[B_K^j\text{ $\theta$-good in }\omega,x \in \mathbf{C}(B_K^j)(\omega),x\longleftrightarrow \gamma\text{ in }\omega'\cap \mathbb H] \\
& \ge  \sum_{j=5M/4}^{7M/4}\sum_{x\in \mathbb Z} \mathbf P[B_K^j\text{ $\theta$-good in }\omega,x \in \mathbf{C}(B_K^j)(\omega)]\mathbf P[x\longleftrightarrow \gamma\text{ in }\omega'\cap \mathbb H].
\end{align*}
On the one hand, standard crossing estimates and mixing properties of the critical FK percolation with $1\le q\le 4$ give that there exists $c_0>0$ such that 
\[
\mathbf P[x\longleftrightarrow \gamma\text{ in }\omega'\cap \mathbb H]=\mathbb P_{D'}^\gamma[x\longleftrightarrow \gamma\text{ in }\mathbb H]\ge c_0\;  a^+_{\C}(MK)\,.
\]
On the other hand, the definition of $\theta$-good $K$-blocks immediately gives that
\begin{align*}
\sum_{x\in \mathbb Z}\mathbf P[B_K^j\text{ $\theta$-good in }\omega,x \in \mathbf{C}(B_K^j)(\omega)]&=\mathbb E_{B_K^j}[|\mathbf{C}(B_K^j)(\omega)|\mathbbm 1_{B_K^j\text{ $\theta$-good}}]\\
&\ge 2\theta K(1-p_\lambda(K,\theta)).
\end{align*}
Altogether, we deduce the following lower bound on the first moment of $\mathbf N$:
\[
\mathbf E[\mathbf N] \ge c_0\theta MK a^+_{\C}(MK)(1-p_\lambda(K,\theta)).
\]
We now turn to a bound on the second moment. By dropping the first condition, replacing the second by $x\in B_K^j$, and observing that each $x$ belongs to at most 2 blocks, we obtain that 
\begin{align*}\label{}
\mathbf E[\mathbf N^2] & \le 4\sum_{x,y} \mathbb P_{D'}^{\mathrm{mix}}[x,y \longleftrightarrow \gamma\text{ in }\mathbb H].
\end{align*}
A standard application of crossing probabilities and quasi-multiplicativity, see e.g.~\cite{??}, shows that 
\begin{align*}\label{}
\mathbf E[\mathbf N^2] & \le C_0MK \, \sum_{k=1}^{MK} \frac{a^+_{\C}(MK)^2}{a^+_{\C}(k,MK)}\le C_1(MK)^2 \,a^+_{\C}(MK)^2.
\end{align*}
Cauchy-Schwarz inequality implies that the probability that $\mathbf N>0$ is bounded from below by $c_1\theta^2 (1-p_\lambda(K,\theta))^2$. Since $\mathbf N>0$ implies the event under consideration, the claim is proved.
\end{proof}

\begin{remark}
At first sight, a natural way to try proving the Anchoring Lemma would be to run a direct second moment argument on the number, say $\mathbf M$, of points on the middle line $\Z$ which are connected to $\gamma$ in the first page $\H(=\H^1)$ instead of considering the more complicated $\mathbf N$. This works well in the $q=1$ case, but as soon as $q>1$ this strategy seems difficult to implement. Indeed, the first moment $\Eb{\mathbf M}$ would involve in this case the one-arm event in a page $\H$ but for the FK measure in the full book graph $\Book_N$. So far so good, but difficulties arise when controlling $\Eb{\mathbf M^2}$ as a quasi-multiplicativity statement for this arm event would be needed. One way to achieve this would be to prove a version of the \textbf{mixing lemma} (as in \cite{DHN11} in the plane) for the FK measure on the book $\Book_N$. This does not seem straightforward as different pages  may interact via the joint line $\Z$.
This is the reason why we introduce in the proof above a suitable coupling argument in order to transfer the problem to a setting where one can apply a more standard second moment method.
\end{remark}

We now turn to the Bridging lemma.
For integers $K,D,\rho>0$ and a small real number $\eta>0$, set $R_k:=K(2\rho)^k$ and let $F(K,DK,\rho,\eta)$ be the event that there are at least $\eta\log D$ integers $k\ge0$ such that 
 $R_{k+1}\le DK$
and $F(R_k,\tfrac12R_{k+1})$ occurs. 

 \begin{lemma}[Bridging Lemma]\label{pr.OnePagePrecise}
For every $\alpha<\alpha(q,N)$, there exist $\eta=\eta(\alpha)>0$ and an integer $\rho=\rho(\alpha)>0$ such that for every $\lambda\ge p_c$ and $K,D\ge2$ large enough, 
\[
\mathbb P_{B_{DK},\lambda}^0[F(K,DK,\rho,\eta)]\ge 1-\frac1{D^{\alpha}}.
\]
\end{lemma}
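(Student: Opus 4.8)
The plan is to reduce the statement to a large-deviation estimate for a sum of almost-independent Bernoulli variables indexed by scales. First I would fix $\alpha<\alpha(q,N)$ and pick $\alpha'\in(\alpha,\alpha(q,N))$; by Definition~\ref{d.exp} of the disconnection exponent there is $\rho_0$ such that $\mathbb P_{B_{R},p_c,p_c,q}^0[F(R,\rho R)^c]\le \rho^{-\alpha'}$ for all $R$ and $\rho\ge\rho_0$. The rough idea: along the dyadic-type scales $R_k=K(2\rho)^k$, for each $k$ with $R_{k+1}\le DK$ let $X_k:=\mathbf 1[F(R_k,\tfrac12R_{k+1})]$. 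We want to show that with probability $\ge 1-D^{-\alpha}$ at least $\eta\log D$ of the roughly $n:=\log D/\log(2\rho)$ available $X_k$'s equal $1$.

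\textbf{Key steps, in order.} (1) \emph{A uniform lower bound on each $\mathbb P[X_k=1]$ conditionally on the outer scales.} Using the comparison between boundary conditions and the fact that $F(R_k,\tfrac12 R_{k+1})$ is a decreasing event depending only on edges inside $\Lambda_{R_{k+1}/2}$, I would show that conditionally on the configuration outside $\Lambda_{R_{k+1}/2}$ one has $\mathbb P[X_k=1\mid\mathcal F_{k}^{\mathrm{out}}]\ge \mathbb P_{B_{R_{k+1}/2},\lambda}^1[F(R_k,\tfrac12R_{k+1})]$; then comparing wired on the book to free and to the one-page disconnection event, and choosing $\rho$ large (depending only on $\alpha'$, hence on $\alpha$) so that $\rho^{-\alpha'}\le \tfrac12$, I get a bound like $\mathbb P[X_k=1\mid\mathcal F_k^{\mathrm{out}}]\ge p_0$ for some $p_0=p_0(\alpha)>0$ — crucially independent of $\lambda\ge p_c$, of $K$, of $D$, and of the conditioning. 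Here one uses that $F(R_k,\tfrac12R_{k+1})$ only requires the disconnection to happen in \emph{one} page, and that even under worst-case (wired) exterior boundary conditions a page still supports a dual disconnection circuit with probability bounded below once the aspect ratio $\rho$ is large, via RSW for critical FK with $1\le q\le 4$.

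(2) \emph{Stochastic domination by i.i.d.\ Bernoulli and a Chernoff bound.} By revealing the scales from the outside in (largest $R_{k+1}$ first), step (1) shows the sequence $(X_k)$ stochastically dominates a sequence of i.i.d.\ $\mathrm{Bernoulli}(p_0)$ variables. With $n=\lfloor\log D/\log(2\rho)\rfloor\gtrsim c(\rho)\log D$ such scales, a standard Chernoff/large-deviation bound gives $\mathbb P[\sum X_k < \tfrac{p_0}{2}n]\le e^{-cn}\le D^{-c'}$. Choosing $\eta:=\tfrac{p_0}{2}\cdot\tfrac{1}{\log(2\rho)}$ (or any smaller constant) makes $\tfrac{p_0}{2}n\ge \eta\log D$, and choosing $\rho$ (already fixed) and then $D$ large enough makes $c'\ge\alpha$, yielding $\mathbb P_{B_{DK},\lambda}^0[F(K,DK,\rho,\eta)]\ge 1-D^{-\alpha}$ for all $K\ge 2$ and $D$ large. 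Both $\eta$ and $\rho$ depend only on $\alpha$ (through $p_0$ and $\alpha'$), as required.

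\textbf{Main obstacle.} The delicate point is step (1): getting the conditional lower bound $p_0$ to be genuinely \emph{uniform} in $\lambda$ and in the exterior boundary conditions. The large $\lambda$ along $\mathbb Z$ helps connectivity, which is the wrong direction for a \emph{disconnection} event, so one must check that even with all edges of $\mathbb Z$ wired (the extreme case) a single page still disconnects $\partial\Lambda_{R_k}$ from $\partial\Lambda_{R_{k+1}/2}$ with probability bounded below — i.e.\ one needs the disconnection event $F(R_k,\tfrac12R_{k+1})$ phrased so that it tolerates this wiring (the definition only asks for one page with a disconnecting dual path, and in the free/book measure $\mathbb P^0_{B_{R_{k+1}/2},\lambda}$ a dual circuit in an unconstrained page is produced by RSW regardless of what happens along $\mathbb Z$). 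One also has to be careful that $F(R_k,\tfrac12R_{k+1})$ is measurable with respect to edges strictly inside $\Lambda_{R_{k+1}/2}$ so that the outside-in revealing and the FKG/monotonicity comparisons are legitimate; this is why the scales are taken with the factor $\tfrac12$ and a large gap $2\rho$ between consecutive $R_k$. Once the uniform $p_0$ is in hand, the rest is the routine Chernoff estimate.
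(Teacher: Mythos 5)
Your overall scheme---reveal the annuli outside-in, dominate the count of successful scales by a binomial, finish with a concentration bound---is the same as the paper's. But you have the monotonicity of $F$ backwards, and this is not a peripheral slip: it is the reason the lemma closes. The event $F(k,K)$ asks for an \emph{open} circuit, in \emph{some} page, disconnecting $\partial\Lambda_k$ from $\partial\Lambda_K$; it is therefore an \emph{increasing} event. (It is the complement $F(k,K)^c$, a dual-open crossing in \emph{every} page, that is decreasing.) Hence free boundary conditions and $\lambda=p_c$ are the \emph{worst} case for $F$, not wired boundary conditions and large $\lambda$; this is exactly why the disconnection exponent $\alpha(q,N)$ is defined via the free measure $\mathbb P^0_{B_{K},p_c,p_c,q}$. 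Once one has $\mathbb P^0_{B_{R_{k+1}},p_c}[F(R_k,\tfrac12 R_{k+1})^c]\le \rho^{-\alpha'}$ on a clean block, the spatial Markov property together with monotonicity in boundary conditions and in $\lambda$ immediately yields $\mathbb P[X_k=1\mid\mathcal F^{\mathrm{out}}_k]\ge 1-\rho^{-\alpha'}$ under $\mathbb P^0_{B_{DK},\lambda}$ for any $\lambda\ge p_c$ and any exterior conditioning. Your ``main obstacle'' (uniformity over wired exterior and large $\lambda$) simply dissolves; no RSW input is needed, and the paper in fact begins by reducing to $\lambda=p_c$ by this very monotonicity.

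The misidentification also breaks your quantitative step (2). Having retreated to a per-scale lower bound $p_0$ that is merely bounded away from $0$ (``$\ge p_0$ with $\rho^{-\alpha'}\le\tfrac12$'') rather than $1-\rho^{-\alpha'}$, the Chernoff estimate produces $\mathbb P[\sum_k X_k<\eta\log D]\le D^{-c'}$ with $c'\asymp c(p_0)/\log(2\rho)$, which \emph{decreases} as $\rho$ grows; you cannot simultaneously take $\rho$ large enough for the scale separation and keep $c'\ge\alpha$, and $c'$ certainly does not improve ``by taking $D$ large,'' since $c'$ is the exponent of $D$. With the correct per-scale success probability $1-\rho^{-\alpha'}$ the failure count is stochastically dominated by $\mathrm{Binom}(n,\rho^{-\alpha'})$ with $n\approx\log D/\log(2\rho)$, and a union bound gives exponent roughly $\alpha'\log\rho/\log(2\rho)$, which tends to $\alpha'>\alpha$ as $\rho\to\infty$; choosing $\rho$ large and then $\eta$ small closes the estimate, which is exactly the paper's proof.
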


\begin{proof}
By monotonicity, it suffices to show the result for $\lambda=p_c$. Fix $\alpha(q,N)>\beta>\alpha$. By definition of $\alpha(q,N)$, there exists $\rho=\rho(\beta)$ such that for every $K$ large enough and $k\ge0$,
\begin{equation}\label{eq:quantitative}
\mathbb P_{B_{R_{k+1}},p_c}^0[F(R_k,\tfrac12R_{k+1})^c]\le \rho^{-\beta}.
\end{equation}

By conditioning on the configuration outside $B_{R_{k+1}}$, the spatial Markov property and the comparison between boundary conditions combined with the previous displayed equation implies that the probability that $F(R_k,\tfrac12R_{k+1})$ occurs is larger than $1-\rho^{-\beta}$.
 In particular, the number of integers $k$ with $R_{k+1}\le DK$ such that $F(R_k,\tfrac12R_{k+1})$ occurs is dominating a binomial random variable $\mathrm{Binom}(n,p)$ with parameters  $n=\lfloor \log_{2\rho}(D)\rfloor -1$ and $p=1-\rho^{-\beta}$. We deduce that for $\eta=\eta(\beta,\rho)>0$ small enough, the probability that there are fewer than $\eta\log D$ such $k$ is smaller than $1/D^{\alpha}$.\end{proof}

We are now ready to dive into the proof of Proposition~\ref{prop:not bridged}.

\begin{proof}[Proof of Proposition~\ref{prop:not bridged}]
Fix $\theta>\tfrac34$ and observe that if $p_\lambda(K,\theta)\ge\tfrac12$ there is nothing to do\footnote{At this stage one may wonder why we put $2Cp_\lambda(K,\theta)$ in the right-hand side of \eqref{eq:upper q} instead of simply $2p_\lambda(K,\theta)$. The reason comes from the conjecture that \eqref{eq:quantitative} can be obtained essentially in terms of the probability of a dual connection with wired boundary conditions on $\mathbb Z$, and that in order to do that, one may want to assume that $p_\lambda(K,\theta)<1/C$. We refer to Sections~\ref{sec:4} and \ref{sec:5} for details of such an application.}. We therefore now assume the opposite. Since the box of size $DK$ around $(Ki,0)$ is included in $B_{CK}$ and being bridged is an increasing event,
the comparison between boundary conditions implies that it suffices to treat the case $i=0$ in the block $B_{DK}$ with $D:=C-|i|$.
 
Fix $\alpha<\alpha(q)$  and consider $\eta=\eta(\alpha)$ and $\rho=
\rho(\alpha)$ given by the Bridging Lemma. 
Also, write $F:=F(K,D,\rho,\eta)$. Thanks to the Bridging lemma and the comparison between boundary conditions,
\[
\mathbb P_{B_{DK},\lambda}^0[F]\ge 1-\frac1{D^{\alpha}}
\]
and it suffices to show that there exists a universal constant $c>0$ such that 
\[
\mathbb P_{B_{DK},\lambda}^0[B_K^i\text{ bridged}|F]\ge 1-\exp[-c\log(D)^2].
\]

\begin{figure}[!htp]
\begin{center}
\includegraphics[width=1.00\textwidth]{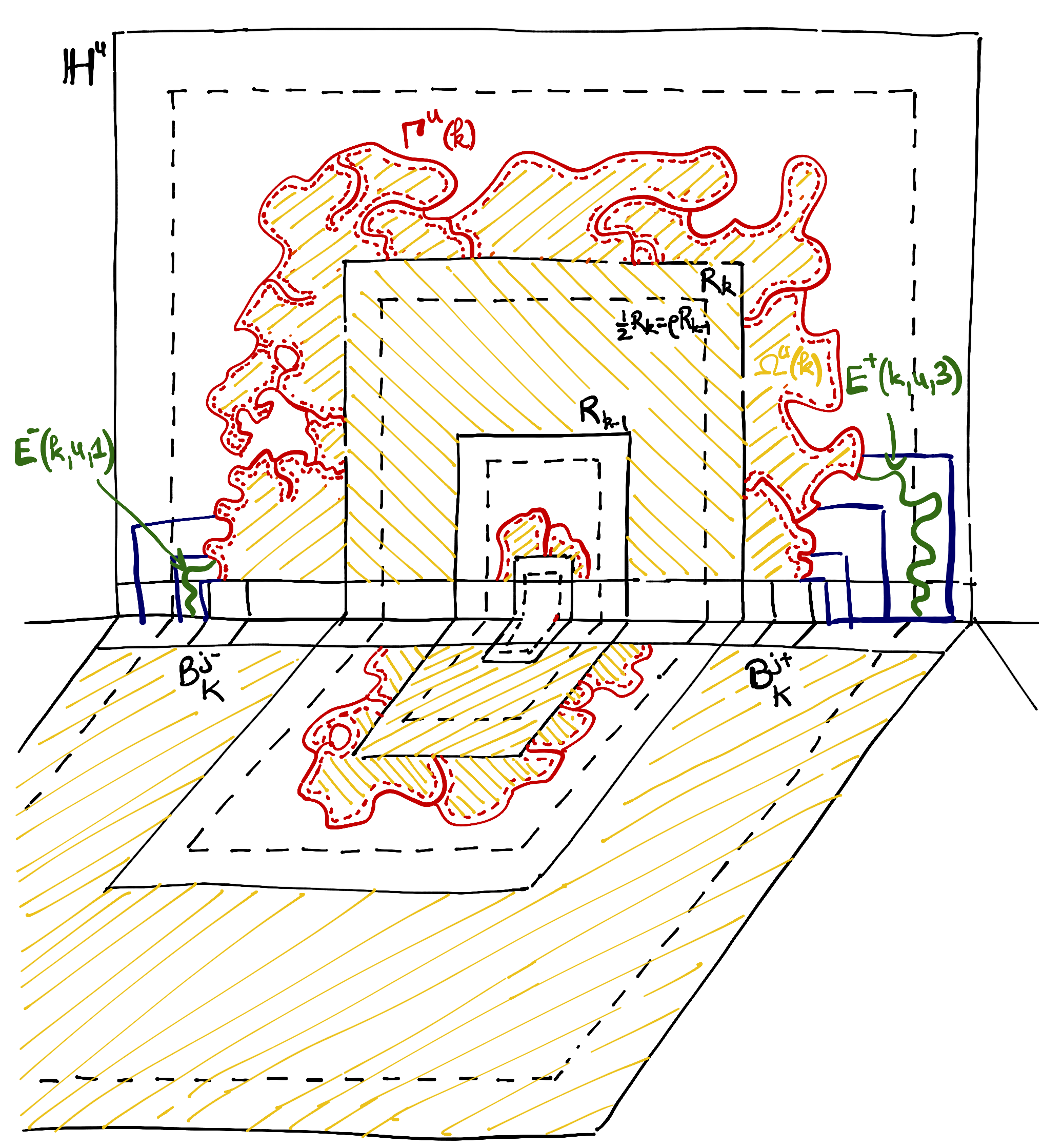}
\end{center}
\caption{A picture of the path $\Gamma^u(k)$ as well as $B_K^{j^\pm}$ and some event $E^+(k,u,3)$ and $E^-(k,u,1)$. The set $\Omega$ is depicted in yellow. Note that these sets do not intersect any of the $B_K^j$ (in other words, they remain at a distance $K$ of $\mathbb Z$).}\label{fig:bridging}
\end{figure}

We now introduce a few quantities (see Fig.~\ref{fig:bridging}). For $k<\lfloor \log_{2\rho}(D)\rfloor$, let $\Gamma(k)$ be the inner-most path in $\omega$ disconnecting $\partial\Lambda_{R_k}$ and $\partial\Lambda_{2^{-1} R_{k+1}}$ in $(0,K)+\mathbb H$ (note that it is a subset of $(0,K)+\mathbb H$). Define $\Omega(k)$ to be the set of $x$ in $((0,K)+\mathbb H)\cap(\Lambda_{2^{-1} R_{k+1}}\setminus\Lambda_{R_k})$ that are surrounding by $\Gamma(k)$, with the convention that the set is $((0,K)+\mathbb H)\cap(\Lambda_{ 2^{-1} R_{k+1}}\setminus\Lambda_{R_k})$ when $\Gamma(k)$ does not exist.  Similarly, define $\Gamma^u(k)$ and $\Omega^u(k)$ as the corresponding quantities in $\mathbb H^u$. 
Finally,  consider the set
\[
\Omega:=\bigcup_{(k,u)}\Omega^u(k).\]
as well as  the set $\mathbf I=\mathbf I(\omega)$ of pairs $(k,u)$ for which $\Gamma^u(k)$ exists, and  the set $\mathbf J=\mathbf J(\omega)$ of triplets $(k,u,i)$ with $(k,u)\in \mathbf I$ and $1\le i<\lfloor \log_2 R_k\rfloor$.

Now, condition on the states of the edges in $\Omega$ and let $\xi$ be the boundary conditions that they induce on $\mathbb B_N\setminus\Omega$. Note that it can be done without revealing any edge outside of $\Omega$ and that $\mathbf I(\omega)$ is measurable in terms of the states of these edges. For each $(k,u)\in \mathbf I(\omega)$,
say that $\Gamma^u(k)$ ends in $B_K^{j_-}$ and $B_K^{j_+}$ on the left and right respectively. For $(k,u,i)\in\mathbf J$, let $E^+(k,u,i)$ be the event that there exists $j$ with $2^{i-1}< j-j_+< 2^i$ such that $B_K^j$ is $\theta$-good and $\mathbf C(B_K^j)$ is connected to $\Gamma^u(k)$ in $\mathbb H^u$. Similarly, define $E^-(k,u,i)$ on the left.
 The comparison between boundary conditions and the anchoring lemma imply that
\begin{align*}
\mathbb P_{\mathbb B_N\setminus \Omega,\lambda}^\xi[B_K\text{ bridged}|F] & \\
 & \hskip - 1cm \ge\mathbb P_{\mathbb B_N\setminus \Omega,\lambda}^\xi[\exists (k,u,i)\in\mathbf J\text{ such that both }E^\pm(k,u,i)\text{ occur}|F]\\
 &\hskip - 1cm  \ge 1-\mathbb E_{\mathbb B_N\setminus \Omega,\lambda}^\xi\Big[\prod_{(k,u,i)\in \mathbf J}\Big(1-\big[c_\mathrm{anchor} (1-p_\lambda(K,\theta))^2\big]^2\Big)\Big|F\Big]\\
 &\hskip - 1cm  \ge 1-(1-c)^{\eta'(\log D)^2},
 \end{align*}
 where in the last line we used the fact that on $F$, $|\mathbf J(\omega)|\ge\eta'(\log D)^2$, and that the assumptions that $\theta>\tfrac34$ and $p_\lambda(K,\theta)\le \tfrac12$ guarantee the existence of $c>0$.
 \end{proof}

\subsection{Proof of Proposition~\ref{thm:FK2}.}

The proof of Proposition~\ref{thm:FK2} relies on the idea that clusters at scale $K$ and local density $\theta$ will merge and with high probability create  new clusters at scale $CK$ of local density $\theta'=\theta-O(1/C)$ slightly smaller than $\theta$ (this slight loss of density allows us to lose a few clusters at scale $K$ in the process). More precisely, we prove the following renormalization inequality.

\begin{lemma}\label{lem:1} Let $N>1$ such that $\alpha(q,N)>1$ and $\theta>\tfrac34$. There exist $C_0\ge1$  large enough (depending on $\theta$ and $N$) such that the following holds. 
For every $\lambda>0$ and for every integers $C\ge C_0$ and $K\ge 2$,
\begin{equation}\label{eq:1}
p_\lambda(CK,\theta - C_0/C)\le \tfrac1{100}\,p_\lambda (K,\theta)+ 6C^2\,p_\lambda(K,\theta)^2.
\end{equation}
\end{lemma}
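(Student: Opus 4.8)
The plan is to run a one-step renormalization: we want to show that a $CK$-block is $(\theta-C_0/C)$-good with high probability, given enough $K$-blocks inside it are $\theta$-good. First I would fix $\theta>\tfrac34$ and write $\theta':=\theta-C_0/C$, assuming $C\ge C_0$ so that $\theta'>\tfrac34$ still holds (this is why $C_0$ must be large depending on $\theta$). I would tile $B_{CK}$ by the $2C$ blocks $B_K^i$, $-C\le i<C$. Call a block \emph{$\theta$-good} as in the definition; by the spatial Markov property and comparison between boundary conditions (free on $B_{CK}$ dominates from below the conditional law of a subblock with the rest free), each $B_K^i$ is $\theta$-bad with probability at most $p_\lambda(K,\theta)$, but these events are \emph{not} independent. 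The key structural input is the \emph{bridging} notion and Proposition~\ref{prop:not bridged}: if a block $B_K^i$ is $\theta$-good \emph{and} bridged in $B_{CK}$, then $\mathbf C(B_K^i)$ is connected inside $B_{CK}$ to two $\tfrac34$-good blocks $B_K^{i_-},B_K^{i_+}$ far on each side, and their big clusters merge with $\mathbf C(B_K^i)$. Chaining this along all the good bridged blocks builds one cluster in $B_{CK}$ whose intersection with $\Z$ collects at least $2\theta K$ from each good block it passes through.

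The core estimate is then a counting/union bound. The ``bad'' contributions for the $CK$-block are of two types: (a) blocks $B_K^i$ that are $\theta$-bad, and (b) blocks $B_K^i$ that are $\theta$-good but \emph{not} bridged in $B_{CK}$. I would argue that if fewer than, say, $C/50$ blocks are $\theta$-bad and every $\theta$-good block is bridged, then the chain of merges produces a cluster $\mathbf C(B_{CK})$ with $|\mathbf C(B_{CK})\cap\Z|\ge (2C-2\cdot\tfrac{C}{50})\cdot 2\theta K\ge 2\theta' (CK)$ once $C_0$ is large enough, because losing a bounded-density fraction of good blocks only costs $O(1/C)$ in local density — this is exactly the slack built into $\theta'=\theta-C_0/C$. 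Hence $B_{CK}$ is $\theta$-bad only if (a) at least $C/50$ of the $K$-blocks are $\theta$-bad, or (b) at least one $\theta$-good block is not bridged. For (a): the probability that a fixed set of $m$ blocks is simultaneously $\theta$-bad is at most $p_\lambda(K,\theta)^m$ times a factor from boundary-condition comparison (for well-separated blocks this is clean; for adjacent ones one revisits the spatial Markov argument), so summing over all $\binom{2C}{C/50}$ choices of $C/50$ blocks and using $m\ge C/50\ge 2$ one gets a bound dominated by $6C^2 p_\lambda(K,\theta)^2$ (in fact much smaller once $C_0$ is large, which is where the constants $\tfrac1{100}$ and the $C^2$ come from). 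For (b): by a union bound over the $2C$ blocks and Proposition~\ref{prop:not bridged}, $\sum_i q_\lambda(K,C,i)\le \sum_i\big(D_0(\alpha)(C-|i|)^{-\alpha}+2Cp_\lambda(K,\theta)\big)$; since $\alpha(q,N)>1$ we may pick $1<\alpha<\alpha(q,N)$, and then $\sum_i (C-|i|)^{-\alpha}$ converges, so the first part is $O(1)$ — but we need it $\le\tfrac1{200}p_\lambda(K,\theta)$, which forces us to feed in the hypothesis $p_\lambda(K,\theta)\ge$ (something): actually the intended reading (cf.\ the footnote after \eqref{eq:upper q}) is that we only prove the inequality when $p_\lambda(K,\theta)$ is not too small, and if $p_\lambda(K,\theta)$ \emph{is} very small we instead bound $q_\lambda$ by its ``$D_0/(C-|i|)^\alpha$'' term directly against the target — so I would split into the two regimes $p_\lambda(K,\theta)\ge 1/C^{3}$ and $p_\lambda(K,\theta)<1/C^3$, handling the small regime by observing that $p_\lambda(CK,\theta')$ is itself then bounded by a union bound that is $o(p_\lambda(K,\theta))$, and the large regime by absorbing $D_0(C-|i|)^{-\alpha}$ into $\tfrac1{100}p_\lambda(K,\theta)$ once $K$ (hence scale) is large. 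Collecting: the $2C\cdot 2Cp_\lambda(K,\theta)=4C^2p_\lambda(K,\theta)$ from the bridging bound, plus the combinatorial $\le 2C^2 p_\lambda(K,\theta)^2$ from type (a), together give the stated $\tfrac1{100}p_\lambda(K,\theta)+6C^2p_\lambda(K,\theta)^2$.

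The main obstacle I anticipate is the dependence between the events ``$B_K^i$ is $\theta$-bad'' for different $i$ and, more subtly, correctly accounting for the $C^2$ versus the linear-in-$C$ terms: one has to be careful that the bridging failure probability $q_\lambda$ already \emph{contains} a $2Cp_\lambda(K,\theta)$ term (not squared), so the naive union bound over $2C$ blocks produces $4C^2 p_\lambda(K,\theta)$, which is the dominant ``honest'' term and must be shown to be absorbable — this is precisely why the lemma is only useful in an iteration where $p_\lambda(K,\theta)$ is already small (of order $1/C^{3}$ or so), and why the right-hand side is engineered with a large $C^2$ coefficient on the quadratic term rather than trying to get a clean $\tfrac1{100}p_\lambda$. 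A secondary technical point is making the ``merging along bridged good blocks yields density $\ge 2\theta'CK$'' step rigorous: one must check that distinct good blocks contribute \emph{disjoint} portions of $\Z$ (true, since the $B_K^i$ partition $\Z\cap B_{CK}$) and that the chain of bridges actually connects all of them into a single cluster in $B_{CK}$ (induction on the blocks, using that consecutive good blocks' clusters are both bridged to blocks strictly between, hence transitively connected). Once these bookkeeping points are set, the probability estimates are routine given Proposition~\ref{prop:not bridged} and the spatial Markov/comparison toolbox.
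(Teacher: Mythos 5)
There is a genuine gap, and it comes from the decomposition you use. You split the failure event ``$B_{CK}$ is $\theta'$-bad'' into (a) at least $C/50$ of the $K$-blocks are $\theta$-bad, or (b) some $\theta$-good block is not bridged, and then union-bound each part against the target. This does not work, for two reasons. First, the decomposition itself is wrong: the definition of ``$B_K^i$ bridged in $B_{CK}$'' is a statement about a connection through $B_{CK}\setminus B_K^i$ that \emph{bypasses} $B_K^i$ — it is useful for routing around a \emph{bad} block, and says nothing about $\mathbf C(B_K^i)$ being joined to its neighbours, contrary to what you assert. Having all $\theta$-good blocks bridged gives no help in traversing a cluster of consecutive bad blocks, so the complement of (a)$\cup$(b) is not sufficient for $B_{CK}$ to be $\theta'$-good. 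Second, even setting that aside, the union bound in (b) over the $2C$ blocks leaves a term $\sum_i D_0(\alpha)(C-|i|)^{-\alpha}$, which is a fixed $O(1)$ constant (depending only on $D_0$ and $\alpha$). There is no way to absorb an absolute constant into $\tfrac1{100}p_\lambda(K,\theta)$, and your proposed regime split on the size of $p_\lambda(K,\theta)$ does not close this: in the ``large'' regime ($p_\lambda(K,\theta)\ge C^{-3}$) the constant is still not $\le \tfrac1{100}C^{-3}$, and in the ``small'' regime the claim that $p_\lambda(CK,\theta')$ is then automatically $o(p_\lambda(K,\theta))$ is circular — it is precisely what the lemma is trying to prove.

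The missing idea — which the paper uses — is a \emph{product} structure, not a separate union bound. One decomposes ``$B_{CK}$ $\theta'$-bad'' into: (i) there exist two \emph{disjoint} $\theta$-bad $K$-blocks, handled by the quadratic term $\binom{2C-1}{2}p_\lambda(K,\theta)^2$; or (ii) there is a (morally) unique bad block $B_K^i$ and, since bridging $B_K^i$ would rescue $B_{CK}$, it must be that $B_K^i$ is \emph{simultaneously} $\theta$-bad \emph{and} not bridged. Crucially, ``$B_K^i$ not bridged'' is measurable with respect to the configuration \emph{outside} $B_K^i$, so by the spatial Markov property $\mathbb P[F_i]\le p_\lambda(K,\theta)\,q_\lambda(K,C,i)$: the product of the two probabilities, not a sum over separate events. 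Feeding in Proposition~\ref{prop:not bridged}, this yields $\mathbb P[F_i]\le D_0(\alpha)p_\lambda(K,\theta)(C-|i|)^{-\alpha}+2Cp_\lambda(K,\theta)^2$. Now summing $(C-|i|)^{-\alpha}$ over $|i|\le C-C_0$ is $O(C_0^{1-\alpha})$ (since $\alpha>1$) and can be made $\le \tfrac1{100D_0}$ by picking $C_0$ large — this is what actually produces the $\tfrac1{100}p_\lambda(K,\theta)$ term, with the factor $p_\lambda(K,\theta)$ already present because of the product structure, not because of a regime split. The remaining $2C\cdot 2Cp_\lambda(K,\theta)^2=4C^2 p_\lambda(K,\theta)^2$ from the $2Cp_\lambda(K,\theta)$ part of Proposition~\ref{prop:not bridged}, plus the $\binom{2C-1}{2}p_\lambda(K,\theta)^2\le 2C^2 p_\lambda(K,\theta)^2$ from case (i), gives the $6C^2 p_\lambda(K,\theta)^2$. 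Your write-up correctly intuits that the $2Cp_\lambda(K,\theta)$ term in $q_\lambda$ is the source of the $C^2$ coefficient, but without the factorization $\mathbb P[F_i]\le p_\lambda\cdot q_\lambda$ the $(C-|i|)^{-\alpha}$ term cannot be controlled, and the proof does not close.
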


\begin{proof}
Fix $1<\alpha<\alpha(q,N)$. Let $C_0>0$ be a large constant to be chosen later and set $\theta':=\theta-C_0/C$.  For $|i|\le C$, let $E_i$ be the event that $B_K^i$ is $\theta$-bad and all the blocks $B_K^j$ are $\theta$-good for $j\in [-C,C]\setminus\{i-1,i,i+1\}$, and set
  \begin{equation}
    \label{eq:9}
    F_i=E_i\cap \{B_{CK}\text{ is $\theta'$-bad}\}.
  \end{equation}
   Observe that if all $K$-blocks $B^j_K$,  $-C\le j \le C$,  are $\theta$-good, then the assumption that $\theta>3/4$ imposes that all the clusters $\mathbf C(B^j_K)$ are connected together in $B_{CK}$, which implies the existence of a cluster in $B_{CK}$ with cardinality larger than $2 \theta CK$. In particular, if $B_{CK}$ is $\theta'$-bad, then either there exist two disjoint $\theta$-bad $K$-blocks, or there exists $i$ such that $E_i$ occurs.  The union bound implies
  \begin{equation}
    \label{eq:6}
    p_\lambda(CK,\theta')\le \sum_{i=-C}^C\mathbb P_{B_{CK},\lambda}^0[F_i]+ \mathbb P[\text{there are at least two {\em disjoint} $\theta$-bad $K$-blocks}].
  \end{equation}
By the spatial Markov property and the comparison between boundary conditions, we have
\begin{equation}\label{eq:1a}
\mathbb P_{B_{CK},\lambda}^0[\text{there are at least two {\em disjoint} $\theta$-bad $K$-blocks}]\le  \binom{2C-1}{2}p_\lambda(K,\theta)^2.
\end{equation}

  It remains to bound the first term in \eqref{eq:6}, which is the object of the end of the proof. If all $K$-blocks $B^j_K$ with $|j|\le C-C_0$  are $\theta$-good, the same argument as above implies that $B_{CK}$ is $\theta'$-good, therefore $F_i=\emptyset$ whenever $|i| \ge C-C_0$. 
   Now, let $|i|\le C-C_0$. 
   Note that if $B_K^i$ is bridged in $B_{CK}$, then $B_{CK}$ is also $\theta'$-good. Furthermore, when $B_K^i$ is not bridged (this event does not depend on edges in $B_K^i$), for $E_i$ to occur then $B_K^i$ must be $\theta$-bad. As a consequence, the spatial Markov property and the comparison between boundary conditions implies that
    \begin{align}
   \mathbb P[F_i]&\le  \mathbb P[E_i|B_K^i\text{ not bridged}]\mathbb P[B_K^i\text{ not bridged}]\nonumber\\
   &\le p_{\lambda}(K,\theta) q_{\lambda}(K,C,i)\nonumber\\
   &\le \frac{D_0(\alpha) \, p_{\lambda}(K,\theta)}{(C-|i|)^{\alpha}} +2Cp_{\lambda}(K,\theta) ^2, \label{eq:10}
 \end{align}
 where in the last line we invoked Proposition~\ref{prop:not bridged} for $\alpha$. 
Select $C_0$ so large that
\[
\sum_{|i|\le C-C_0}\frac{2 D_0(\alpha)}{(C-|i|)^{\alpha}}\le \tfrac1{100}.
\]
Plugging \eqref{eq:10} and \eqref{eq:1a} in \eqref{eq:6} concludes the proof.\end{proof}

\begin{proof}[Proof of Proposition~\ref{thm:FK2}]
Let $N$ satisfying $\alpha(q,N)>1$. Choose $\theta_1<1$ and $C_1\ge C_0$ (where $C_0$ is provided by Lemma~\ref{lem:1}) such that the sequences
\[
\begin{cases}C_{n+1}=(n+1)^{3} C_1,\\
\theta_{n+1}:=\theta_n- \tfrac{C_0}{C_{n+1}},\end{cases}\quad\text{ for $n\ge1$}
\]
satisfy $\theta_n>\tfrac34$ for every $n\ge1$.
Now, set $\lambda^*\ge p_c$ so large that 
\[
  p_{\lambda^*}(C_1,\theta_1)\le \mathbb P_{\lambda^*}[\exists \{x,x+1\}\subset B_{C_1}\cap \mathbb Z\text{ closed}]\le C_1\frac{1-\lambda^*}{q-(q-1)\lambda^*}\le \frac1{1200 C_1^2}. 
\]
and consider the sequence of scales defined\footnote{Note that it gives  $K_n=(n!)^3 C_1^n$ for all $n\geq 1$.} by
\begin{equation}
  \label{eq:13}
  \begin{cases}
    K_1=C_1,& \\
    K_{n+1}=C_{n+1} K_n&n\ge1.
  \end{cases}
\end{equation}
Applying Lemma~\ref{lem:1} to $(N,\theta_{n},C_n,K_n)$, we see that the sequence $u_{n}:=p_{\lambda^*}(K_n,\theta_n)$ satisfies 
\[
  \forall n\ge 1,\qquad u_{n+1}\le \tfrac1{100} u_n+6 C_n^2 u_n^2.\]
By induction, we obtain that $u_n\le  \tfrac1{1200} C_n^{-2}$ for every $n\ge 1$, and  therefore,
\[
\mathbb P_{\lambda^*}[B_{K_n}\text{ $3/4$-good}]\ge 1 - u_n\ge 1-\tfrac1{1200}C_n^{-2}\ge \tfrac12.\] 
First using the estimate above and then  translation invariance, we get that  for every~$n\ge1$, 
\begin{align}
  \tfrac34 K_n&\le\mathbb E_{\lambda^*}[|\mathbf C(B_{K_n}) \cap \Z|1_{B_{K_n} \text{ $3/4$-good}} ]\nonumber\\
  &\le 2K_n \mathbb P_{\lambda^*}[0\text{ is in a cluster of size at least $\tfrac32 K_n$}].  \label{eq:14}
\end{align}
Dividing both sides by $2K_n$, we obtain
\[
\mathbb P_{\lambda^*}[0\text{ is in a cluster of size at least $\tfrac32 K_n$}]\ge \tfrac 38,
\]
which by measurability implies that the probability that 0 is connected to infinity is larger than or equal to $\tfrac38$.
\end{proof}

\section{Proof of Theorem \ref{thm:FK*} for $q\ne 2$}\label{s.2dM}\label{sec:4}

In this section, we prove the following two claims of Theorem \ref{thm:FK*}: first we show that $N^*(q)\geq 3$, when $q\in[1,2)$ and second, we prove that $N^*(q)=2$ for all $2<q<4$ and that $N^*(4)\leq 2$. The more subtle case of  $q=2$ will be analyzed in the next section with the help of random currents. 

We start with the following proposition corresponding to the first claim.

\begin{proposition}
For every $1\le q<2$, there exists $c=c(q)>0$ such that for every $n\ge1$ and $\lambda\in(0,1)$,
\[
\P^1_{\mathbb B_2,\lambda}[\Lambda_n\text{ horizontally crossed}]\le 1-(1-\lambda)c.
\]
In particular, 
$\P^1_{\mathbb B_2,\lambda}[0\longleftrightarrow\infty]=0$. 
\end{proposition}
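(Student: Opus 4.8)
The plan is to exploit the planarity of $\mathbb B_2 = \mathbb Z^2$ together with the self-duality of critical FK percolation with $1\le q<2$ and strong RSW estimates. The key point is that on $\mathbb Z^2$, a horizontal crossing of $\Lambda_n$ in the \emph{primal} configuration $\omega$ is incompatible with a vertical dual crossing in $\omega^*$; and by symmetry/duality of $p_c$, the dual model is (up to the modified weight on $\mathbb Z$, which is now a dual edge) again critical FK percolation with the same $q$. Since $q<2$, RSW technology applies uniformly, so a vertical dual crossing of a slightly fattened box has probability bounded away from $0$ --- provided the anomalous line $\mathbb Z$ does not help the primal crossing too much. The role of the factor $(1-\lambda)$ is precisely to quantify ``the primal crossing needs at least one open edge of $\mathbb Z$, each of which is open with probability at most $\lambda$ relative to the dual-favorable configuration''.

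\textbf{Key steps, in order.} First, I would set up the dual picture: take the box $\Lambda_n$ in $\mathbb Z^2$ and let $\Lambda_n^*$ be the dual box; a horizontal primal crossing of $\Lambda_n$ forbids a top-to-bottom dual-open path in $\Lambda_n^*$. Second, I would use duality of the FK measure: under $\mathbb P^1_{\mathbb B_2,\lambda}$ the dual configuration on $\Lambda_n^*$ is distributed as critical FK percolation with parameter $q$ and free (or the appropriate) boundary conditions, \emph{except} that the dual edges crossing the line $\mathbb Z$ carry the weight corresponding to $\lambda$ instead of $p_c$; because $\lambda \le 1 > p_c$ means these primal edges are \emph{more} likely open, hence the corresponding dual edges are \emph{less} likely open than at criticality. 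Third, invoke RSW/crossing estimates for critical FK with $1\le q<2$ (available from \cite{DumSidTas13}, and uniform in $q$ on compact subsets of $[1,2)$) to assert that in the \emph{unmodified} critical model the probability of a vertical dual crossing of $\Lambda_n$ is at least some $c_0=c_0(q)>0$. Fourth --- the quantitative heart --- compare the modified dual measure with the unmodified one: the event ``vertical dual crossing'' can be realized by a dual crossing that uses exactly one dual edge crossing $\mathbb Z$ (a dual crossing must cross $\mathbb Z$ an odd number of times, hence at least once). Conditioning on everything away from that crossing and using finite energy / the explicit edge probabilities, the cost of forcing that single edge to be dual-open rather than dual-closed is bounded below by $c'(1-\lambda)$ (the explicit ratio $\tfrac{1-\lambda}{q-(q-1)\lambda}$ appears exactly as in the bound for $p_{\lambda^*}$ computed in the proof of Proposition~\ref{thm:FK2}). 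Combining, the dual vertical crossing has probability at least $(1-\lambda)c$, so the primal horizontal crossing has probability at most $1-(1-\lambda)c$. Finally, the ``in particular'' statement follows: if $0\leftrightarrow\infty$ with positive probability under $\mathbb P^1_{\mathbb B_2,\lambda}$, then by ergodicity/uniqueness there would be an infinite primal cluster, forcing horizontal crossings of arbitrarily large boxes with probability tending to $1$ (via FKG and a standard annulus argument), contradicting the uniform bound $1-(1-\lambda)c<1$.

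\textbf{Main obstacle.} The delicate point is the fourth step: making the comparison between the $\lambda$-modified dual measure and the genuine critical FK measure quantitatively, with the right $(1-\lambda)$ dependence. One must be careful that a single dual crossing of $\mathbb Z$ is genuinely forced (parity of crossings of the line $\mathbb Z$ for a top-to-bottom path in $\Lambda_n^*$), and that conditioning to expose all edges except one anomalous edge, then re-opening that dual edge, can be controlled by the explicit one-edge conditional probability --- which, because of the cluster-weight $q$, is not simply $1-\lambda$ but the ratio $\tfrac{1-\lambda}{q-(q-1)\lambda}$; since $q<2$ this is bounded below by a constant times $(1-\lambda)$, so the conclusion survives, but the bookkeeping with boundary conditions (comparison between boundary conditions to pass from $\mathbb P^1$ to a convenient reference measure, and uniformity of RSW as $q\uparrow 2$) requires care. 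The hypothesis $q<2$ is used precisely here: at $q=2$ and beyond, either the relevant constant degenerates or RSW is unavailable in the needed uniform form, which is why $q=2$ is deferred to the companion analysis.
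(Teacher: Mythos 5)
Your outline captures the correct overall shape of the argument---expose a configuration that is ``one edge away'' from a dual top-to-bottom crossing, with that one edge lying on $\mathbb Z$, and pay a factor $\asymp(1-\lambda)$ for the conditional probability that it is closed. The step in the paper that closes the argument this way (conditional single-edge probability $\geq c\,(1-\lambda)$ uniformly over boundary conditions and over $q$) is exactly your fourth step, and it is fine. The ``in particular'' statement via ergodicity is also fine.

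But the heart of your argument is where the gap is. The event you need to bound from below---call it $E'$, ``there exists an edge $e$ of $\mathbb Z$ whose dual endpoints carry dual arms to the top in the upper half-plane and to the bottom in the lower half-plane''---is a \emph{decreasing} event in $\omega$ that depends only on off-$\mathbb Z$ edges. Because it is decreasing, monotonicity reduces the task of bounding $\mathbb P^1_{\mathbb B_2,\lambda}[E']$ for all $\lambda\in(0,1)$ to the single worst case $\lambda=1$. Your step 3 instead invokes RSW to give $\mathbb P^1_{\mathbb B_2,p_c}[E']\ge c_0$. That is true, but useless here: since $E'$ is decreasing and the $\lambda=1$ measure stochastically dominates the $\lambda=p_c$ one, you obtain $\mathbb P^1_{\mathbb B_2,1}[E']\le \mathbb P^1_{\mathbb B_2,p_c}[E']$. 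This is the \emph{wrong direction}: a lower bound for the unmodified measure does not transfer to the relevant $\lambda$-modified one, and in fact at $\lambda=1$ the dual crossing itself is impossible, so no RSW estimate on genuine dual crossings can help. There is also no finite-energy shortcut: the Radon--Nikodym density between $\mathbb P^1_\lambda$ and $\mathbb P^1_{p_c}$ involves \emph{every} edge of $\mathbb Z$, not just the single edge your crossing uses, so ``conditioning on everything away from that crossing'' does not isolate a one-edge cost.

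What the paper does instead is to bound $\mathbb P^1_{\mathbb B_2,1}[\mathbf N>0]$, where $\mathbf N$ counts the edges $e\subset[-n/2,n/2]$ realizing $E'$, by a direct second-moment argument at $\lambda=1$. At $\lambda=1$ the two half-planes are wired along $\mathbb Z$ and are independent, so the first moment factorizes into $\gtrsim n\,a_{wired}^+(n)^2$. The second moment is controlled via quasi-multiplicativity, and the sum $\sum_{k\le n} a_{wired}^+(k,n)^{-2}$ is bounded by $O(n)$ precisely because of the bound from~\cite{DMT20}:
\[
a_{wired}^+(k,n)\ \geq\ c_1\,(k/n)^{1/2-c_1}\quad\text{for }1\le q<2,
\]
i.e.~$\alpha^+_{wired}(q)<\tfrac12$. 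This is the \emph{actual} place where $q<2$ is used, and your diagnosis at the end is not correct: RSW is available up to $q=4$, and the one-edge ratio $\tfrac{1-\lambda}{q-(q-1)\lambda}$ does not degenerate at $q=2$. What degenerates at $q=2$ is the arm exponent, $\alpha^+_{wired}(2)=\tfrac12$, which makes the second-moment method fail by a logarithm; this is precisely why $q=2$ is handled by a finer analysis in the companion paper. To repair your argument you would need to replace step 3 by this second-moment computation at $\lambda=1$ together with the cited arm-exponent estimate; RSW alone is insufficient.
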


The proof is based on a second-moment argument.

\begin{proof}
Define the number $\mathbf N$ of edges $e\subset [-n/2,n/2]$ such that the endpoints of $e^*$ are respectively dual connected to the top of $\Lambda_n$ in the upper half-plane, and to the bottom of $\Lambda_n$ in the lower half-plane. Under $\P^1_{\mathbb B_2,1}$ (for which $\Z$ is completely wired), both pages behave independently and we immediately get that
\[
\E^1_{\mathbb B_2,1}[\mathbf N]\ge c_0na_{wired}^+(n)^2.
\]
In the other direction, the second moment gives, using classical quasi-multiplicativity estimates
\[
\E^1_{\mathbb B_2,1}[\mathbf N^2]\le C_0n\sum_{k=1}^n\frac{a_{wired}^+(n)^4}{a_{wired}^+(k,n)^2}\le C_1n^2a_{wired}^+(n)^4,
\]
where in the last inequality we used a result from \cite{DMT20} stating the existence of  $c_1=c_1(q)>0$ such that for every $k\le n$,
\begin{equation}\label{eq:estimate two arm q<2}
a_{wired}^+(k,n)\ge c_1(\tfrac kn)^{1/2-c_1}.
\end{equation}
Overall, we get by comparison between boundary conditions and Cauchy-Schwarz that
\[
\P^1_{\mathbb B_2,\lambda}[\mathbf N>0]\ge \P^1_{\mathbb B_2,1}[\mathbf N>0]\ge c_2.
\]
Now, on $\{\mathbf N>0\}$ (which does not prescribe anything on edges in $\Z$), $\Lambda_n$ is not crossed horizontally if any of the edges of $\Z$ such that the endpoints of $e^*$ are dual-connected to top and bottom is in fact closed. Since there is at least one such edge, we get that
\[
\P^1_{\mathbb B_2,\lambda}[\Lambda_n\text{ horizontally crossed}]\le 1-c_2(1-\lambda).
\]
This concludes the proof of the first part of the proposition. For the second part, ergodicity gives that $\P_{p_c,\lambda,q}[0\longleftrightarrow\infty]=0$ since otherwise the crossing probability would tend to 1.
\end{proof}

We now turn to the other claim, which we split in two.

\begin{proposition}\label{}
For any $2<q\leq 4$, we have $N^*(q)\leq 2$. 
\end{proposition}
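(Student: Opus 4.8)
The goal is to show that for $2<q\le 4$, a single defect line $\mathbb Z$ with sufficiently large coupling constant $\lambda<1$ creates an infinite cluster in $\Book_2=\mathbb Z^2$. By Proposition~\ref{thm:FK2}, it suffices to exhibit some $\lambda\in(0,1)$ (equivalently, via monotonicity, to work with $\lambda$ arbitrarily close to $1$) for which percolation occurs; in fact the natural route is to reprove the renormalization scheme of Section~\ref{sec:3} directly in the planar graph $\Book_2$, where it simplifies considerably. First I would replace the block quantity $\mathbf C(B_K^i)$ and the notion of $\theta$-good block by their planar analogues: a $K$-block here is just the rectangle $[-K,K)\times[-K,K]$ in $\mathbb Z^2$ (the two pages being the upper and lower half-planes), and $\mathbf C(B_K^i)$ is the open cluster in this box with largest intersection with $\mathbb Z$. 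The disconnection event $F(k,K)$ on $\Book_2$ becomes simply: there is an open circuit in one of the two half-annuli $\H^\pm\cap(\Lambda_K\setminus\Lambda_k)$ separating $\partial\Lambda_k$ from $\partial\Lambda_K$. The key input is then a lower bound on the planar disconnection exponent, $\alpha(q,2)>1$.

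\textbf{Key steps.} The heart of the matter is to show $\alpha(q,2)>1$ for $2<q\le 4$. For this I would use the strong RSW theory for critical planar FK percolation with $1\le q\le 4$ (\cite{DumSidTas13}), which gives uniformly positive crossing probabilities for rectangles of bounded aspect ratio under arbitrary boundary conditions, together with the quasi-multiplicativity / mixing properties that follow from it. The event $F(k,K)^c$ forces a dual-open arm crossing the half-annulus $\Lambda_K\setminus\Lambda_k$ in \emph{each} of the two half-planes. Since the two half-planes are glued only along the one-dimensional line $\mathbb Z$ (a set of zero "width"), conditioning on the configuration in one half-plane affects the other only through boundary conditions along $\mathbb Z$, so by comparison between boundary conditions each half-plane arm event has probability at most $C a^+_{wired}(k,K)$, and the two are, up to the boundary-conditions comparison, sub-independent; hence $\alpha(q,2)\ge 2\,\alpha^+_{wired}(q)$. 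For $2<q\le 4$ one has $\alpha^+_{wired}(q)=\tfrac2\pi\arccos(\sqrt q/2)$ in the limit (or at least $\alpha^+_{wired}(q)>1/2$ strictly whenever $q>2$, which is what strong RSW and the positivity of the wired half-plane one-arm exponent for $q>2$ provide), giving $\alpha(q,2)>1$ strictly. Then, with $\alpha(q,2)>1$ in hand, Proposition~\ref{thm:FK2} immediately yields $\theta_{\Book_2}(p_c,\lambda,q)>0$ for some $\lambda<1$, i.e.\ $N^*(q)\le 2$.

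\textbf{Main obstacle.} The delicate point is establishing $\alpha^+_{wired}(q)>1/2$ (equivalently $\alpha(q,2)>1$) \emph{uniformly} and \emph{unconditionally} on $(2,4]$, since the exact arm exponents are not rigorously known. I expect the cleanest way around this is not to compute $\alpha^+_{wired}(q)$ exactly but to compare the half-plane wired one-arm probability to a full-plane quantity: a dual arm from $\partial\Lambda_k$ to $\partial\Lambda_K$ in the half-plane with \emph{wired} boundary on $\mathbb Z$ is penalised relative to the free half-plane case, and one can bound it by a two-arm-type quantity or use the fact that creating such a dual arm forces the open cluster of $\mathbb Z$ to be broken, whose probability decays polynomially with a $q$-dependent exponent bounded below on compact subsets of $(1,\infty)$ by the uniform RSW bounds of \cite{DumSidTas13, DMT20}. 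Concretely I would mimic the second-moment computation already carried out in the proof of the preceding proposition (the one bounding $\P^1_{\Book_2,\lambda}[\Lambda_n\text{ crossed}]$): there the bound $a^+_{wired}(k,n)\ge c(k/n)^{1/2-c}$ from \cite{DMT20} was used for $q<2$; the analogous statement for $q>2$ should be an \emph{upper} bound $a^+_{wired}(k,n)\le C(k/n)^{1/2+c}$ for some $c=c(q)>0$, which is exactly what is needed to push $\alpha(q,2)$ strictly above $1$. Verifying that such a strict improvement over exponent $1/2$ is available from the existing literature (rather than merely $\ge$ some positive constant) is the one technical point I would want to pin down carefully; everything else is a direct transcription of the Section~\ref{sec:3} renormalization into the planar setting, where the absence of extra pages makes the bounds cleaner.
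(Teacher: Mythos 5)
Your high-level plan (renormalize, get a disconnection-type exponent strictly larger than $1$, invoke Proposition~\ref{thm:FK2}) and your identification of the key quantitative input from \cite{DMT20}, namely $a_{wired}^+(k,n)\le C(k/n)^{1/2+c_1}$ for $2<q\le 4$, match the spirit of what is needed. However, there is a genuine gap in how you propose to reach the wired exponent, and it is exactly the point the paper's proof is careful about.

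You assert that, after conditioning on one half-plane, ``by comparison between boundary conditions each half-plane arm event has probability at most $Ca^+_{wired}(k,K)$''. The comparison runs the \emph{other} way. A dual crossing of a half-annulus is a decreasing event; free boundary conditions on $\Z$ \emph{maximize} its probability and wired boundary conditions \emph{minimize} it. So conditioning on the lower half-plane (which induces some boundary partition on $\Z$ at least as wired as free) only gives the upper bound $a^+_{free}(k,K)$ for the upper half-plane dual arm, and the naive decoupling yields $\alpha(q,2)\ge 2\,\alpha^+_{free}(q)$. This is not enough: $\alpha^+_{free}(q)<\tfrac12$ on $(2,4]$ and in fact tends to $0$ as $q\to 4$, so this route never pushes $\alpha(q,2)$ above $1$. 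Your conjectured inequality $\alpha(q,2)\ge 2\,\alpha^+_{wired}(q)$ for the free-boundary disconnection exponent is not justified, and it is essentially circular: it would be what one expects \emph{conditionally} on the existence of the percolating cluster along $\Z$, which is what you are trying to prove.

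What the paper actually does — and what your sketch is missing — is to work not with $\alpha(q,2)$ and Proposition~\ref{thm:FK2}, but directly with the ``not bridged'' probability $q_\lambda(K,C,i)$, reproving the analogue of Proposition~\ref{prop:not bridged}. On the event that all $K$-blocks are $\theta$-good, there is a large open cluster $\mathbf C$ crossing $[-CK,CK]\times[-K,K]$ from left to right. By planarity of $\Book_2=\Z^2$, one may condition on the bottom-most left-to-right open crossing $\Gamma$ contained in $\mathbf C$ and on everything below it; the spatial Markov property then shows that the dual arm from $B_K^i$ to $\partial B_{CK}$ \emph{above} $\Gamma$ is dominated by a dual arm in a strip with \emph{wired} boundary condition on the bottom (and free elsewhere). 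The same then holds separately below $\mathbf C$. It is this conditioning on the explored crossing in the good cluster — not a bare comparison between free and wired — that converts the problem into a product of two quantities each bounded by $Ca^+_{wired}((C-|i|)K,K)\le C'(C-|i|)^{-1/2-c_1}$. Without that step, the argument would only produce the free exponent and would fail. Your remark that ``creating such a dual arm forces the open cluster of $\Z$ to be broken'' is vaguely in this direction, but the precise mechanism (explore the bottom-most crossing, then compare boundary conditions above it) needs to appear explicitly for the proof to go through.
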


\begin{proof}
We wish to prove that for $\lambda>0$ large enough, $\P^1_{\mathbb B_2,\lambda}[0\leftrightarrow\infty]>0$. In order to do that, we only need to prove the equivalent of Proposition~\ref{prop:not bridged}, i.e. that for some constant $\alpha>1$, there exists
 $D_0(\alpha)>0$  such that \begin{align}\label{eq:upper q>2}
q_\lambda(K,C,i)\le \frac{D_0(\alpha)}{(C-|i|)^{\alpha}}~+~2Cp_\lambda(K,\theta)
\end{align}
for every $\lambda\ge p_c$, $N\ge1$, $\theta>\tfrac34$, and $K,C\ge 2$. 

To do that, observe that for $B_K^i$ not to be bridged in $B_{CK}$, 
\begin{itemize}
\item either there must be a $\theta$-bad box $B_K^j$, an event which occurs with probability smaller than $2Cp_\lambda(K,\theta)$, 
\item or all the boxes are $\theta$-good, in which case if $\mathbf C$ is the cluster gathering all the $\mathbf C(B_K^j)$, we have that $B_K^i$ is dual connected to $\partial B_{CK}$ above and below $\mathbf C$.\end{itemize}
Yet, when working on $\mathbb B_2=\Z^2$, one notices that $\mathbf C$ contains a crossing from left to right in $[-CK,CK]\times[-K,K]$. In particular, conditioned on the bottom-most such crossing $\Gamma$ and everything below it, the spatial Markov property together with the comparison between boundary conditions of the model imply that the probability that there exists a dual path from $B_K^i$ to $\partial B_{CK}$ above $\Gamma$ is bounded by 
the probability that there exists a dual-connected path in $[-CK,CK]\times[-K,CK]$ from $B_K^i$ to $\partial B_{CK}$, with free boundary conditions on $\partial B_{CK}$ and wired on the bottom. In particular, it is bounded by $C_0a_{wired}(K,(C-|i|)K)$ using classical mixing properties coming from \cite{DumSidTas13}. 
Now, it was proved in \cite{DMT20} that for every $q<2\le 4$ (it was not done for $q=4$ but the same proof extends), there exists $c_1=c_1(q)>0$ such that for every $k\le n$,
\begin{equation}\label{eq:estimate two arm q>2}
a_{wired}^+(k,n)\le \tfrac1{c_1}(\tfrac kn)^{1/2+c_1}.
\end{equation}

Altogether, we deduce that for some constant $D_1>0$, 
\[
\mathbb P_{\mathbb B_2,\lambda,q}^1[B_K^i\stackrel{*}{\leftrightarrow}\partial B_{CK}\text{ above }\mathbf C|B_K^j\text{ all good},B_K^i\stackrel{*}{\leftrightarrow}\partial B_{CK}\text{ below }\mathbf C]\le\frac{D_1}{(C-|i|)^{1/2+c_1}}.
\]
Similarly, one proves that 
\[
\mathbb P_{\mathbb B_2,\lambda,q}^1[B_K^i\stackrel{*}{\leftrightarrow}\partial B_{CK}\text{ below }\mathbf C|B_K^j\text{ all good}]\le\frac{D_1}{(C-|i|)^{1/2+c_1}}.
\]
Combining this two displayed inequalities with the two bullets above gives \eqref{eq:upper q>2} for $\alpha:=1+2c_1$, a fact which concludes the proof.
\end{proof}

\begin{proposition}\label{}
For any $2<q<4$, we have $N^*(q)\ge2$. 
\end{proposition}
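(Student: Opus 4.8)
The statement $N^*(q)\ge 2$ for $2<q<4$ says that a single page (i.e.\ the upper half-plane $\mathbb H$ on its own, with arbitrarily strong coupling constant $\lambda<1$ along $\partial\mathbb H=\mathbb Z$) cannot produce an infinite cluster at $p_c$. Equivalently, $\theta_{\mathbb B_1}(p_c,\lambda,q)=0$ for every $\lambda<1$. The plan is to run a second-moment argument entirely analogous to the one used in the previous proposition for $q<2$, but now using the \emph{opposite} inequality on the boundary two-arm (disconnection) exponent: instead of the lower bound \eqref{eq:estimate two arm q<2} we invoke the upper bound \eqref{eq:estimate two arm q>2} from \cite{DMT20}, valid for $2<q<4$ (and expected, though not proved, at $q=4$), namely $a_{wired}^+(k,n)\le \tfrac1{c_1}(k/n)^{1/2+c_1}$.

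\textbf{Key steps.} First I would fix $n$, consider the box $\Lambda_n$ in a single page $\mathbb H$, and show that with probability bounded away from $1$ (uniformly in $n$ and in $\lambda$) there is \emph{no} crossing of $\Lambda_n$ from the left part of $\mathbb Z$ to the right part of $\mathbb Z$ along $\mathbb Z$; more precisely, that $\partial\Lambda_{n/2}\cap\mathbb Z$ is disconnected from the rest of $\mathbb Z$ by a dual path in $\mathbb H$. To do this, consider the bottom-most dual path $\Gamma$ in the lower strip realizing a dual arm, exactly as in the preceding proof, or work directly: let $\mathbf N$ count the edges $e\subset[-n/2,n/2]\times\{0\}$ whose dual $e^*$ has its two endpoints dual-connected to $\partial\Lambda_n$ \emph{within} $\mathbb H$ (a boundary two-arm event from a point of $\mathbb Z$ to distance $n$, staying in the half-plane). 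Under $\P^1_{\mathbb B_1,\lambda}$, conditioning on the states outside $\mathbb Z$ and using comparison between boundary conditions and the Markov property, one gets $\E[\mathbf N]\ge c_0\,n\,a_{wired}^+(n)$ (the relevant arm here is the \emph{single} half-plane arm to the wired boundary, since the two dual endpoints of $e^*$ are forced to connect to the same $\partial\Lambda_n$). For the second moment, the quasi-multiplicativity and mixing estimates of \cite{DumSidTas13} give $\E[\mathbf N^2]\le C_0\, n\sum_{k=1}^n a_{wired}^+(n)^2/a_{wired}^+(k,n)$, and now the \emph{upper} bound \eqref{eq:estimate two arm q>2} bounds $1/a_{wired}^+(k,n)$ by $c_1^{-1}(n/k)^{1/2+c_1}$, whose sum over $k$ is $O(n)$ provided $1/2+c_1<1$ (true for small $c_1$, shrinking $c_1$ if needed). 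Hence $\E[\mathbf N^2]\le C_1\,n^2\,a_{wired}^+(n)^2$. Cauchy--Schwarz then yields $\P^1_{\mathbb B_1,\lambda}[\mathbf N>0]\ge \P^1_{\mathbb B_1,1}[\mathbf N>0]\ge c_2>0$, uniformly in $n$ and $\lambda$.

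\textbf{Conclusion.} On the event $\{\mathbf N>0\}$ — which prescribes nothing about the edges of $\mathbb Z$ itself — there is at least one edge $e$ of $\mathbb Z$ whose dual endpoints are dual-connected to $\partial\Lambda_n$ in $\mathbb H$; if that $e$ is closed, then $\partial\Lambda_{n/2}\cap\mathbb Z$ is dual-disconnected from $\partial\Lambda_n$, so in particular $0$ is not connected to $\partial\Lambda_n$ along $\mathbb Z$ through vertices far from $0$. Since $e$ is closed with probability at least $1-\lambda$ conditionally, we get $\P^1_{\mathbb B_1,\lambda}[0\leftrightarrow\partial\Lambda_n]\le 1-c_2(1-\lambda)<1$ for all $n$, and letting $n\to\infty$ gives $\theta_{\mathbb B_1}(p_c,\lambda,q)=0$; by ergodicity this also rules out an infinite cluster, so $N^*(q)\ge 2$.

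\textbf{Main obstacle.} The delicate point is \emph{which} one-arm quantity the second-moment argument naturally produces and making sure the bound \eqref{eq:estimate two arm q>2} of \cite{DMT20} applies to exactly that quantity (the half-plane arm to a \emph{wired} boundary segment, i.e.\ the boundary two-arm in the half-plane with the relevant boundary conditions). One must be careful that the conditioning used to decouple the $\mathbb Z$-edges (exploring from outside $\mathbb Z$, revealing a lowest dual arc, applying the spatial Markov property and comparison of boundary conditions) genuinely reduces the event to a half-plane arm event of the type covered by \eqref{eq:estimate two arm q>2}, and that the quasi-multiplicativity/mixing input of \cite{DumSidTas13} is available in the half-plane with these boundary conditions; this is the step that should be written out carefully.
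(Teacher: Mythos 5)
Your plan is to run the same second–moment argument that the paper uses for the lower bound $N^*(q)\ge 3$ when $1\le q<2$, with the roles of the estimates \eqref{eq:estimate two arm q<2} and \eqref{eq:estimate two arm q>2} swapped. The paper does something quite different and much shorter: it derives $N^*(q)\ge 2$ directly from the strong RSW theorem of \cite{DumSidTas13}, which gives a uniformly positive probability of a dual half-annulus crossing in $\mathbb H$ above $\Z$ and then spends an additional $(1-\lambda)^2$ to pin the two feet of this dual arc to $\Z$; this per-scale positive probability already rules out percolation. No moment computation is needed.

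There is, moreover, a genuine gap in your second moment. The inequality \eqref{eq:estimate two arm q>2} says $a_{wired}^+(k,n)\le \tfrac1{c_1}(\tfrac kn)^{1/2+c_1}$, which yields a \emph{lower} bound $1/a_{wired}^+(k,n)\ge c_1(\tfrac nk)^{1/2+c_1}$, not the upper bound $1/a_{wired}^+(k,n)\le c_1^{-1}(\tfrac nk)^{1/2+c_1}$ that you use to sum $\sum_k 1/a_{wired}^+(k,n)=O(n)$. You have the direction of the inequality reversed. What your $\E[\mathbf N^2]$ estimate actually requires is a complementary \emph{lower} bound $a_{wired}^+(k,n)\ge c(\tfrac kn)^\alpha$ with $\alpha<1$. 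The paper establishes such a lower bound with $\alpha=1/2-c_1$ via \eqref{eq:estimate two arm q<2} only for $q<2$; for $2<q<4$ the cited results of \cite{DMT20} give only the upper bound, and the a priori RSW-pasting lower bound has an exponent that is not known to be smaller than $1$. In fact, if $\alpha^+_{wired}(q)\ge 1$ the first moment $\E[\mathbf N]\approx n\,a_{wired}^+(n)$ would not even diverge, so whether the second-moment method gets off the ground is essentially equivalent to the estimate one would be trying to prove.

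Finally, the geometry in $\mathbb B_1=\mathbb H$ is not quite what you describe. For an edge $e=\{(k,0),(k+1,0)\}$ on $\Z=\partial\mathbb H$, the dual edge $e^*$ has one endpoint at height $1/2$ inside $\mathbb H$ and the other at height $-1/2$, i.e.\ on the dual ghost below the page; it does not have two interior endpoints, so ``its two endpoints dual-connected to $\partial\Lambda_n$ within $\mathbb H$'' does not parse. The natural blocking event in a single page is a dual arc inside $\mathbb H$ running between two dual vertices $(a+1/2,1/2)$ and $(b+1/2,1/2)$ with $a<0<b$, completed to a separating circuit by \emph{two} closed edges of $\Z$ (one at each foot), not one. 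This is exactly the event the RSW theorem controls, and it is what the paper's argument uses.
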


\begin{proof}
The lower bound $N^*(q)\ge2$ is a straightforward consequence of the strong RSW Theorem from \cite{DumSidTas13}, that implies that on $\mathbb H$, the probability that there exists a dual path from $[-2n,-n]$ to $[n,2n]$ surrounding $\Lambda_n$ in $\Lambda_{2n}$ is bounded from below by a constant $c_0>0$. This contradicts the fact that this probability should tend to 0 for $0$ to be connected to infinity.
\end{proof}

\section{Book-Ising with three pages and random currents}\label{s.Ising}\label{sec:5}

The purpose of this section is to show that a first-order phase transition already arises with only $3$ pages for Book-Ising ($q=2$). This corresponds to $N^*(q=2) \leq 3$ and our proof is consistent with the prediction from \cite{SMP10,Ste14}.
 The main technique will involve random currents. To highlight the main ideas, before handling the graph $\Book_{3}$, we will start in the subsection below with an interesting question on its own where a positive (i.i.d) density of sites along the middle line $\Z\subset \Book_3$ are oriented in the $+$ direction. We will only give a short sketch of proof for the toy-model and will leave the detailed proof to the  true Book-Ising (as such the former may be viewed as an outline of proof of the second in a simpler setting).

\subsection{A positive density of $+$ is indistinguishable from a $+$ boundary condition.}

In this subsection, we will give a sketch of proof of the following result: the decoupling property from Theorem \ref{thm:Ising2} holds in the simpler setting of the half-plane where a positive density of sites on $\mathbb Z$ are wired together. This will serve as a useful toy model for Theorem \ref{thm:Ising2}. The reader comfortable with the random current terminology may skip this section if needed. The statement above is a 2D version of a result by  Bodineau \cite{Bod06}. 

Let us set some notations: $\rho\in[0,1]$ will denote the bias of our Bernoulli quenched disorder along the line $\Z\times \{0\}$. Let $\eta\sim \mathrm{Bernoulli}(\rho)^{\otimes \Z}$ be i.i.d.~Bernoulli random variables attached to each site $i$ on the middle line $\Z=\Z\times \{0\}$. Given $\eta$, $\mathbb H^\eta$ will denote the (random) graph where all points $i\in \Z$ for which $\eta_i=1$ are wired together. 
Finally, $\mu_{\H^\eta,\beta_c}^+$ denotes the Ising measures on $\H^\eta$ with  $+$ boundary conditions. 
\begin{theorem}\label{th.BookFactor}
For any $0<\rho<1$ and any $m, L \in \N$, there exist $c,C\in(0,\infty)$ so that for 
any set $A$ made of $m$ vertices at a distance at least  $L$ of  $\mathbb Z$, then with probability at least $1-L^{-C}$ in the quenched disorder $\eta$, we have  
\begin{align*}
\mu_{\H^\eta,\beta_c}^+[ \sigma_A]&= \mu_{\H,\beta_c}^{+}[\sigma_A]
(1-O_m(L^{-c}))\,.
\end{align*}
\end{theorem}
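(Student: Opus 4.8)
The plan is to prove Theorem~\ref{th.BookFactor} by comparing the Ising measure on the random graph $\H^\eta$ to the one with full $+$ boundary conditions along $\Z$, using the switching lemma for random currents. First I would reduce to the case where $A$ lies entirely in a half-ball $\Lambda_{L/2}^+$ near $\Z$ is \emph{avoided}, i.e.\ where $A$ sits at height $\geq L$; by monotonicity (Griffiths) and the fact that $+$ boundary conditions on all of $\Z$ dominate wiring only the $\eta$-sites, we always have $\mu_{\H^\eta,\beta_c}^+[\sigma_A]\leq \mu_{\H,\beta_c}^+[\sigma_A]$, so the content is the matching lower bound. The idea is that with overwhelming probability in $\eta$, every reasonably long interval of $\Z$ contains $\Theta(\rho\,\text{length})$ wired sites, and these suffice to "propagate" the $+$ phase along $\Z$ as effectively as a fully wired line.

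The key steps, in order, are the following. (1) \textbf{Random current representation}: write $\mu_{\H^\eta,\beta_c}^+[\sigma_A]$ as the (ghost-vertex) current expression, so that the ratio $\mu_{\H^\eta,\beta_c}^+[\sigma_A]/\mu_{\H,\beta_c}^+[\sigma_A]$ becomes, via the switching lemma, a probability that a source-free current on the double graph connects the sources in $A$ to the "wired cluster" of $\Z$. Concretely, both quantities can be expressed through $\langle\sigma_A\rangle^+$, and their ratio is governed by whether the sources of $A$ get connected, in the random current, to the appropriately wired part of $\Z$ (the ghost for the $+$ b.c., versus the $\eta$-wired sites together with the ghost). (2) \textbf{Quenched density estimate}: a Chernoff bound shows that with probability $\geq 1 - L^{-C}$, in the box $\Lambda_{L}$ (and all dyadic boxes between scale $L^c$ and $L$ that we need), a positive fraction of sites of $\Z\cap\Lambda_r$ are $\eta$-open, for every relevant scale $r$. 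Fix such a good $\eta$. (3) \textbf{Coupling of the two currents / spin configurations}: on a good $\eta$, I would show that the FK-type random cluster measure (or directly the Ising model) on $\H^\eta$ with $+$ b.c.\ stochastically dominates, far from $\Z$, the one on $\H$ with fully wired $\Z$ up to an error $O(L^{-c})$. The mechanism: use the one-arm / disconnection estimates in $\H$ (the exponents $\alpha^+_{\rm wired}(2)=\tfrac12$ and $\alpha^+_{\C}(2)$ recalled in Subsection~\ref{ss.arm}) to show that conditionally on the current reaching a neighborhood of $\Z$ at scale $r$, it already sees $\Theta(\rho r)$ wired sites with high probability, hence the difference between "wired at the $\eta$-sites" and "wired everywhere" costs only a negligible probability — quantitatively, the probability that a current touching $\Z$ near scale $r$ fails to touch an $\eta$-site decays polynomially in the number of $\eta$-sites it could have hit, which is $\Theta(\rho r)$. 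Summing the geometric series of scale-$r$ contributions gives the $O_m(L^{-c})$ error, and the factor $O_m$ comes from a union bound over the $m$ vertices of $A$ (and the polynomial rate $c$ can be tracked, though we do not optimize it).

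The main obstacle I expect is Step~(3): making precise the statement that "an $\eta$-wired line with positive density looks like a fully wired line to a random current" — this is exactly where one needs a local, scale-by-scale comparison. The cleanest route is to run the same multiscale/renormalization bookkeeping as in Section~\ref{sec:3}: define good blocks along $\Z$ (here: blocks containing the expected density of $\eta$-sites \emph{and} a wired cluster of comparable size in the current), show these propagate across scales, and conclude that the trace on $\Z$ of the $+$-phase cluster on $\H^\eta$ agrees with that on fully-wired $\H$ outside an event of probability $O(L^{-c})$. Once the trace on $\Z$ agrees (as a random partition of $\Z$, up to a rare exceptional event), the correlation $\sigma_A$ with $A$ at height $\geq L$ only depends on this trace through the harmonic-type / crossing estimates in $\H$, and the two expectations coincide up to $O_m(L^{-c})$. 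This toy argument is deliberately kept at the sketch level; the fully detailed version of essentially the same strategy, carried out on $\Book_3$ where the rôle of the "positive density of $+$" is played by the (random) trace on $\Z$ of the cluster living in the third page, is what the remainder of Section~\ref{sec:5} provides.
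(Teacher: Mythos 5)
Your plan is on the right track at the level of ingredients — you correctly identify the random-current representation, the switching lemma identity that reduces the ratio $\mu_{\H^\eta}^+/\mu_{\H}^+$ to a connection probability under a product of two current measures, a Chernoff bound for the $\eta$-density, and a coupling of sourceless random currents with $q=2$ FK percolation. You also correctly locate the difficulty in your Step~(3). But the mechanism you propose for Step~(3) is not the one the paper uses, and I think as stated it has a gap.

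Your Step~(3) argues that ``the probability that a current touching $\Z$ near scale $r$ fails to touch an $\eta$-site decays polynomially in the number of $\eta$-sites it could have hit''. That would be a statement about the \emph{first} current $\n_1$ (the one with sources $A\cup\{\mathrm{ghost}\}$) hitting an $\eta$-wired vertex directly, and it is not clear how to control this: knowing that $\n_1$ reaches $\Z$ at some point does not by itself give you a good lower bound on the probability that it reaches one of the sparse $\eta$-sites rather than only $\eta$-closed sites. The paper's sketch gets around this by a different and more robust device: the index $u$ where $\n_1$ hits $\Z$ is \emph{measurable in $\n_1$ alone}; one samples $\n_1$ \emph{first}, then samples $\eta$ (so the density estimate around the deterministic point $u$ is automatic by independence), and then uses the \emph{second, sourceless} current $\n_2\sim\mathbf P^\emptyset_{H^{\eta,+}}$ to build a \emph{bridge} of $\n_2>0$ connecting $\eta$-wired sites on both sides of $u$. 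Planarity then forces this bridge to intersect the $\n_1$-path from $x$ to $u$, so $x$ becomes connected to the ghost in $\widehat{\n_1+\n_2}$. The cost of producing this bridge is estimated at the polynomial rate $r^{-c}$, via the FK coupling and the arm exponents you cite, and the $O_m$ comes from the union bound over the at most $m$ sources, exactly as you anticipated. Without the $\n_2$-bridge idea (and the ordering ``$\n_1$, then $\eta$, then $\n_2$'' that makes the Chernoff bound usable), your outline does not quite close; the suggestion of re-running the whole Section~\ref{sec:3} multiscale machinery for this toy model is heavier than what the paper does here (it reserves that bookkeeping for the Book), although in principle it is compatible with the same bridging idea.
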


\begin{remark}
Note that by Griffiths inequalities, we deduce that if we consider the graph $\mathbf B_3^\eta$ constructed like $\H^\eta$ but from $\mathbb B_3$ instead of $\mathbb H$, we immediately get that for every three sets $A\subset \H^1$, $B\subset\H^2$ and $C\subset H^3$ made of vertices all at a distance at least $L$ from $\Z$, we have that with probability at least $1-L^{-C}$ in the quenched disorder $\eta$, we have that
\begin{align*}
\mu_{\mathbb B_3^\eta,\beta_c}^+[ \sigma_A\sigma_B\sigma_C]&=\mu_{\H,\beta_c}^{+}[\sigma_A]\mu_{\H,\beta_c}^{+}[\sigma_B]\mu_{\H,\beta_c}^{+}[\sigma_C](1-O_m(L^{-c}))\,.
\end{align*}
\end{remark}

The proof of the theorem  requires the introduction of another representation, called the {\em random-current representation}. We refer to \cite{Aiz82,Dum16} for details on this representation and briefly define it here. A {\em current} $\n$ on $G=(V,E)$ is a function from $E$ to $\mathbb N:=\{0,1,2,\dots\}$.    A {\em source} of $\n=(\n_{xy}:xy\in E)$ is a vertex $x$ for which $\sum_{y\sim x}{\n}_{xy}$ is odd. The set of sources of $\n$ is denoted by $\partial\n$. Also set
$$w_\beta(\n):=\prod_{xy\in E}\frac{\displaystyle\beta^{\n_{xy}}}{\n_{xy}!}.$$
Currents are useful as they lead to the following expression for spin-spin correlations:
\begin{equation}\label{eq:erg}
\mu_{G,\beta}^f[\sigma_x\sigma_y]=\frac{\displaystyle\sum_{\partial\n=\{x,y\}}w_\beta(\n)}{\displaystyle\sum_{\partial\n=\emptyset}w_\beta(\n)}.
\end{equation}
For more general spin-observable $\sigma_A$, $A\subset V$, one has the expression
\begin{align*}\label{}
\mu_{G,\beta}^f[\sigma_A]=\frac{\displaystyle\sum_{\partial\n=A}w_\beta(\n)}{\displaystyle\sum_{\partial\n=\emptyset}w_\beta(\n)}\,.
\end{align*}

Also, a classical use of the {\em switching lemma} enables one to compare the spin-spin correlations on two graphs $H\subset G$ as follows:
\begin{equation}\label{eq:crucial RC}
\mu_{H,\beta}^f[\sigma_x\sigma_y]=\mu_{G,\beta}^f[\sigma_x\sigma_y]\mathbf P^{\{x,y\}}_{G,\beta}\otimes\mathbf P^\emptyset_{H,\beta}[\exists\text{ path of }\n_1+\n_2>0\text{ in }H\text{ from $x$ to $y$}],
\end{equation}
where  $\mathbf P^A_{G,\beta}$ attributes a weight proportional to $w_\beta(\n)$ if $\partial\n=A$ and $0$ otherwise, and the sign $\otimes$ means that we take the product measure (see for example  \cite[Lemma 2.2]{ADCS15}).
For a general spin-observable $\sigma_A$, with $A\subset H$, the identity becomes 
\begin{equation*}
\mu_{H,\beta}^f[\sigma_A]=\mu_{G,\beta}^f[\sigma_A]\mathbf P^{A}_{G,\beta}\otimes\mathbf P^\emptyset_{H,\beta}[\widehat{\n_1+\n_2}_{\md H} \in  \calF_A],
\end{equation*} 
where $\widehat\n\in\calF_A$ is the event that any cluster of $\n>0$ intersecting $A$ must intersect $A$ at an even number of points.  

We shall also need these expressions in the case of a graph  $G$ with $+$ boundary conditions. This means that some points (called the {\em boundary} of $G$) are connected to an extra vertex called the {\em ghost} vertex. (See \cite{Dum16} for a detailed exposition).  In such a case, the last expression for example reads as follows:
\begin{equation*}
\mu^+_{H,\beta}[\sigma_A]=\mu^+_{G,\beta}[\sigma_A]\mathbf P^{A}_{G^+,\beta}\otimes\mathbf P^\emptyset_{H^+,\beta}[\widehat{\n_1+\n_2}_{\md H} \in  \calF_A],
\end{equation*} 
where currents $\n_1,\n_2$ under $\P_{G^+,\beta}^A$ and $\P_{H^+,\beta}^\emptyset$ are now allowed to go through the ghost vertex and their boundary $\p\n_1,\p\n_2$ is only considered on all vertices but the ghost (also $\calF_A$ is now the event that all the clusters which intersect $A$ are either connected to the ghost or intersect $A$ at an even number of points). 

\begin{proof}[Sketch of proof of Theorem~\ref{th.BookFactor}]
We will show that there exists $c\in(0,\infty)$ such that for every $A$ made of $m$ vertices at a distance at least $L$ from $\Z$, 
\[
\E\big[\mu_{\H^\eta,\beta_c}^{+}[ \sigma_A ]\big] \geq 
\mu_{\H,\beta_c}^{+}[ \sigma_A ](1-O_{m}(L^{-c})).
\]
For any large $M$, let $H^{\eta,+}$ be the finite random graph obtained from $\H^\eta$ by connecting all the vertices $i\in \p \H\cap \Lambda_M$ which are such that $\eta_i=1$ to the ghost. We have 
\[
H^{\eta,+}\subset G^+:=\H \cap \Lambda_M \text{ with all points in $\p \H$ connected to the ghost}\,.
\]
Applying the above formula, it remains  to bound from below (for any large $M$) the following average with respect to $\eta$:
\begin{align*}\label{}
\E\big[ \mathbf P^{A}_{G^+,\beta}\otimes\mathbf P^\emptyset_{H^{\eta,+},\beta}[\widehat{\n_1+\n_2}_{\md H^{\eta,+}} \in  \calF_A] \big]\,.
\end{align*}
\begin{figure}[!htp]
\begin{center}
\includegraphics[width=\textwidth]{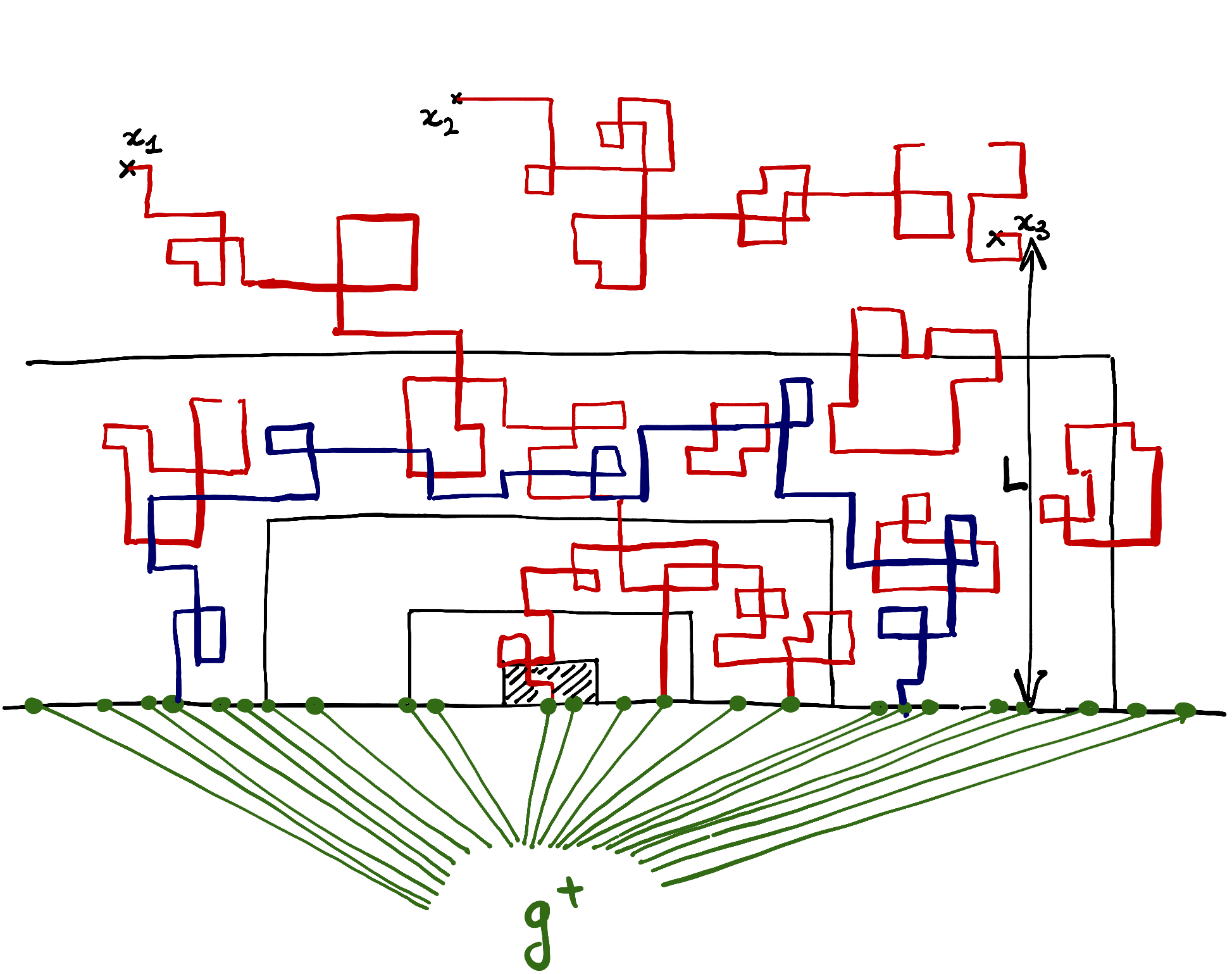}
\end{center}
\caption{The example of the graph $\H^{\eta,+}$ with the additional connections to the ghost (in green). We also depicted the current $\n_1$ in red, as well as a bridge of $\n_2$ (in blue) guaranteeing that a connection in $\H^\eta$ to the ghost. }\label{f.sketch1}
\end{figure}
We proceed as follows:
\bnum
\item Sample $\n_1$ according to $\mathbf P^{A}_{G^+,\beta}$. If $\n_1$ restricted to $G\setminus\H$ is already in $\calF_A$, then whatever $\eta$ and  $\n_2$ are, we must have  $\widehat{\n_1+\n_2}_{\md H^{\eta,+}} \in  \calF_A$. 
\item Suppose then that this is not the case. Then, there is at least one cluster of $\n_1$ intersecting $A$ and reaching $\p \H$ (and there are of course at most $m=|A|$ such clusters). By union bound, let us focus on the case of only one point, say $x$, and let us assume that $\n_1$ connects $x$ with $\p \H$. Among all points in $\Z$ which are connected to $x$ via $\n_1$, choose $u=(k,0)$ to be, say, the furthest on the left and to lighten the notation assume it is equal to the origin.  Notice that $u$ is measurable with respect to $\n_1$ and that neither $\eta$ nor $\n_2$ have been sampled yet.  
\item Let then sample $\eta$ so that the subgraph $H^{\eta,+}\subset G^+$ is now well defined. 
\item In order to match with the setup of Lemma \ref{lem:estimate one arm} below, set $r:=L^{1/2}$, $s:=L^{1/4}$ and $R:=L \leq \dist(x,\Z)$. 
As $k$ has been localized before sampling $\eta$, we can claim that with probability at least $1-\exp(-c s)$, $\eta$ will be sufficiently dense in each $s$-interval  included in  $[k-R,k-r]\cup[k+r,k+R]$. (We will be more explicit in the proof of Lemma \ref{lem:estimate one arm}). 
\item Assuming $\eta$ is sufficiently dense around $u$, the rest of the proof consists in showing that with probability at least $1-r^{-c}$, the current $\n_2$ will create a bridge from $\{i\in [k-R,k-r], \eta_i=1\}$ to $\{i\in[k+r,k+R], \eta_i =1\}$ which, by planarity, will necessarily intersect $\n_1$. As such $x$ will be connected to the ghost via $\widehat{\n_1+\n_2}_{\md H^{\eta,+}}$ as desired. 
\item Finally, the proof that the above bridging property holds with probability $1-r^{-c}$ relies on a coupling between sourceless random-current and FK percolation with parameter $q=2$. This coupling will be  described before the proof of Lemma \ref{lem:estimate one arm}; see Fig.~\ref{f.sketch1}.
\enum 
\end{proof}

We also discuss a slightly more difficult theorem, but which is closer to the one in the next section. Let  $\P_{\H^\eta}^1$ be the FK percolation measure where all the sites $\{ i\in \Z, \eta_i=1\}$ are wired together.   

\begin{theorem}\label{th.Hbernoulli}
For any $0<\rho<1$ and any $1\leq r \leq R$, there exist $c,C\in(0,\infty)$ so that with probability at least $1-R\exp(-c\, r^{-1/2})$ in the quenched disorder $\eta$, we have 
\begin{align}
\FK{\H^\eta}{1}{\Lambda_r\stackrel*{\longleftrightarrow}\partial\Lambda_R} \leq C\big[ (\tfrac rR)^{1/2} + r^{-c}\big]\,. 
\end{align}
\end{theorem}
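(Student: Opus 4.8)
The plan is to reduce the bound on the half-plane dual one-arm probability $\FK{\H^\eta}{1}{\Lambda_r \stackrel*\longleftrightarrow \partial\Lambda_R}$ to a comparison between the randomly-wired graph $\H^\eta$ and the fully-wired half-plane $\H$ with wired boundary conditions on $\partial\H$, exactly in the spirit of the random-current reduction used for Theorem~\ref{th.BookFactor}, but now phrased directly in the FK/percolation language since $q=2$. First I would pass to the complementary primal event: $\{\Lambda_r \not\stackrel*\longleftrightarrow \partial\Lambda_R\}$ contains the event that there is an open circuit in the annulus $\Lambda_R \setminus \Lambda_r$ (together with the wiring this surrounds $\Lambda_r$). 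Conversely, a dual crossing from $\partial\Lambda_r$ to $\partial\Lambda_R$ forces that there is no open path in $\omega$ from a vertex of $\Lambda_r$ near $\Z$ all the way to $\partial\Lambda_R$ that uses the wiring. The key point is that when $\eta$ is sufficiently dense at all scales between $r$ and $R$ along the relevant portion of $\Z$, the wired sites act collectively like the full wiring of $\partial\H$, so $\FK{\H^\eta}{1}{\cdot}$ is comparable (up to the bad-$\eta$ error) to $\P^1_{\H,p_c}[\partial\Lambda_r \stackrel*\longleftrightarrow \partial\Lambda_R] = a^+_{\mathrm{wired}}(r,R)$, which by $\alpha^+_{\mathrm{wired}}(2) = \tfrac12$ and the quantitative RSW estimates from \cite{DMT20} is bounded by $C(r/R)^{1/2}$ up to a polynomial slack — this gives the $(r/R)^{1/2}$ term; the $r^{-c}$ term absorbs the loss coming from the innermost annulus $\Lambda_r$ where $\eta$ may fail to be dense.

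The main steps, in order, are: (1) define the good-$\eta$ event $\mathcal G$ that every dyadic sub-interval of $[-R,-r]\cup[r,R]$ of length $s \geq r^{1/2}$ along $\Z$ contains at least a $\rho/2$-density of wired sites; by a union bound over the $O(\log R)$ scales and $O(R/s)$ intervals with a standard Chernoff estimate, $\P[\mathcal G^c] \leq R\exp(-c\, r^{-1/2})\cdot(\text{something})$ — one must check the arithmetic matches the claimed $R\exp(-c r^{-1/2})$. (2) On $\mathcal G$, set up the annulus decomposition $\Lambda_{2^k r} \setminus \Lambda_{2^{k-1} r}$ for $1 \leq 2^k \leq R/r$; in each annulus, use a one-sided RSW/conditioning argument together with the density of wired sites on $\Z$ in that annulus to show that the dual-arm must survive each annulus, and that conditionally on the configuration outside, surviving scale $2^{k-1}r$ to $2^k r$ has probability at most $1 - c_0$ for a constant $c_0 > 0$ depending on $\rho$ (this is where the wiring on $\Z$ is used: a dense set of wired sites on the bottom plus RSW forces a half-annulus open crossing with positive probability, which blocks the dual arm). (3) Multiply over the $\log_2(R/r)$ scales to get an upper bound $(1-c_0)^{\log_2(R/r)} \lesssim (r/R)^{c}$; then sharpen the exponent to $1/2$ by invoking the precise arm-exponent bound $a^+_{\mathrm{wired}}(k,n) \leq c_1^{-1}(k/n)^{1/2+c_1}$ from \eqref{eq:estimate two arm q>2} / \cite{DMT20} applied between $r$ and $R$, which is legitimate once the problem has been transferred — via the finite-energy and comparison-between-boundary-conditions arguments on the densely wired bottom line — to the genuinely wired half-plane at scales above $r$. (4) Finally, add the $r^{-c}$ term to cover the innermost region $\Lambda_r$ where $\eta$-density is not controlled, using the trivial bound that the arm must anyway cross $\Lambda_r \setminus \Lambda_{r/2}$ and $\alpha^+_{\mathrm{free}}(2) > 0$.

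The hard part will be Step (2): making precise the statement that a dense (positive-density) set of wired vertices on the bottom line of a half-annulus behaves, for the purpose of blocking a dual arm, like a fully wired bottom. In the $q=1$ case this would be immediate from independence plus RSW, but for $q=2$ the cluster-weight introduces long-range dependence, so one needs the comparison-between-boundary-conditions inequality carefully coupled with the mixing/RSW package of \cite{DumSidTas13}: conditioned on everything outside the half-annulus, the induced boundary condition on the $\eta$-wired sites dominates a boundary condition where one genuinely wires an interval of $\Z$ of comparable length (because the wired sites, being at density $\rho/2$, can be linked through the annulus above with positive conditional probability by RSW, then the resulting macroscopic cluster can be treated as wired). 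This is exactly the type of bootstrapping that appears in the sketch of Theorem~\ref{th.BookFactor} via the coupling between sourceless random current and FK percolation with $q=2$; I expect the cleanest route is to run the whole argument through that current/FK coupling rather than in pure FK language, so that the bridging estimate (item 6 of that sketch) can be quoted directly. The scale $s = r^{1/2}$ (hence the error $r^{-c}$ and $\exp(-c r^{-1/2})$) is dictated by balancing the Chernoff bound against the number of scales; optimizing it is routine once the structure is in place.
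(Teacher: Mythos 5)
Your overall instinct---that one must use the random-current bridging machinery to upgrade the sparse $\eta$-wiring to a full wiring, and that the relevant scale is $s=r^{1/2}$ with a Chernoff-type bound controlling the density of $\eta$---matches the paper's reasoning. But your step (3) contains the real gap: the scale-by-scale annulus iteration only yields $(1-c_0)^{\log_2(R/r)}\lesssim (r/R)^{c}$ for some small, uncontrolled $c>0$, and the proposed ``sharpening to $1/2$'' is not actually a step of the argument---it is the entire difficulty. The monotonicity you invoke (``the densely wired bottom line dominates a genuinely wired interval'') goes in the \emph{wrong} direction: more wiring increases primal connectivity and so decreases the dual one-arm probability, hence wiring only a $\rho$-fraction of sites is \emph{worse} than wiring them all, and an easy comparison between boundary conditions gives you only a \emph{lower} bound on $\FK{\H^\eta}{1}{\Lambda_r\stackrel*\longleftrightarrow\partial\Lambda_R}$ in terms of $a^{+}_{\mathrm{wired}}(r,R)$, not the upper bound you need. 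A domination by an intermediate macroscopic cluster cannot on its own beat this obstruction; one needs a genuinely quantitative multiplicative comparison, which is exactly what the switching-lemma identity~\eqref{eq:crucial RC} provides.

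The paper's proof is in fact structurally simpler than what you propose and avoids the sharpening issue altogether. Rather than iterating over annuli, it passes once to the two-point function $\mu_{G^\eta,\beta_c}[\sigma_{g^+}\sigma_{g^-}]$ between the two ghost vertices obtained by collapsing the $\eta$-wired sites in $[-R,-r]$ and $[r,R]$, and writes it via~\eqref{eq:crucial RC} as the product of the fully wired two-point function $\mu_{\bar G,\beta_c}[\sigma_{g^+}\sigma_{g^-}]\ge 1-C(r/R)^{1/2}$ (extracted from \cite{DHN11}) and a bridging probability for $\n_1+\n_2$ in $G^\eta$, which is shown to be $\ge 1-Cr^{-c}$ using the density of $\eta$ at scale $s=r^{1/2}$ and the coupling between sourceless currents and FK with $q=2$. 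This factorization is what cleanly produces the two separate error terms $(r/R)^{1/2}$ and $r^{-c}$ in the statement: the former does not have to be extracted a posteriori from a crude scale-iteration bound, because it is read off directly from the wired half-plane. You also omit the caveat, which the paper points out and which a careful proof has to handle, that the source pair $(u^-,u^+)$ produced by $\n_1$ could fall within distance $r^{3/4}$ of $\partial\Lambda_r$ or $\partial\Lambda_R$, in which case the bridging argument degenerates and one must separately bound this bad event by $Cr^{-c}$.
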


\begin{proof}[Steck of proof of Theorem \ref{th.BookFactor}]
Define the subsets of $\Z$:
\begin{align*}\label{}
&\mathbf I^-:=\{i\in[-R,-r]: \eta_i=1\}\subset [-R, -r]=:\mathbf J^-, \\ 
&\mathbf I^+:=\{i\in[r, R]: \eta_i=1\}\subset[r,R]=:\mathbf J^+.
\end{align*}

Let $G^\eta$ be the finite graph $\Lambda_R \setminus \Lambda_r$ in which all vertices in $\mathbf I^-$ are connected (wired) to a ghost vertex $g^-$ and all vertices in $\mathbf I^+$ are connected (wired) to a different ghost vertex $g^+$ and where the rest of of the boundary is {\em free}. Let also $\bar G$ be the graph where all vertices in $\mathbf J^-$ are connected  to $g^-$ while all vertices in $\mathbf{J^+}$ are connected to $g^+$ and where the rest of the boundary is {\em free}. The monotony properties of FK percolation and the Edwards-Sokal coupling give that 
\begin{align*}\label{}
&1-\FK{\H^\eta}{1}{\Lambda_r\stackrel*{\longleftrightarrow}\partial\Lambda_R}  \geq 
1-\FK{G^\eta}{}{\Lambda_r\stackrel*{\longleftrightarrow}\partial\Lambda_R}= \mu_{G^\eta,\beta_c}[ \sigma_{g^+} \sigma_{g^-}].
\end{align*}
Notice that we have $G^\eta \subset \bar G$. Similarly as in the above proof, we may now use the {\em switching lemma} via the above identity~\eqref{eq:crucial RC} to obtain
\begin{align*}\label{}
&\mu_{G^\eta,\beta_c}[ \sigma_{g^+} \sigma_{g^-} ] \\
&=
\mu_{\bar G,\beta_c}[ \sigma_{g^+} \sigma_{g^-} ]
\mathbf P^{\{g^+,g^-\}}_{\bar G,\beta_c}\otimes\mathbf
P^\emptyset_{G^\eta,\beta_c}
[\exists\text{ path of }\n_1+\n_2>0\text{ in } G^\eta \text{ from $g^+$ to $g^-$}
]\,.
\end{align*}
From now on, the proof can be concluded in two steps.
\bnum
\item It can be extracted for instance from \cite{DHN11} that 
$\mu_{\bar G,\beta_c}[ \sigma_{g^+} \sigma_{g^-} ] \geq 1 - C( \tfrac r R)^{1/2}\,.
$
\item 
The second step is very similar to the argument outlined above: under $\mathbf P^{\{\sigma^+,\sigma^-\}}_{\bar G,\beta_c}$, $\n_1$ will connect at least a point $u^-\in \mathbf J^-$ to a point $u^+ \in \mathbf J^+$. The goal is thus to show that $\n_2$ will create with high probability $1-Cr^{-c}$ a {\em bridge} which will connect $\mathbf I^-$ and $\mathbf I^+$ to $\n_2$. To prove this, we argue as above: we set an intermediate scale $s:=r^{1/2}$ and we claim that with probability at least $1-R\exp(-c\, r^{1/2})$, the sets $\mathbf I^-=\mathbf I^-(\eta)$ (resp $\mathbf I^+$) will be sufficiently dense at scale $s$ to create many {\em bridges} thanks a coupling between sourceless random-current and FK percolation with parameter $q=2$ (see the proof of Lemma~\eqref{lem:estimate one arm} for a detailed proof). 

At this stage, there is a subtle but important difference compared to the argument in the previous sketch of proof. It could be that the points $u^-$ and $u^+$ could be close to $\partial\Lambda_r$ or $\partial\Lambda_R$. In such case, one cannot really use the argument described above. Yet, it can be proved in this case that the probability that $\n_1$ itself is connected to a vertex of $\Z$ close (say at a distance at most $r^{3/4}$) to $\partial\Lambda_r$ or $\partial \Lambda_R$ is bounded by $Cr^{-c}$ (see the next section for details).
\enum

\end{proof}

\subsection{Proof of Theorem \ref{thm:Potts*}(i)}

The core of the proof will be the following result.

\begin{proposition}\label{prop:technical}
There exist $c_0,C_0\in(0,\infty)$ such that for every $r$ and $R$ such that $r$ divides $R$, every $\lambda\ge p_c$, every $K\ge2$, and every $\theta>\tfrac34$,
\[
\mathbb P_{B_{KR}}^0[F(Kr,KR)^c]\le C_0\Big[\big(\frac rR\big)^{N/2}+r^{-c_0}+R\exp(-c_1\sqrt r)\Big]+2R\,p_\lambda(K,\theta).
\]
\end{proposition}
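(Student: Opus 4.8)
The plan is to prove Proposition~\ref{prop:technical} by combining the random-current machinery developed in the toy models (Theorems~\ref{th.BookFactor} and \ref{th.Hbernoulli}) with the block structure of the book. The starting point is the observation that $F(Kr,KR)^c$ is the event that in \emph{every} page $\H^u$ there is a dual-open path from $\partial\Lambda_{Kr}$ to $\partial\Lambda_{KR}$; equivalently, by planar duality inside each page, there is \emph{no} primal circuit separating $\partial\Lambda_{Kr}$ from $\partial\Lambda_{KR}$ in $\H^u$. I would first reduce, via comparison between boundary conditions and the fact that $B_{KR}$ with free boundary conditions can be compared to a book where the gluing line $\Z$ carries a favourable configuration, to the situation where we already know that at scale $K$ the line $\Z\cap B_{KR}$ is, with probability $1-R\,p_\lambda(K,\theta)$, "$\theta$-dense": the blocks $B_K^j$ are $\theta$-good and their clusters $\mathbf C(B_K^j)$ are all connected together along $\Z$. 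Removing the bad-block error costs the term $2R\,p_\lambda(K,\theta)$ in the bound. On this good event, the wired cluster running along $\Z$ plays exactly the role of the random quenched-wired set $\mathbf I^\pm$ in the Bernoulli toy model: near the gluing line the effective boundary condition seen by each page is (comparable to) a $+$/wired boundary condition on a positive density of points of $\Z$.

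Next I would set up the Edwards--Sokal/random-current translation exactly as in the proof of Theorem~\ref{th.Hbernoulli}. Conditionally on the dense cluster along $\Z$, the probability that $F(Kr,KR)^c$ holds \emph{in a fixed page} $\H^u$ is bounded by the probability that $\partial\Lambda_{Kr}$ is dual-connected to $\partial\Lambda_{KR}$ inside $\H^u$, which by the switching-lemma identity \eqref{eq:crucial RC} factorizes into a "clean" two-ghost correlation $\mu_{\bar G,\beta_c}[\sigma_{g^+}\sigma_{g^-}]\ge 1-C(r/R)^{1/2}$ times a bridging probability for the sourceless current $\n_2$, which is $\ge 1-C r^{-c_0}$ once the line is dense at the intermediate scale $s:=\sqrt r$, an event of quenched probability $\ge 1-R\exp(-c_1\sqrt r)$ (here $r$ divides $R$ is used so that the interval $[-KR,KR]\cap\Z$ splits cleanly into $s$-scale sub-intervals). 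Thus, per page, the dual-crossing probability is at most $C[(r/R)^{1/2}+r^{-c_0}]+R\exp(-c_1\sqrt r)$. The key point — and the reason the exponent $N/2$ appears rather than $1/2$ — is \emph{independence across pages}: conditionally on the states of all edges touching $\Z$ (i.e.\ conditionally on the dense cluster and the induced boundary conditions), the FK configurations in the $N$ different pages are independent, so the probability that the dual crossing occurs \emph{in every page simultaneously} is the product, giving the $(r/R)^{N/2}$ main term after absorbing the lower-order contributions into $r^{-c_0}$ and $R\exp(-c_1\sqrt r)$.

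To make the per-page estimate rigorous I would carry out, in each page separately, the two-step argument from the sketch of Theorem~\ref{th.Hbernoulli}: (1) the clean correlation bound $\mu_{\bar G,\beta_c}[\sigma_{g^+}\sigma_{g^-}]\ge 1-C(r/R)^{1/2}$, which can be extracted from the RSW/mixing estimates of \cite{DumSidTas13} together with the half-plane one-arm input of \cite{DHN11}; and (2) the current-bridging estimate, which uses the coupling between sourceless $q=2$ random currents and critical FK percolation, plus the control of the "boundary case" where the current $\n_1$ reaches a point of $\Z$ within distance $r^{3/4}$ of $\partial\Lambda_{Kr}$ or $\partial\Lambda_{KR}$ — an event of probability at most $Cr^{-c_0}$ by a one-arm half-plane estimate. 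Since all these manipulations happen page by page and the conditioning is only on the line $\Z$, the product structure is preserved and one gets the claimed bound. I expect the main obstacle to be making precise the statement "the dense cluster along $\Z$ acts as a genuine wired (ghost) boundary condition for each page uniformly, with the error controlled by $p_\lambda(K,\theta)$": one has to be careful that a $\theta$-good block only wires a density-$\theta$ subset of $\Z$, not all of it, so the comparison must be phrased so that the remaining (un-wired) sites along $\Z$ only help the dual connection and can be absorbed, and one must check that the quenched-density estimates at scale $s=\sqrt r$ survive the additional averaging over the (now random, FK-driven rather than i.i.d.) identity of the wired sites — this is where the extra robustness of the large-deviation bound $R\exp(-c_1\sqrt r)$ is needed.
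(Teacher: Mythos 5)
Your proposal follows essentially the same route as the paper: condition on a $\theta$-dense cluster along $\Z$ (cost $2R\,p_\lambda(K,\theta)$) together with a large-deviation event making this density uniform at the intermediate scale $s=\sqrt r$ (cost $R\exp(-c\sqrt r)$), then bound the per-page dual-crossing probability via the switching-lemma/random-current bridging argument including the $r^{3/4}$ boundary-case control (giving $(r/R)^{1/2}+r^{-c}$ per page), and finally take a product over the $N$ pages. One correction to keep in mind when writing it up: the $N$ pages are \emph{not} conditionally independent given the configuration near $\Z$ (the cluster weight $q^{k(\omega)}$ still couples the pages through the induced wirings), and the paper instead obtains the product bound by revealing the pages one at a time and invoking comparison between boundary conditions against the minimal boundary condition $\xi^u$ on the not-yet-revealed pages, which does the job because the per-page event is decreasing.
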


We start by explaining how to adapt the proof of Theorem~\ref{thm:FK} using Proposition~\ref{prop:technical} to obtain   Theorem~\ref{thm:Potts*}(i).

\begin{proof}[Proof of Theorem~\ref{thm:Potts*}(i)]
Fix $K$ and $C$ and assume for a moment that $K$ is chosen so that $p_\lambda(K,\theta)\le 1/(2C)$. When applied to $R_k:=K(2\rho)^k$, we see that the previous proposition implies that 
\begin{align*}
\mathbb P_{B_{R_{k+1}}}^0&[F(R_k,\tfrac12R_{k+1})^c]\\
&\le C_0\Big[\big(\frac {2R_k}{R_{k+1}}\big)^{N/2}+\big(\frac K{R_k}\big)^{c_0}+\frac{R_{k+1}}{K}\exp\big(-c_1\big(\frac{R_k}K\big)^{1/2}\big)\Big]+\frac{R_{k+1}}{2KC}\\
&\le C_1\Big[\rho^{-N/2}+\big(\frac K{R_k}\big)^{-c_0}\Big]+\frac{R_{k+1}}{2KC}.
\end{align*}
From this, one can easily adapt the proof to reach the conclusion of the bridging lemma with $\alpha(2,N)=N/2$ (note that except for the first and last values of $k$, the right-hand side
 is bounded by $2C_1\rho^{-N/2}$). After this, Proposition~\ref{prop:not bridged} follows in the same way. Also, note that the assumption that $p_\lambda(K,\theta)\le 1/C$ is harmless as otherwise the statement of Proposition~\ref{prop:not bridged}
 is obvious. Once Proposition~\ref{prop:not bridged} has been obtained, the rest of the proof of the theorem is the same as for Theorem~\ref{thm:FK}.
\end{proof}

We conclude our paper by the proof of Proposition~\ref{prop:technical}.

\begin{figure}[!htp]
\begin{center}
\includegraphics[width=1.00\textwidth]{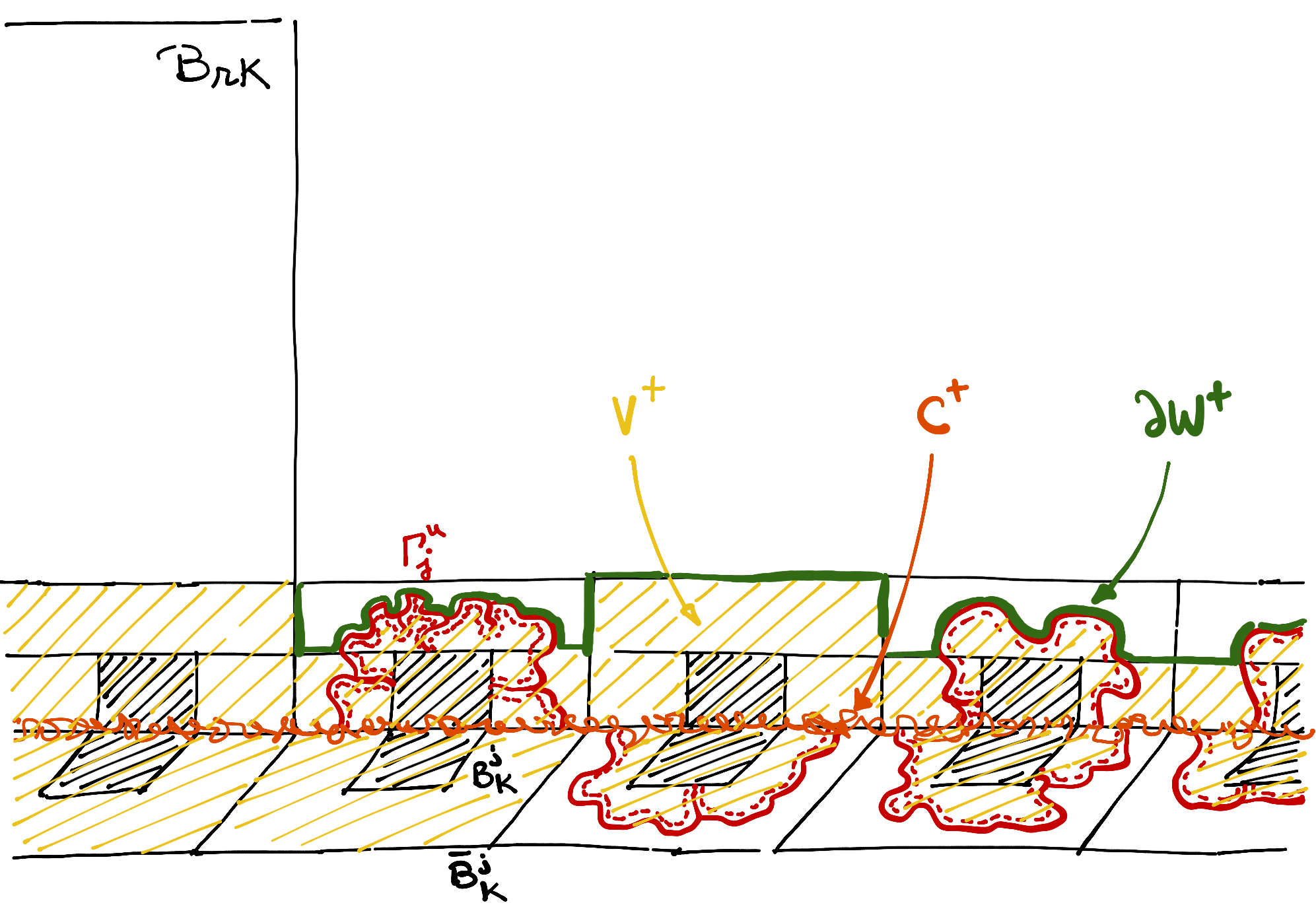}
\end{center}
\caption{A picture of the path $\Gamma^u_j$ as well as $\mathbf C^+$, $\mathbf V^+$ and $\mathbf W^+=\mathbf V^+\cap(B_{RK}\setminus B_{rK})$ respectively in red, yellow and green. }\label{fig:Ising}
\end{figure}

\begin{proof}[Proof of Proposition~\ref{prop:technical}]
Set $s:=\sqrt r$. We may assume $p_\lambda(K,\theta)< \tfrac1{2R}$ as otherwise the statement is obvious. 
Let $E$ be the event that all the $K$-blocks $B_K^i$ are $\theta$-good for $|i|< R$. By definition of $p_\lambda(K,\theta)$, 
\begin{equation}\label{eq:aa1}
\mathbb P_{B_{KR}}^0[E]\ge 1-2Rp_\lambda(K,\theta).
\end{equation}
Below, we recommend to take a look at Fig.~\ref{fig:Ising}. Introduce $\overline B_K^j$ to be the set of vertices at a $\ell^\infty$ distance at most $K$ from $B_K^j$. Define $\calB^{u\pm}$ to be the sets of indexes $j\in[-R,R]$ divisible by 6, positive or negative depending on whether $\pm$ is $+$ or $-$, for which $B_K^j$ is surrounded in $\overline B_K^j\cap \mathbb H^u$ by a circuit in $\omega$ connecting $\mathbf{C}(B_K^{j+2})$ to $\mathbf C(B_K^{j-2})$. Call the inner-most such circuit $\Gamma_j^u=\Gamma_j^u(\omega)$. Let 
\[
E':=\bigcap_{\substack{|i|\le R\\ u=1,\dots ,N \\
a\in\{\pm\}}}\big\{|\calB^{ua}\cap[is,(i+1)s]|\ge c_0 s\big\}
.
\]
Adapting the anchoring lemma and using that $p_\lambda(K,\theta)\le \tfrac12$, we deduce that for some $c_1,c_2>0$ small enough,
\[
\mathbb P_{\overline B_K^j}^0[j\in \calB^{ua},\forall u=1,\dots,N]\ge c_1(1-p_\lambda(K,\theta))^{2N}\ge c_2.
\]
Using the comparison between boundary conditions, we may compare to independent random variables to get that $|\calB^{ua}\cap[is,(i+1)s]|$ dominates a binomial random variable with parameters $s$ and $c_1$, so that for some constant $c_0>0$ small enough and independent of everything else,
\begin{equation}\label{eq:aa2}
\mathbb P_{B_{RK}}^0[E']\ge1- 2R\exp[-c_0s].
\end{equation}
We deduce from \eqref{eq:aa1} and \eqref{eq:aa2} that it suffices to show that 
\begin{equation}
\mathbb P_{B_{KR}}^0[F(Kr,KR)^c|E\cap E']\le C_3\Big[\big(\frac rR\big)^{N/2}+r^{-c_3}\Big].
\end{equation}
We now focus on deriving this inequality.

On $E\cap E'$, call $\mathbf C^+$ the union of the $\mathbf C(B_K^j)$ for $6\le j\le R$, which since $\theta>\tfrac34$ is made of one single cluster. From now on, let $\mathbf V^+=\mathbf V^+(\omega)$ be the (random) subset of $B_{RK}$ obtained as 
\begin{itemize}
\item the union of all the $B_K^j$ for $6\le j\le R$;
\item the vertices of $\mathbb H^u$ surrounded by the $\Gamma_j^u(\omega)$ for every $6\le j\le R$ divisible by 6 for which the path $\Gamma_j^u(\omega)$ exists for every $u$;
\item the union of the $\overline B_K^j$ for the remaining $6\le j\le R$ which are divisible by 6.
\end{itemize}
Similarly, one defines $\mathbf C^-$ and $\mathbf V^-$ with $-R\le j\le -6$ instead of $6\le j\le R$. Also, set $\mathbf E(\omega)$ be the set of edges with both endpoints in $\mathbf V^+\cup \mathbf V^-$.

Condition on $\omega_{|\mathbf E(\omega)}=\xi$ for some configuration $\xi \in\{0,1\}^{\mathbf E(\omega)}$ belonging to $E\cap E'$ (by this we mean that any configuration coinciding with $\xi$ on $E(\omega)$ is in $E\cap E'$).  
Let $\Omega^u$ be the graph induced by the  edges in $(B_{RK}\cap\mathbb H^u)\setminus (B_{K}\cup \mathbf V^+\cup \mathbf V^-)$  and $\xi^u$ be the boundary condition on $\Omega^u$ obtained from  the configuration equal to $\xi$ on $\mathbf E(\omega)$, and 0 on the remaining part of $B_{RK}\setminus \Omega^u$. 

The comparison between boundary conditions implies that for every $\xi\in E\cap E'$,
\[
\mathbb P_{B_{KR}}^0[F(Kr,KR)^c|\omega_{|\mathbf E(\omega)}=\xi]\le\prod_{u=1}^N(1-\mathbb P_{\Omega^u}^{\xi^u}[\mathbf C^-\longleftrightarrow \mathbf C^+]).
\]
It therefore suffices to prove that each term on the right is smaller than $C_4(r/R)^{1/2}$. From now on, we call a pair $(\Omega,\psi)$, with $\Omega$ a subset of $\mathbb H$ and $\psi$ a boundary condition on $\Omega$ {\em possible} if there exists $\xi\in E\cap E'$ and $u$ such that $\Omega=\Omega^u$ and $\psi=\xi^u$.  In this case we write $\mathbf V^\pm$ for the corresponding set (they can be read off from $\Omega$ and $\psi$ in a unique fashion). 

Let $\mathbf W^\pm$ be the intersection of $\mathbf V^\pm$ with $\Lambda_{RK}\setminus\Lambda_{rK}$. Consider the boundary condition $\xi\cup1$ obtained from $\xi$ by wiring all the vertices in $\mathbf W^+$ together, and all those in $\mathbf W^-$ together.
For every possible $(\Omega,\xi)$, going to the complement implies that
\begin{align*}
\mathbb P_{\Omega}^{\xi\cup1}[\mathbf W^+\longleftrightarrow \mathbf W^-]\ge 1-a_{wired}^+(rK,RK)\ge 1-C(r/R)^{1/2}.
\end{align*}
The following lemma will therefore conclude the proof.\end{proof}

\begin{lemma}\label{lem:estimate one arm}
There exist $c,C\in(0,\infty)$ independent of everything such that for every possible $(\Omega,\xi)$,
\[
\mathbb P_{\Omega}^{\xi}[\mathbf C^- \longleftrightarrow \mathbf C^+]\ge \mathbb P_{\Omega}^{\xi\cup 1}[\mathbf W^+\longleftrightarrow \mathbf W^-](1-Cr^{-c}).
\]
\end{lemma}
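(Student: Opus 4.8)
The plan is to run, in the single page $\mathbb H^u$ and for the fixed conditioned configuration $\xi\in E\cap E'$, the random-current bridging argument already outlined for Theorems~\ref{th.BookFactor} and~\ref{th.Hbernoulli}. First I would reduce both connection probabilities to Ising two-point functions via the Edwards--Sokal coupling. Writing $g^+$ (resp.~$g^-$) for the vertex obtained by contracting $\mathbf{C}^+\cap(\Lambda_{RK}\setminus\Lambda_{rK})$ together with the circuit-regions $\Gamma_j^u$, $j\in\calB^{u+}$ (resp.~the analogous set on the left) --- all of which are $\xi$-connected to $\mathbf{C}^+$, so that $\{g^+\leftrightarrow g^-\}$ implies $\{\mathbf{C}^+\leftrightarrow\mathbf{C}^-\}$ --- one gets $\mathbb P_\Omega^\xi[\mathbf{C}^+\leftrightarrow\mathbf{C}^-]\ge\mu_{\Omega',\beta_c}[\sigma_{g^+}\sigma_{g^-}]$ for the corresponding finite graph $\Omega'$. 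Contracting in addition all of $\mathbf{W}^\pm$ into $g^\pm$ produces a graph $\bar\Omega$, obtained from $\Omega'$ by further identifications, with $\mu_{\bar\Omega,\beta_c}[\sigma_{g^+}\sigma_{g^-}]=\mathbb P_\Omega^{\xi\cup1}[\mathbf{W}^+\leftrightarrow\mathbf{W}^-]$. The switching lemma (in the form~\eqref{eq:crucial RC}, allowing contractions) then gives
\[
\mu_{\Omega',\beta_c}[\sigma_{g^+}\sigma_{g^-}]=\mu_{\bar\Omega,\beta_c}[\sigma_{g^+}\sigma_{g^-}]\cdot\mathbf P^{\{g^+,g^-\}}_{\bar\Omega,\beta_c}\otimes\mathbf P^\emptyset_{\Omega',\beta_c}\bigl[\exists\text{ path of }\mathbf{n}_1+\mathbf{n}_2>0\text{ in }\Omega'\text{ from }g^+\text{ to }g^-\bigr],
\]
so everything reduces to bounding the displayed double-current connection probability from below by $1-Cr^{-c}$.

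Next I would sample $\mathbf{n}_1\sim\mathbf P^{\{g^+,g^-\}}_{\bar\Omega,\beta_c}$ and undo the contractions: the source constraint forces $\mathbf{n}_1$ to contain a path in $\mathbb H^u$ joining some $u^+\in\mathbf{W}^+$ to some $u^-\in\mathbf{W}^-$. If $u^+$ and $u^-$ already lie in the parts of $\mathbf{W}^\pm$ contracted in $\Omega'$, then $\mathbf{n}_1$ alone realizes the event, whatever $\mathbf{n}_2$ is. Otherwise $u^+$ sits in a spare half-block $\overline B_K^{j_0}$ with $j_0\notin\calB^{u+}$ (and similarly $u^-$), and I would argue as follows. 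If $u^+$ lies within $\ell^\infty$-distance $r^{3/4}K$ of $\partial\Lambda_{rK}$ or $\partial\Lambda_{RK}$, a half-plane one-arm estimate for the backbone of $\mathbf{n}_1$ --- of the same type as in the proof of Theorem~\ref{th.Hbernoulli}, obtained by comparing the trace of a current with critical FK$_2$ percolation and invoking the strong RSW bounds of \cite{DumSidTas13} --- shows that this happens with probability at most $Cr^{-c}$, so I discard this event. Away from these neighbourhoods, the event $E'$ supplies at least $c_0 s$ indices $j\in\calB^{u+}$ with $|j-j_0|\le s$; all the corresponding circuit-regions $\Gamma_j^u$ belong to the single contracted set $g^+$ of $\Omega'$, so that $u^+$ lies within distance $O(sK)$ of $g^+$, with several $K$-scale portions of $g^+$ at distance $\le sK$ from $u^+$. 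Running $\mathbf{n}_2$ on $\Omega'$ and using once more the coupling between sourceless random currents and critical FK$_2$ percolation together with RSW, one obtains that $\mathbf{n}_2$ connects $u^+$ to $g^+$ except with probability $Cr^{-c}$; the same applies to $u^-$ and $g^-$. A union bound over $u^+$ and $u^-$ then produces the bound $1-Cr^{-c}$, and combined with the reduction of the first paragraph this proves the lemma (recall that $\mathbb P_\Omega^{\xi\cup1}[\mathbf{W}^+\leftrightarrow\mathbf{W}^-]\ge 1-C(r/R)^{1/2}$ was already recorded before the statement).

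I expect the main obstacle to be the combination of these last two estimates: one has to control the bridging probability of $\mathbf{n}_2$, and the location of the backbone of $\mathbf{n}_1$, \emph{uniformly} over all possible pairs $(\Omega,\xi)$ and over the (random) position of $u^\pm$, the delicate regime being when $u^\pm$ approaches the inner circle $\partial\Lambda_{rK}$, which the one-arm estimate for the $\mathbf{n}_1$-backbone is precisely designed to rule out. This is the step where the renormalisation scheme of Proposition~\ref{prop:technical}, the strong RSW theory of \cite{DumSidTas13}, and the random-current/FK coupling must be combined carefully, and it is what limits the error term to a power $r^{-c}$ rather than an arbitrarily fast decay.
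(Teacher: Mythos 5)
Your reduction is the right one and in line with the paper: contract the relevant clusters into ghost vertices $g^\pm$ to build a graph $G$ (corresponding to $\mathbb P_\Omega^\xi[\mathbf C^+\leftrightarrow\mathbf C^-]$) sitting inside $\overline G$ (corresponding to $\mathbb P_\Omega^{\xi\cup1}[\mathbf W^+\leftrightarrow\mathbf W^-]$), invoke the switching lemma \eqref{eq:crucial RC} to turn the ratio into a $\mathbf P^{\{g^+,g^-\}}_{\overline G}\otimes\mathbf P^\emptyset_{G}$-probability, use the $\mathbf n_1$/FK coupling and a half-plane arm estimate to discard $\mathbf n_1$-backbones hitting within $r^{3/4}K$ of $\partial\Lambda_{rK}$ or $\partial\Lambda_{RK}$, and then bridge with $\mathbf n_2$. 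All of that matches the paper.

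The genuine gap is in the one sentence ``Running $\mathbf n_2$ on $\Omega'$ and using once more the coupling between sourceless random currents and critical FK$_2$ percolation together with RSW, one obtains that $\mathbf n_2$ connects $u^+$ to $g^+$ except with probability $Cr^{-c}$.'' This does not follow from RSW and the coupling. The coupling you invoke gives $\omega_e=\sup\{\mathbbm 1[\mathbf n_e>0],\eta_e\}$, i.e. $\omega\supseteq\{\mathbf n_2>0\}$; so an FK crossing in $\omega$ gives no crossing of $\{\mathbf n_2>0\}$, because the extra $\eta$-edges may be exactly those along your candidate path. Moreover a sourceless current has no reason to produce a single long arm reaching the specific point $u^+$. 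The paper resolves this with two ideas you leave out: (i) a multiscale/second-moment argument (over dyadic scales between $\sqrt rK$ and $r^{3/4}K$), using the fact that $\xi\in E'$ guarantees order-$2^k$ anchored circuits $\Gamma_a$ in each dyadic annulus, to obtain at least $c_0\log r$ \emph{disjoint} $\omega$-clusters crossing from the left to the right of $\overline B_K^j$ with probability $\ge 1-r^{-c_0}$; and (ii) the uniform even-subgraph property of $\{\mathbf n_2\text{ odd}\}$ inside $\omega$, which lets one pair these clusters and argue that each pair contains a loop of $\mathbf n_2>0$ straddling $\overline B_K^j$ with probability at least $\tfrac12$. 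It is this loop --- not a path to $u^+$ --- that, by planarity, necessarily meets the $\mathbf n_1$-backbone $\gamma$, producing the required path of $\mathbf n_1+\mathbf n_2>0$ from $g^+$ to $g^-$; the resulting error is $r^{-c_0}+2^{-(c_0/2)\log r}$, which is where $r^{-c}$ actually comes from. Without the even-subgraph/loop mechanism and the planarity argument, the bridging step as you state it does not go through.
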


As in the case of Theorems \ref{th.BookFactor} and \ref{th.Hbernoulli} for which we sketched the proofs earlier, 
the derivation of this lemma will rely on the {\em random-current representation}. 
A first key property will be the identity~\eqref{eq:crucial RC} which follows from the {\em switching lemma}. 

We will use also a second property of our model, which is a coupling between sourceless random-current and FK percolation with parameter $q=2$. More precisely, consider the coupling $\phi$ obtained by considering $\n\sim \mathbf P^A_{G,\beta}$ and $\omega$ obtained from $\n$ by setting
\[
\omega_e=\sup\{\mathbbm 1[\n>0],\eta_e\},
\]
where $(\eta_e:e\in E)$ is an independent family of Bernoulli random variables of parameter $1-e^{-\beta}$. Then, one has that under $\phi$, $\omega\sim\mathbb P_{G,p,2}^0[\cdot|\calF_A]$ with $p:=1-e^{-2\beta}$.
 While the recipe to get $\omega$ from $\n$ is obvious, let us mention that in the other direction, for $A=\emptyset$, one may recover the edges on which $\n$ is odd by taking a uniform even subgraph of $\omega$ (see \cite{ADTW19,GriJan09,LupWer15}).

We are now in a position to prove the lemma.

\begin{proof}
Consider the graphs $G$ obtained from $\Omega$ by identifying all the vertices in $\mathbf C^\pm$ into two vertices $g^\pm$, and $\overline G$  obtained from $G$ by identifying vertices in $\mathbf W^\pm\setminus\mathbf C^\pm$ to $g^\pm$. Note that $G$ can be seen as a subgraph of $\overline G$ where the latter is obtained from the former by adding edges with infinite coupling constants (or equivalently infinitely many edges with standard coupling constant) between the vertices of $\mathbf W^\pm\setminus \mathbf C^\pm$ and $g^\pm$). 

We have that 
\begin{align*}
\mathbb P_{\Omega}^{\xi}[\mathbf C^+\longleftrightarrow \mathbf C^-]&=\mu_{G,\beta_c}[\sigma_{g^+}\sigma_{g^-}],\\
\mathbb P_{\Omega}^{1}[\mathbf W^+\longleftrightarrow \mathbf W^-]&=\mu_{\overline G,\beta_c}[\sigma_{g^+}\sigma_{g^-}],
\end{align*}
so that \eqref{eq:crucial RC} gives that 
\[
\frac{\mu_{G,\beta_c}[\sigma_{g^+}\sigma_{g^-}]}{\mu_{\overline G,\beta_c}[\sigma_{g^+}\sigma_{g^-}]}=\mathbf P_{\overline G}^{\{g^+,g^-\}}\otimes\mathbf P_G^{\emptyset}[\exists\text{ path of }\n_1+\n_2>0\text{ in }G\text{ from $g^-$ to $g^+$}]
\]
and it therefore suffices to bound the probability on the right-hand side. 

First, observe that the coupling between random-current and FK percolation implies that
\begin{align*}
&\mathbf P_{\overline G}^{\{g^+,g^-\}}[\exists j\in[r,r+r^{3/4}]:B_K^j\text{ connected to distance $rK$ in }\mathbf 1[\n>0]\setminus g^+]\\
&\le 
\mathbb P^0_{\overline G,p_c,2}[\exists j\in[r,r+r^{3/4}]:B_K^j\text{ connected to distance $rK$ in }G\setminus g^+|g^-\leftrightarrow g^+]\\
&\le \frac{a_{\C}^+(r^{3/4}K,rK)}{\mathbb P^0_{\overline G,p_c,2}[g^-\leftrightarrow g^+]}\le Cr^{-c}.
\end{align*}
Similarly for $R-r^{3/4}\le j\le R$, $-R\le j\le -R+r^{3/4}$, and $-r-r^{3/4}\le j\le -r$.
We therefore may restrict to realizations of $\mathbf n_1$ that necessarily contain a path $\gamma$ of $\n_1(e)>0$ from $g^+$ to $g^-$, going say from $\overline B_K^i$ to $\overline B_K^j$, with 
\[
-R+r^{3/4}\le i\le -r-r^{3/4}\qquad\text{and}\qquad r+r^{3/4}\le j\le R-r^{3/4}.\]
As a consequence, it suffices to prove that in $\mathbf n_2$, with large probability there exists a path of $\n_2(e)>0$ from vertices in $\mathbf C^+$ respectively on the left and right of $\overline B_K^j$ (call the two parts $\mathcal L$ and $\mathcal R$). The same estimate will also holds for $-R\le i\le -r$.

In order to do that, we write $\mathbf n$ instead of $\mathbf n_2$ and use the increasing coupling $\phi$ between $\mathbf n$ and the random-cluster model $\omega$ described before the proof. It is sufficient to prove that 
\begin{equation}\label{eq:h1}
\mathbb P_\Omega^\xi[\text{there exists $c_0\log r$ disjoint clusters going from $\calL$ to $\calR$}]\ge1-1/r^{c_0}.\end{equation}
Indeed, on this event, one may divide clusters in pairs, and observe that each pair of clusters contains a loop (with one path in one cluster and the other in the other one) of $\mathbf n>0$ connecting $\calL$ to $\calR$ with probability at least $1/2$ thanks to the fact that the odd part of $\mathbf n$ is obtained from $\omega$ by taking an even subgraph of $\omega$ uniformly at random. Therefore, the probability that there exists no path at all will be smaller than $1/r^{c_0}+2^{-(c_0/2)\log r}$.

To prove \eqref{eq:h1}, first shift the whole configuration by $(-Kj,0)$ in order to recenter everything around 0. 
On the one hand, crossing estimates imply that for $k$ such that $s=\sqrt r\le 2^k\le r^{3/4}$,
\begin{equation}\label{eq:ab1}
\mathbb P_\Omega^1[\Lambda_{2^kK}\longleftrightarrow \partial\Lambda_{2^{k+1}K}]\le 1-c
\end{equation}
for some constant $c$ independent of everything. On the other hand, if $\Omega_k$ denotes the intersection of $\Omega$ with the annulus $\Lambda_{2^{k+1}K}\setminus\Lambda_{2^kK}$, we want to prove that 
\begin{equation}\label{eq:ab2}
\mathbb P_{\Omega_k}^{0}[\mathcal L\longleftrightarrow \mathcal R]\ge c.
\end{equation}
 This will conclude the proof by observing that \eqref{eq:ab1} and \eqref{eq:ab2} together with the spatial Markov property and the comparison between boundary conditions easily imply  \eqref{eq:h1} by following a proof quite similar to the bridging lemma.

To prove \eqref{eq:ab2}, we use a second-moment method very similar to the proof of the anchoring lemma. Let $\mathbf N$ be the number of pairs $-\tfrac532^k\le a\le -\tfrac432^k$ and $\tfrac432^k\le b\le \tfrac532^{k+1}$ with $\Gamma_a$ connected to $\Gamma_b$ (recall the definition of these paths from the previous section, and remember that the whole configuration has been shifted by $(Kj,0)$). Note that by definition of a possible pair $(\Omega,\xi)$ (since $\xi$ belonged to $E'$), there are of order $c_0(2^k)^2$ pairs of $(a,b)$. Also, an easy comparison between boundary conditions and use of crossing estimates implies that 
\begin{align*}\label{}
\mathbb P_{\Omega_k}^{0}[\mathbf M] & \ge c_0(2^k)^2\min_{a,b}\mathbb P_{\Omega_k}^{0}[\Gamma_a\longleftrightarrow \Gamma_b]\ge c_1(2^k)^2a_{wired}^+(K,2^kK)^2.
\end{align*}
In the other direction, the comparison between boundary conditions and a standard use of the quasi-mulitiplicativity property gives that
\begin{align*}\label{}
\mathbb P_{\Omega_k}^{0}[\mathbf M^2] & \le \sum_{a,a',b,b'}\mathbb P_\mathbb H^0[\Gamma_a\longleftrightarrow\Gamma_b,\overline\Gamma_{a'}\longleftrightarrow\overline\Gamma_{b'}]\le C_0(2^k)^4a_{wired}^+(K,2^kK)^4.
\end{align*}
\end{proof}

\subsection{Proof of  Theorem~\ref{thm:Ising2}.}
$ $

The proof of the decoupling between the pages of $\Book_3$ stated in Theorem \ref{thm:Ising2} follows easily by combining the proof of Proposition \ref{prop:technical} together with the (sketch) of proof of the decoupling property from Theorem \ref{th.BookFactor}. Let us shortly explain why we have an error term $1-O_{m}((\log L)^{-c})$ in Theorem~\ref{thm:Ising2} versus $1-O_{m}(L^{-c})$ in Theorem \ref{th.BookFactor}. To prove Theorem \ref{thm:Ising2}, we rely on  the multiscale framework used throughout the paper. In particular, if all points $\{x_1,\ldots,x_m\}$ in $A \subset \Book_3$ are at a distance at least $L$ from $\Z$, consider $n$ such that 
\[
K_n \leq L < K_{n+1}\,.
\]
Recall (footnote below~\eqref{eq:13}) that $K_n = (n!)^3 C_1^n$. This implies in particular that $n \geq (\log L)^{1/2}$, when $L$ is large enough. Let us now proceed as in the proof of Theorem \ref{th.BookFactor} and let $u$ be the furthest point on the left of the joint line $\Z$ of an $\n_1$ cluster emanating from, say the first point $x_1\in A$ (other possible points being connected to $\Z$ via $\n_1$ are handled similarly  by union bound). We now consider the blocks $B_{K_{n-1}^i}$ at scales $n-1$ around the point $u\in \Z$. From our inductive proof, we know that each of these $(n-1)$-block is good with probability at least $1-u_{n-1} \geq 1 - \frac{1}{1000 C_{n-1}^2}$. (See the estimates on $u_n$ below~\eqref{eq:13}). This implies that with probability at least $1-\tfrac1{100 C_n}$, all $(n-1)$-blocks $B_{K_{n-1}}^i$ around the point $u$ and up to distance $K_n=C_n K_{n-1} \leq L$  are good. We can now use this overlapping chain of good blocks as in the proof of Lemma \ref{lem:estimate one arm} to produce a bridging with the random current $\n_1$ with probability at least $1-O((C_n)^{-c})$ which is the same as $1-O((\log L)^{-\tilde c})$ and thus concludes our proof. \qed

\begin{remark}\label{}
Note that by going further into smaller scales $K_{n-\ell} \ll K_n \leq L < K_{n+1}$ and by replacing the power-law control $\{u_n\}_{n\geq 1}$ below~\eqref{eq:13} by an exponentially decaying control in $n$, one may obtain if needed better correction terms in Theorem~\ref{thm:Ising2}.
\end{remark}

\bibliographystyle{alpha}

\end{document}